\documentclass[twocolumn,9pt]{article}
\usepackage[centertags]{amsmath}
\usepackage{amssymb}
\usepackage{url}
\usepackage{graphicx}
\graphicspath{{Figures/}}
\usepackage[table]{xcolor}
\usepackage{verbatim} 
\usepackage[export]{adjustbox}
\usepackage{float}
\usepackage{caption}
\usepackage{subcaption}
\usepackage{mathtools}
\usepackage{amsthm}
\usepackage{cancel}
\usepackage{hyperref}
\usepackage{pdfpages}
\usepackage{hyperref}
\usepackage{graphicx}
\usepackage{threeparttable}
\graphicspath{{Figures/}}
\usepackage[export]{adjustbox}
\usepackage{float}
\usepackage{caption}
\usepackage{subcaption}
\usepackage{mathtools}
\usepackage{amsthm}
\usepackage{cancel}
\usepackage{hyperref}
\usepackage{pdfpages}
\usepackage{hyperref}
\usepackage{bm}

\def\lef[{\left[\begin{array}}
\def\rig]{\end{array}\right]}
\def\qed{\hfill$\Box \Box \Box$}

\def\rea{\mathbb{R}}

\usepackage{psfrag}

\newcommand{\col}{ \mbox{col} }
\newcommand{\rank}{ \mbox{rank } }

\def\rea{\mathbb{R}}

\def\begequ{\begin{equation}}
\def\endequ{\end{equation}}
\def\lab{\label}
\def\begite{\begin{itemize}}
\def\endite{\end{itemize}}
\def\begarr{\begin{array}}
\def\endarr{\end{array}}
\def\begequarr{\begin{eqnarray}}
\def\endequarr{\end{eqnarray}}
\def\diag{\mbox{diag}}
\usepackage{multicol}

\def\caly{{\cal Y}}

\def\L2{{\cal L}_2}
\def\L2e{{\cal L}_{2e}}

\def\rea{\mathbb{R}}

\def\adj{\mbox{adj}}

\def\diag{\mbox{diag}}

\def\diag{\mbox{diag}}

\def\adj{\mbox{adj}}
\def\col{\mbox{col}}

\def\diag{\mbox{diag}}
\def\rank{\mbox{rank}\;}

\def\begmat#1{\begin{bmatrix}#1\end{bmatrix}}

\def\begsubequ{\begin{subequations}}
\def\endsubequ{\end{subequations}}
\def\begequarr{\begin{eqnarray}}
\def\endequarr{\end{eqnarray}}
\def\begequarrs{\begin{eqnarray*}}
\def\endequarrs{\end{eqnarray*}}
\def\begarr{\begin{array}}
\def\endarr{\end{array}}
\def\begequ{\begin{equation}}
\def\endequ{\end{equation}}
\def\lab{\label}
\def\begdes{\begin{description}}
\def\enddes{\end{description}}
\def\begenu{\begin{enumerate}}
\def\begite{\begin{itemize}}
\def\endite{\end{itemize}}
\def\endenu{\end{enumerate}}

\def\lef[{\left[\begin{array}}
\def\rig]{\end{array}\right]}
\def\qed{\hfill$\Box \Box \Box$}
\def\begcen{\begin{center}}
\def\endcen{\end{center}}
\def\begrem{\begin{remark}\rm}
\def\endrem{\end{remark}}
\def\begassums{\begin{assumption*}}
\def\endassums{\end{assumption*}}
\def\begassu{\begin{ass}}
\def\endassu{\end{ass}}
\def\beglem{\begin{lemma}}
\def\endlem{\end{lemma}}
\def\begcor{\begin{corollary}}
\def\endcor{\end{corollary}}
\def\begfac{\begin{fact}}
\def\endfac{\end{fact}}
\def\begass{\begin{assumption}}
\def\endass{\end{assumption}}

\def\begmat#1{\begin{bmatrix}#1\end{bmatrix}}

\def\begenu{\begin{enumerate}}
	\def\begite{\begin{itemize}}
		\def\endite{\end{itemize}}
	\def\endenu{\end{enumerate}}

\usepackage[noadjust]{cite}
\usepackage{lipsum}
\usepackage{color}
\usepackage{mathtools, cuted}
\usepackage{lipsum, color}

\graphicspath{{Images}}
\hypersetup{
    colorlinks=true,
    linkcolor=blue,
    filecolor=blue,
    urlcolor=blue,
    citecolor=blue,
    }

\newtheorem{remark}{Remark}
\newtheorem{assumption}{Assumption}
\newtheorem{property}{Property}
\newtheorem{defin}{Definition}
\newtheorem{proposition}{Proposition}
\newtheorem{lemma}{Lemma}
\newtheorem{thm}{Theorem}

\title{Three High Performance Global Tracking Composite Adaptive Controllers for Fully Actuated Euler-Lagrange Systems: Experimental Validation} 

\author{Luis Cervantes-Pérez\textsuperscript{1}, Jose Guadalupe Romero*\textsuperscript{2},\\
 Romeo Ortega\textsuperscript{2}, Víctor Santibáñez\textsuperscript{1} and Jesús Sandoval\textsuperscript{3},\\
  \small 1 Tecnológico Nacional de México / I. T. de La Laguna,\\
  \small Torreón, Coahuila, México \\
 \small 2 Departamento de Ingeniería Eléctrica y Electrónica / \\
 \small Instituto Tecnológico Autónomo de México, CDMX, Ciudad de México, México \\
  \small 3 Tecnológico Nacional de México / I. T. de La Paz,\\
 \small La Paz, Baja California Sur, México \\
 }
\usepackage{titlesec}
\date{}

\titleformat{\paragraph}
{\normalfont\normalsize\bfseries}{\theparagraph}{1em}{}
\titlespacing*{\paragraph}
{0pt}{3.25ex plus 1ex minus .2ex}{1.5ex plus .2ex}

\begin{document}

\twocolumn[
  \begin{@twocolumnfalse}
    \maketitle
    \begin{abstract}
Three adaptive global tracking controllers for fully actuated Euler–Lagrange systems, with verifiable {\em performance improvement} over existing designs, are reported in this letter. Two of these controllers ensure global {\em exponential} convergence under a {\em weak interval excitation} condition. Besides, one of the proposed controllers features a simple adaptive PID-like structure that—unlike classical solutions—avoids the need for additional filtering. We adopt a composite adaptation architecture, invoke a novel parameterization of the system dynamics and use a high performance estimation scheme recently introduced in the literature. Real-time experiments and a comparative study with a learning-based adaptive controller on a two–degrees-of-freedom manipulator arm illustrate the effectiveness of the proposed controllers.
    \end{abstract}
  \end{@twocolumnfalse}
]

\section{Introduction}\lab{sec1}
The problem of designing adaptive global tracking controllers for fully actuated Euler-Lagrange (EL) systems has been investigated for a long time, with numerous solutions being reported, for example in:  \cite{ORTetalbook,SLOLItac,SLOLIaut,SPOHUTVIDbook}. To prove the good performance of these controllers---{\em e.g.}, parameter convergence or exponential stability---it is necessary to assume strong excitation conditions, for instance, {\em persistency of excitation} (PE). It is well-known that PE is a very strong requirement that, for instance, is {\em not satisfied} in regulation tasks. An experimental evaluation of several parameter estimation methods for a robot manipulator was presented in \cite{BERGHUIS}, where the advantages and disadvantages of the {\em classical methods} are highlighted, with excitation being the major obstacle.
After \cite{SLOLItac}, a large number of adaptive trajectory tracking controllers have been published up to this day. For example \cite{CHELIUSLO,YANGetal,BRAGetal,DIXetal}. Notable advances in addressing this limitation have been achieved using {\em learning-based}  and neural networks  techniques  \cite{ZHOUeta, DUCetal, JIN, GUO2020}; however, the computational cost increases significantly, rendering high-computation-demand adaptive controllers difficult to implement.
\par A survey on composite adaptation and learning techniques to improve parameter convergence in adaptive control for robot control applications is presented in  \cite{GUOPAN}. Unfortunately, the majority of works mentioned previously have a complicated parameterization of the linear regression equation (LRE), and parameter convergence is ensured by imposing the unfeasible persistent excitation (PE) condition.  In  this paper we propose some new controllers that overcome this limitation.
\par Although high-quality tracking is indeed the primary objective in robotics, estimating model parameters remains valuable because the parameters encode the system’s physical characteristics. As discussed in \cite{raol}, parameter estimation from input--output data is an important step in the analysis of dynamic systems, since the parameters often describe the stability and control behaviour of the system and help establish the adequacy of the mathematical model. Knowing the system parameters can also be important to compute manipulability indices \cite{yoshikawa1985,yoshikawa1990} and posture indices \cite{pamanes2018}, since these performance measures typically require full knowledge of the system parameters.

\par We present in the paper three new controllers with verifiable {\em performance improvement} assuming only the {\em weak interval excitation} (IE) condition  \cite{KRERIE} that, as is well-known is strictly weaker than PE and satisfied for all standard control tasks---including regulation. Under the IE assumption, we establish global {\em exponential} convergence of the position and velocity tracking errors, as well as of the estimated uncertain parameters, for two of these controllers; whereas for the third one, we prove that the velocity and parameter estimation errors converge asymptotically to zero, while the tracking error remains bounded.

The following key modifications are instrumental for the success of our designs.

\begin{enumerate}
\item[$\mathbf{M1}$] The use of a {\em composite adaptation architecture} \cite{DUANAR,PANORTMOY,SLOLIaut}---that is known to ensure robustness properties which are stronger than the one of the classical direct or indirect schemes
\item[$\mathbf{M2}$] We invoke a novel parameterization of the system dynamics \cite{ROMORTBOB}, which has shown to be more effective than the classical one \cite{KHOKAN}.
\item[$\mathbf{M3}$] We use a the high performance least squares plus dynamic regression extension and mixing (LS+DR\-EM) estimation scheme recently reported in \cite{ORTROMARA}, which ensures exponential convergence of the estimates, imposing only the extremely weak IE assumption. Moreover, it has shown to be robust in the presence of noise---an essential feature in the application at hand.
\end{enumerate}
In comparison with existing literature the main contributions are:
\begin{enumerate}
\item[$\mathbf{C1}$] Compared with our previous result \cite{ROMORT}, this note presents an additional adaptive PID-like controller based on \emph{measured} tracking errors, which ensures exponential convergence. Regarding one of the previous solutions reported new stability conclusions are derivated using Matrosov's theorem.
\item[$\mathbf{C2}$] We propose three composite adaptive trajectory-track\-ing controllers that achieve the tracking control objective and guarantee convergence of the parameter estimates to their true values without imposing the persistent excitation condition.
\item[$\mathbf{C3}$] Compared with composite adaptive controllers that do not impose persistent excitation, and in particular with learning-based methods such as \cite{guo2018composite2}, the proposed schemes have verifiably lower computational cost.
\item[$\mathbf{C4}$] Compared with the classical PD$+$adaptive controller of \cite{SLOLItac}, two of the proposed controllers incorporate a novel parameterization that enhances robustness in the presence of measurement noise.
\end{enumerate}

The remainder of the paper is organized as follows. Section \ref{sec2} presents the system dynamics and the problem  formulation. In Section \ref{sec3} we give some preliminary material used in the sequel.  The main results of the paper, namely three new adaptive global tracking controllers, are presented in Section \ref{sec4}. The experimental implementation is detailed in Section \ref{secc:ExperimentalEvaluation}. Finally, the conclusions are summarized in Section \ref{secc:Conclusions}.
\\
{\bf Notation.} Throughout the document, real numbers are denoted by lowercase Latin or occasionally Greek italic letters. Vectors are denoted by \textbf{bold} lowercase Latin or Greek letters, and matrices are denoted by uppercase Latin letters or occasionally uppercase Greek letters. The notation $\lambda_{\min} \{A\}$ and $\lambda_{\max} \{A\}$ will be used to denote, respectively, the smallest and largest eigenvalue of a positive definite symmetric bounded matrix $A(\bm{x})$ for each $\bm{x}\in\mathbb{R}^n$ (denoted by $A>0$). The notation $A\geq0$ means that the matrix $A$ is positive semi-definite. The Euclidean norm for a vector $\bm{x}$ is defined as $\|\bm{x}\|=(\bm{x}^T\bm{x})^{1/2}$, and for a matrix $A$ as the induced norm: $\|A\| = (\lambda_{\max}\{A^T A\})^{1/2}$. In addition, $(\cdot)_{n\times n}$ denotes a matrix of dimensions $n\times n$, with $I_{n\times n}$ as the identity matrix and $0_{n\times n} $ as an array of zeros. ${\mathbb{R}}_+$ and ${\mathbb{Z}}_+$ denote the positive real and integer numbers, respectively. {\color{black}The expression} $\bm{0}_{n}\in\mathbb{R}^n$ {\color{black}denotes} a vector of zeros of dimension $n\times 1$, and $\mathrm{det}[A]$ {\color{black}represents} the determinant of a square matrix $A$. The absolute value of a real number $x$ is denoted by the standard symbol $|x|$. The gradient transpose operator is defined as $\nabla_{\bm{q}} (\cdot):=({\partial (\cdot) \over \partial \bm{q}})^\top$. The notation $\mathrm{diag}\{a_1, a_2, \ldots, a_n\}$ denotes a diagonal matrix of size $n\times n$ with diagonal elements $a_1, a_2, \ldots, a_n$. Finally, $\mathcal{L}_2$ and $\mathcal{L}_\infty$ denote the spaces of square-integrable functions and bounded signals, respectively.
\section{System Dynamics and Control Problem Formulation}
\lab{sec2}
\subsection{System dynamics}
\lab{subsec21}
In this paper we consider an $n_q$-DoF fully actuated EL systems with generalized coordinates $\bm{q}\in \rea^{n_q}$ and control vector $\boldsymbol\tau \in \rea^{n_q}$, whose dynamics is described by the EL equations of motion
\begequ
\lab{elsys}
{ \frac{d}{dt} \left[\nabla_{\dot{\bm{q}}}{ L}(\bm{q},\dot{\bm{q}})\right] - \nabla_{\bm{q}} { L}(\bm{q},\dot{\bm{q}})= \boldsymbol\tau-\nabla{F_R}(\dot{\bm{q}}),}
\endequ
where ${ L}:\rea^{n_q} \times \rea^{n_q} \to \rea$ is the Lagrangian function
$$
{ L}(\bm{q},\dot{\bm{q}}) :=\frac{1}{2}\dot{\bm{q}}^{\top}M(\bm{q})\dot{\bm{q}}  - {U}(\bm{q}),
$$
with $M:\rea^{n_q} \to \rea^{n_q \times n_q}$, $M(\bm{q}) > 0,$ the generalized inertia matrix and  ${U}:\rea^{n_q} \to \rea$ the potential energy function, and ${F_R}(\dot{\bm{q}}):\mathbb{R}^{n_q}\to \mathbb{R}$ is a dissipative energy function, which is assumed to be linear with respect to a set of constant dynamic parameters $\boldsymbol\theta$. See \cite{ORTetalbook} for additional details on this model and many practical examples, and \cite{SPOHUTVIDbook} for a detailed description of robot manipulators. For a robot manipulator with only viscous friction at its joints, the function $F_R$ can be modeled as the Rayleigh dissipation function given by
\begin{equation}\label{eqn:F}
{F}_R=\frac{1}{2}\dot{\bm{q}}^T\mathcal{R}\dot{\bm{q}}=\frac{1}{2}\sum_{i=1}^{n_q}\dot{q}_i^2f_{v_i},
\end{equation}
where $\mathcal{R}=\mathrm{diag}\{f_{v_1},\ldots,f_{v_{n_q}}\}$ is a diagonal positive definite matrix, with $f_{v_i}>0$ denoting the viscous friction parameter of the $i$th joint, for $i\in\{1,\ldots,n_q\}$.In this work, we consider systems for which $F_R$ is defined as in \eqref{eqn:F}. Under these considerations, the dynamics of the EL system \eqref{elsys} can be written as
$$\lab{dynmodel}
{d \over dt }\left[  M(\bm{q}) \dot{\bm{q}} \right] - \frac{1}{2} \nabla_{\bm{q}} \left[  \dot{\bm{q}}^{\top}M(\bm{q})\dot{\bm{q}}\right]   + \nabla {U}(\bm{q})+\mathcal{R}\dot{\bm{q}} = \boldsymbol\tau,
$$
with more explicit and well-known form
\begin{equation}
\label{robdyn}
M(\bm{q}) \ddot{\bm{q}}+ C(\bm{q}, \dot{\bm{q}})\dot{\bm{q}} + \nabla{{U}}(\bm{q})+\mathcal{R}\dot{\bm{q}} = \boldsymbol\tau,
\end{equation}
where $C: \rea^{n_q} \times \rea^{n_q} \to \rea^{n_q \times n_q} $ represents the Coriolis and centrifugal forces matrix. In this study, it is assumed that $\bm{q}$ and $\dot{\bm{q}}$ are measurable. The following two assumptions related to the inertia matrix and the potential energy function are considered:
\begin{assumption}\label{assum:M}
The system inertia matrix $M(\bm{q})$ is symmetric, positive definite, twice continuously differentiable, and its maximum and minimum singular values are bounded and bounded away from zero, respectively.
\end{assumption}
\begin{assumption}\label{assum:u}
The system potential energy function ${U}(\bm{q})$ is continuously differentiable.
\end{assumption}
Moreover, the following properties of the system \eqref{robdyn} with only revolute joints— which satisfies Assumptions \ref{assum:M}–\ref{assum:u}—are assumed to be available \cite{ControlOfRobots}:
\begin{property}\label{pp:1}
If the matrix $C(\bm{q},\dot{\bm{q}})$ is defined via the Christoffel symbols of the first kind, the key skew-symmetry property
\begequ
\lab{skesym}
\bm{z}^\top[\dot M(\bm{q})-2C(\bm{q},\dot{\bm{q}})]\bm{z}=0,\;\forall \bm{z} \in \rea^{n_q},
\endequ
holds globally. And it is also true that
\begin{equation}
\dot{M}(\bm{q})=C(\bm{q},\dot{\bm{q}})+C(\bm{q},\dot{\bm{q}})^T.
\end{equation}
\end{property}
\begin{property}\label{pp:2}
A positive constant $k_C$ exist such that
\begin{equation}
\|C(\bm{q},\dot{\bm{q}})\|\leq k_C \|\dot{\bm{q}}\|, \hspace{0.5cm} \forall \bm{q},\dot{\bm{q}}\in\mathbb{R}^{n_q}.
\end{equation}
\end{property}
\begin{property}\label{pp:4}
The vector $\bm{g}(\bm{q})$ is continuous and bounded for every bounded $\bm{q}$.
\end{property}
\begin{defin}
From \eqref{robdyn}, the following vectorial function is defined:
\begin{equation}\label{fun:f}
\begin{split}
\bm{f}(\bm{q},\dot{\bm{q}},\bm{v},\dot{\bm{v}}):=&M(\bm{q},\boldsymbol\theta)\dot{\bm{v}}+C(\bm{q},\dot{\bm{q}},\boldsymbol\theta)\bm{v}+\bm{g}(\bm{q},\boldsymbol\theta)\\
&+\mathcal{R}(\boldsymbol\theta)\dot{\bm{q}},
\end{split}
\end{equation}
where $\boldsymbol\theta\in\mathbb{R}^{w}$ is the vector of unknown constant parameters.
\end{defin}
\begin{property}\label{prop1}
\cite{ControlOfRobots} The function \eqref{fun:f} can be linearly parameterized as:
\begin{equation}\label{eqn:phi2}
\bm{f}(\bm{q},\dot{\bm{q}},\bm{v},\dot{\bm{v}})=Y(\bm{q},\dot{\bm{q}},\bm{v},\bm{\dot{v}})\boldsymbol\theta,
\end{equation}
where $Y:\mathbb{R}^{n_q}\times \mathbb{R}^{n_q} \times \mathbb{R}^{n_q} \times \mathbb{R}^{n_q} \to \mathbb{R}^{n_q\times w}$.
\end{property}

\subsection{Problem formulation}
\lab{subsec22}
The control problem is to design global tracking adaptive controllers for the EL system \eqref{robdyn} whose transient behavior and robustness  {\em outperforms} the existing solutions to the problem, in particular the classical adaptive controller \cite{SLOLItac} and its robust version \cite{BERORTNIJ}.

To achieve this goal we appeal to the following recent results.
\\
{\bf R1.} We derive the linear regression equation (LRE) instrumental to estimate the system parameters with the new parametrization of EL systems obtained using the {\em power-balance} equation \cite {ROMORTBOB}. It has been shown  that---due to its lower dimension and considerably simpler analytic expressions---this parameterization is more effective than the standard parameterization first reported by \cite{KHOKAN} and used in all adaptive controllers \cite{BERORTNIJ,SLOLItac}.
\\

{\bf R2.} Use the first stage of the LS+DREM procedure recently reported in \cite{ORTROMARA}, to generate, from the previous power-balance LRE, a new set of {\em scalar} LRE. As shown in \cite{ORTROMARA}, the new scalar regressor has the distinguishing feature of satisfying the key {\em persistent excitation} (PE) condition imposing only the {\em weakest} assumption of identifiability of the power-balance LRE \cite{GOOSINbook}. As is well-known, PE of the regressor ensures global {\em exponential} convergence of the parameter estimates that, in its turn, guarantees strong robustness properties for the estimator \cite{SASBODbook}.
\\

Equipped with these two elements we propose three new {\em composite} adaptive controllers, which achieve the desired global tracking objective. Composite adaptive controllers, that combine direct and indirect components, have been reported for linear systems \cite{DUANAR}, for robotic applications \cite{yongpinpan,SLOLIaut} and for a class of nonlinear systems \cite{PANORTMOY}. In all these cases,  been effectively exploited to {\em enhance the robustness} of the adaptive controllers.

%
\section{Background Material}
\lab{sec3}
%
In this section we first present the LRE for general EL systems obtained using the power balance equation \cite {ROMORTBOB}. Then, we use the LS+DREM procedure \cite{ORTROMARA}, to generate a new {\em scalar} LRE---which turns out to have a PE regressor.
\subsection{Derivation of the power-balance linear regression equation}
\lab{subsec31}
Following \cite{ROMORTBOB,Cervantes2026} to generate the new LRE we introduce the following parameterization of the inertia matrix $M(\bm{q})$, the potential energy function ${U}(\bm{q})$, and the dissipation energy function ${F_R}(\dot{\bm{q}})$:
\begin{equation}
\label{parmu}
\begin{split}
    M(\bm{q}) = \sum_{i=1}^{\ell} M_i(\bm{q})\theta_{M_i}, &\hspace{0.5cm}
    {U}(\bm{q}) = \sum_{j=1}^{r} {U}_j(\bm{q})\theta_{U_j},\\
    \dot{\bm{q}}^T\mathcal{R}\dot{\bm{q}}&=\sum_{k=1}^{n_q} \dot{q}_k\theta_{{F}_{R_k}},\\
    \end{split}
\end{equation}
 with matrices $M_i: \mathbb{R}^{n} \to \mathbb{R}^{n \times n}$, functions ${U}_j: \mathbb{R}^{n} \to \mathbb{R}$, where $w:=\ell+r+n_q$ is the total number of {\em unknown} constant parameters of the system $\theta_{M_i}, \theta_{U_j}$ and $\theta_{\mathcal{F}_{R_k}}$, that we group together in a single vector as
\begin{equation}\label{thesthe}
\begin{split}
\boldsymbol\theta:=&
\col(\theta_{M_1}, \cdots, \theta_{M_\ell}, \theta_{U_1},  \cdots, \theta_{U_r},\theta_{F_{R_1}},\ldots,\theta_{F_{R_{n_q}}}) \\
&\in \rea^w.
\end{split}
\end{equation}
We are in position to present the following result, whose proof may be found in \cite{ROMORTBOB}.
\begin{proposition}\label{pro1}
System \eqref{elsys} satisfies the \textit{LRE}:
\begin{equation}\label{newlreel}
y = \boldsymbol\Omega^\top\boldsymbol\theta,
\end{equation}
with $\boldsymbol\theta$ defined in \eqref{thesthe}, where
\begin{equation}\label{eqn:ydot}
    \dot{y} = -\lambda y + \dot{\bm{q}}^\top\boldsymbol\tau,
\end{equation}
with $\lambda>0$ a design parameter, and $\boldsymbol\Omega$ a single-column regressor given by:
\begin{equation}\label{eqn:12}
\dot{\boldsymbol\Omega}=-\lambda\boldsymbol\Omega+
\begin{bmatrix}
p\frac{1}{2}\dot{\bm{q}}^TM_1(\bm{q})\dot{\bm{q}}\\
\vdots\\
p\frac{1}{2}\dot{\bm{q}}^TM_l(\bm{q})\dot{\bm{q}}\\
p{U}_1(\bm{q})\\
\vdots\\
p{U}_r(\bm{q})\\
p\dot{q}_1\\
\vdots\\
p\dot{q}_{n_q}
\end{bmatrix}\in\mathbb{R}^{w},
\end{equation}
where $p:=\frac{d}{dt}$.
\end{proposition}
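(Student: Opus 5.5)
The plan is to derive \eqref{newlreel} from the \emph{power-balance} (energy) equation of \eqref{elsys} and then apply a stable first-order filter that removes the time derivative, so that no acceleration measurement is needed. First, define the total energy $H(\bm{q},\dot{\bm{q}}):=\frac{1}{2}\dot{\bm{q}}^\top M(\bm{q})\dot{\bm{q}}+U(\bm{q})$. I will differentiate $H$ along solutions of \eqref{robdyn}:
$$
\dot H=\dot{\bm{q}}^\top M\ddot{\bm{q}}+\tfrac12\dot{\bm{q}}^\top\dot M\dot{\bm{q}}+\dot{\bm{q}}^\top\nabla U .
$$
Next I substitute $M\ddot{\bm{q}}=\boldsymbol\tau-C\dot{\bm{q}}-\nabla U-\mathcal{R}\dot{\bm{q}}$. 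The Coriolis terms then combine into $\frac12\dot{\bm{q}}^\top(\dot M-2C)\dot{\bm{q}}$, which vanishes by the skew-symmetry of Property~\ref{pp:1}, eq.~\eqref{skesym}. This yields the power balance
$$
\dot{\bm{q}}^\top\boldsymbol\tau=\dot H+\dot{\bm{q}}^\top\mathcal{R}\dot{\bm{q}} .
$$

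Second, I will insert the parameterization \eqref{parmu}. The energy becomes $H=\sum_i \frac12\dot{\bm{q}}^\top M_i\dot{\bm{q}}\,\theta_{M_i}+\sum_j U_j\theta_{U_j}$, so $\dot H$ is linear in $\theta_{M_i},\theta_{U_j}$ with coefficients $p[\frac12\dot{\bm{q}}^\top M_i\dot{\bm{q}}]$ and $pU_j$. For the dissipation term I follow the convention of \eqref{parmu}, which writes the dissipated power as a sum linear in $\theta_{F_{R_k}}$. I will express it, as in \cite{ROMORTBOB,Cervantes2026}, as $p$ acting on the corresponding signals, giving the last $n_q$ entries $p\dot q_k$ of \eqref{eqn:12}. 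Collecting terms gives the differentiated regression
$$
\dot{\bm{q}}^\top\boldsymbol\tau=\boldsymbol\phi^\top\boldsymbol\theta,
$$
where $\boldsymbol\phi$ is the column vector appearing in \eqref{eqn:12}.

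Third, I apply the filter $\frac{1}{p+\lambda}$ to both sides. Since $\boldsymbol\theta$ is constant, filtering commutes with multiplication by $\boldsymbol\theta$. The filtered left side is exactly $y$ in \eqref{eqn:ydot}, and the filtered regressor is exactly $\boldsymbol\Omega$ in \eqref{eqn:12}. The resulting identity is $y=\boldsymbol\Omega^\top\boldsymbol\theta+\epsilon_t$, where $\epsilon_t$ satisfies $\dot\epsilon_t=-\lambda\epsilon_t$ and hence decays exponentially. It can be made identically zero by matching initial conditions, for instance with $\dot{\bm{q}}(0)$ known and $y(0)=\boldsymbol\Omega(0)^\top\boldsymbol\theta$ obtained from zero-state filtering. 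Otherwise it is neglected, as is standard.

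I will also note implementability: each component $\frac{p}{p+\lambda}[x]$ equals $x-\lambda\frac{1}{p+\lambda}[x]$, so $\boldsymbol\Omega$ requires only $\bm{q}$ and $\dot{\bm{q}}$. The main obstacle is the dissipation block. With $\mathcal{R}$ as in \eqref{eqn:F}, the dissipated power is $\sum_k f_{v_k}\dot q_k^2$, which is not naturally a time derivative. I must therefore check carefully that the stated entries $p\dot q_k$ are consistent with the redefinition of $\theta_{F_{R_k}}$ in \eqref{parmu}. If they are not, the honest fix is to replace those entries by $\frac{1}{p+\lambda}$ applied to the dissipation regressors without the $p$, which leaves the structure of the argument unchanged.
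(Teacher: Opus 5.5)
Your energy part is correct and is the standard power-balance derivation the paper relies on; the paper gives no proof of its own beyond the pointer to \cite{ROMORTBOB}. Skew-symmetry yields $\dot{\bm q}^\top\boldsymbol\tau=\dot H+\dot{\bm q}^\top\mathcal R\dot{\bm q}$, $H$ is linear in $\theta_{M_i},\theta_{U_j}$, and filtering by $1/(p+\lambda)$ produces the first $\ell+r$ entries of \eqref{eqn:12}. The gap is the dissipation block. You correctly flag it as the main obstacle, but you leave it as a hedge, and it resolves against the statement. Your Step~2 claim that the dissipated power can be written as $p$ acting on signals, giving entries $p\dot q_k$ with constant $\theta_{F_{R_k}}$, is false. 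It would require $\sum_k f_{v_k}\dot q_k^2=\sum_k\theta_{F_{R_k}}\ddot q_k$ along all motions. Full actuation lets you impose $\ddot{\bm q}(t)=0$ at a state with $\dot{\bm q}(t)\neq 0$, where the left side is positive and the right side vanishes. Reading \eqref{parmu} literally does not rescue it either. That reading forces $\theta_{F_{R_k}}=f_{v_k}\dot q_k$, which is not constant, and even if it were treated as constant, filtering would give $\frac{1}{p+\lambda}[\dot q_k]$, not $\frac{p}{p+\lambda}[\dot q_k]$. So no appeal to \cite{ROMORTBOB,Cervantes2026} can justify that step, because the identity it needs is false.

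Your fallback is therefore necessary, not optional. Take $\theta_{F_{R_k}}:=f_{v_k}$, which is consistent with \eqref{eqn:F} and with the experimental $F_R$. Then the last $n_q$ entries of the forcing vector in \eqref{eqn:12} must be $\dot q_k^2$ with no $p$, i.e.\ $\Omega_{F_k}=\frac{1}{p+\lambda}[\dot q_k^2]$. With this correction your three steps give a complete proof, but of a corrected statement. The paper's own \eqref{eqn:omegap} carries the same spurious $p$ in $pH(p)[\dot q_k^2]$.

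A smaller point on initialization: zero-state filtering is the wrong choice for the $\frac{p}{p+\lambda}$ entries. Realizing $\Omega_i=x_i-\lambda\xi_i$, $\dot\xi_i=-\lambda\xi_i+x_i$ with $\xi_i(0)=0$ gives $\Omega_i(0)=x_i(0)$, hence $\epsilon_t(0)=-H(\bm q(0),\dot{\bm q}(0))$, which is generally nonzero. Instead take $y(0)=0$ and $\boldsymbol\Omega(0)=0$, i.e.\ $\xi_i(0)=x_i(0)/\lambda$, which is implementable from the measured $\bm q(0),\dot{\bm q}(0)$. Then $\epsilon_t:=y-\boldsymbol\Omega^\top\boldsymbol\theta$ obeys $\dot\epsilon_t=-\lambda\epsilon_t$ with $\epsilon_t(0)=0$, so the LRE holds exactly rather than up to a decaying term.
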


\subsection{Derivation of the scalar linear regression equations}
\lab{subsec32}
%
In this subsection we apply the first stage of the LS+DR\-EM procedure  \cite{ORTROMARA} to generate, from the power-balance equation parametrization (PBEP) given in \eqref{newlreel}, a new set of {\em scalar} LRE. To be able to  estimate the parameters of the LRE \eqref{newlreel} we impose the {\em necessary} assumption that it is {\em identifiable}  \cite{GOOSINbook}. That is, that there exists a set of time instants---$\{t_i\},\;t_{i+1}>t_i, i \in \bar w$, such that
$$
\rank\Big\{\begmat{\boldsymbol\Omega(t_1)|\boldsymbol\Omega(t_2)|&\cdots&|\boldsymbol\Omega(t_w)}\Big\}=w.
$$

We recall the following result of \cite{WANORTBOB}.

\begin{lemma}
\lab{lem1}
The LRE \eqref{newlreel} is identifiable {\em if and only if} the regressor vector $\boldsymbol\Omega(t)$ is interval exciting (IE)  \cite{KRERIE}. That is,  {there exist} constants $c_c>0$ and $t_c>0$ such that
$$
\int_0^{t_c} \boldsymbol\Omega(s) \boldsymbol\Omega^\top(s)  ds \ge c_c I_w.
$$
\qed
\end{lemma}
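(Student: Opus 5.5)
We prove the two implications separately, working throughout with the Gram matrix $G(t):=\int_0^{t}\boldsymbol\Omega(s)\boldsymbol\Omega^\top(s)\,ds$. This matrix is symmetric, positive semidefinite and nondecreasing in $t$. For a unit vector $\bm{v}\in\rea^w$ it satisfies $\bm{v}^\top G(t)\bm{v}=\int_0^t(\bm{v}^\top\boldsymbol\Omega(s))^2ds$. The regularity we need is that $\boldsymbol\Omega$ is continuous. This follows from \eqref{eqn:12}: $\boldsymbol\Omega$ is the output of the stable filter $p/(p+\lambda)$ driven by continuous signals of $\bm{q},\dot{\bm{q}}$, so it equals the input minus a filtered term, which is continuous. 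We take the identifiability instants to satisfy $t_w\le t_c$ for some finite $t_c$, which is implicit in the definition.

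\emph{Identifiable $\Rightarrow$ IE.} Suppose instead that $G(t_c)$ is singular for every $t_c>0$, where we choose $t_c>t_w$. Then there is a unit vector $\bm{v}$ with $\int_0^{t_c}(\bm{v}^\top\boldsymbol\Omega(s))^2ds=0$. Since the integrand is continuous and nonnegative, $\bm{v}^\top\boldsymbol\Omega(s)=0$ for all $s\in[0,t_c]$, and in particular at $t_1,\dots,t_w$. Hence $\bm{v}^\top[\boldsymbol\Omega(t_1)|\cdots|\boldsymbol\Omega(t_w)]=0$, which contradicts the rank condition. So $G(t_c)>0$, and we can take $c_c:=\lambda_{\min}\{G(t_c)\}>0$.

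\emph{IE $\Rightarrow$ identifiable.} Assume $G(t_c)\ge c_cI_w$ and let $\mathcal{S}:=\mathrm{span}\{\boldsymbol\Omega(s):s\in[0,t_c]\}$. If $\mathcal{S}\neq\rea^w$, pick a unit vector $\bm{v}\perp\mathcal{S}$. Then $\bm{v}^\top G(t_c)\bm{v}=0<c_c$, a contradiction, so $\mathcal{S}=\rea^w$. It remains to extract $w$ linearly independent vectors $\boldsymbol\Omega(t_1),\dots,\boldsymbol\Omega(t_w)$ with strictly increasing times. We build them greedily, always picking the next time to the right of the previous one.

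\textbf{Main obstacle: the ordering.} The greedy step needs that, after discarding an arbitrarily short initial interval, the span is still full. Concretely, we must show that for each $\epsilon>0$ the vectors $\boldsymbol\Omega(s)$, $s\in[\epsilon,t_c]$, still span $\rea^w$. By continuity, $G(t_c)-G(\epsilon)\to G(t_c)\ge c_cI_w$ as $\epsilon\to0$, so for small $\epsilon$ the truncated Gram matrix is positive definite and the truncated span is full. Equivalently, finitely many time instants suffice, and distinct instants can always be reordered increasingly. Since rank does not depend on the order of the columns, relabelling the chosen instants in increasing order gives $t_{i+1}>t_i$, which closes the argument. The genuinely delicate point is therefore only the continuity and finiteness of $\boldsymbol\Omega$ on $[0,t_c]$, which we get from the filter structure in Proposition \ref{pro1}.
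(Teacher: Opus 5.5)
Your proof is correct. The paper gives no argument for Lemma~\ref{lem1}; it simply recalls the result from \cite{WANORTBOB}. Your Gram-matrix proof is therefore a self-contained replacement, not a reconstruction of anything in the text.

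What your route buys is that it shows exactly where regularity of the regressor enters.
\begin{itemize}
\item The direction ``identifiable $\Rightarrow$ IE'' needs continuity of $\boldsymbol\Omega$, to pass from $\int_0^{t_c}(\bm{v}^\top\boldsymbol\Omega)^2ds=0$ to $\bm{v}^\top\boldsymbol\Omega(t_i)=0$. It genuinely fails without continuity: a scalar regressor that is nonzero only at $t=1$ is identifiable but not IE.
\item The converse needs only local square integrability.
\end{itemize}
Your derivation of continuity from the filter structure in \eqref{eqn:12} is precisely what the citation leaves implicit. Your argument also shows that any $t_c>t_w$ is an admissible IE window, and that the identifying instants can be taken inside $[0,t_c]$.

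Two clean-ups are needed.

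First, the opening of the first direction should read ``fix $t_c>t_w$ and suppose $G(t_c)$ is singular.'' The phrase ``singular for every $t_c>0$'' garbles the quantifier, although the contradiction you then derive is the right one.

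Second, delete the ``main obstacle'' paragraph.
\begin{itemize}
\item The greedy construction is unnecessary.
\item The claim that $\{\boldsymbol\Omega(s):s\in[\epsilon,t_c]\}$ spans $\rea^w$ for \emph{each} $\epsilon>0$ is false in general; take $\epsilon$ close to $t_c$.
\item Your continuity argument covers only small $\epsilon$.
\item A left-to-right greedy choice could still get stuck after a badly placed first instant.
\end{itemize}
What actually closes the proof is the remark you end with. Since $\mathcal{S}=\rea^w$, it contains $w$ linearly independent vectors $\boldsymbol\Omega(s_1),\dots,\boldsymbol\Omega(s_w)$. These are pairwise distinct, so the $s_i$ are distinct and can be sorted increasingly without changing the rank.
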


Consequently, it is necessary to impose the following.

\begin{assumption}
\lab{ass1}
The regressor $\boldsymbol\Omega(t)$ of Proposition \ref{pro1} is IE.
\end{assumption}

We are in position to state the following result, whose proof may be found in Proposition 1 of   \cite{ORTROMARA}.

\begin{proposition}
\lab{pro2}
		Consider the LRE \eqref{newlreel} verifying Assumption \ref{ass1}. Define the following dynamic extension\footnote{This is part of the LS+DREM interlaced estimator with time-varying forgetting factor  \cite{ORTROMARA}.}
		\begin{align}
\label{LSD1}
			\dot{\hat{\boldsymbol\mu}}   & =\alpha F   \boldsymbol\Omega   (y  -\boldsymbol\Omega^\top    \hat{\boldsymbol\mu}   ),\; \hat{\boldsymbol\mu}  (0)=:\boldsymbol{\mu}_0  \in \rea^w,\\
		\dot {F}  & = -\alpha F  \boldsymbol\Omega \boldsymbol\Omega^\top    F + \beta F ,\;F(0)={1 \over f_0} I_w,\\
\label{LSD3}
		\dot z&=-\beta z,\;z(0)=1,
\end{align}
with
$$
\beta:=\beta_{0}\Big(1-{\|F\| \over \rho}\Big),
$$
and tuning gains $\alpha>0$, $f_{0}>0,\; {\beta_{0} > 0}$ and $\rho \geq {1 \over f_{0}}$.  Then, the new {\em scalar} LREs
\begin{equation}
\label{nLRE}
\caly_i(t) = \Delta(t) \theta_i,\;i \in \bar w,
\end{equation}
holds, where we defined
		\begin{align}
             \label{eqn:Delta}
			\Delta  & :=\det\{I_w- z  f_{0}F\},\\
             \label{eqn:Caly}
			\caly  & := \adj\{I_w- z  f_{0}F\}[\hat{\boldsymbol\mu} - z f_{0} F \boldsymbol{\mu}_0 ].
			\end{align}
Moreover, there exists $\Delta_{\min}>0$ such that
\begequ
\lab{del}
\Delta(t)\geq \Delta_{\min},\;\forall t\geq t_c,
\endequ
hence, {\em $\Delta(t)$ is PE}.
\qed
\end{proposition}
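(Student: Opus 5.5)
The plan is to find a closed-form expression for $F^{-1}$ in terms of the transition factor $z$, and from it an algebraic identity linking $\hat{\boldsymbol\mu}$ and $\boldsymbol\theta$. The first step is to set $\Phi:=F^{-1}$, which is well defined as long as $F$ stays positive definite; we check this a posteriori. Then
\[
\dot\Phi=-\Phi\dot F\Phi=\alpha\boldsymbol\Omega\boldsymbol\Omega^\top-\beta\Phi,\qquad \Phi(0)=f_0 I_w .
\]
Since $\dot z=-\beta z$ with $z(0)=1$, we have $z(t)=e^{-\int_0^t\beta}$, and variation of constants gives
\[
\Phi(t)=z(t)f_0I_w+\alpha\int_0^t \frac{z(t)}{z(s)}\boldsymbol\Omega(s)\boldsymbol\Omega^\top(s)ds .
\]
This expression is positive definite whenever $z>0$, so $F=\Phi^{-1}$ exists globally.

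Next we introduce the error $\tilde{\boldsymbol\mu}:=\hat{\boldsymbol\mu}-\boldsymbol\theta$. Using $y=\boldsymbol\Omega^\top\boldsymbol\theta$ from Proposition \ref{pro1}, it obeys $\dot{\tilde{\boldsymbol\mu}}=-\alpha F\boldsymbol\Omega\boldsymbol\Omega^\top\tilde{\boldsymbol\mu}$. Differentiating $\Phi\tilde{\boldsymbol\mu}$ gives
\[
\frac{d}{dt}(\Phi\tilde{\boldsymbol\mu})=\alpha\boldsymbol\Omega\boldsymbol\Omega^\top\tilde{\boldsymbol\mu}-\beta\Phi\tilde{\boldsymbol\mu}-\alpha\boldsymbol\Omega\boldsymbol\Omega^\top\tilde{\boldsymbol\mu}=-\beta\Phi\tilde{\boldsymbol\mu}.
\]
Hence $\Phi\tilde{\boldsymbol\mu}=z\,\Phi(0)\tilde{\boldsymbol\mu}(0)=zf_0\tilde{\boldsymbol\mu}(0)$, and multiplying by $F$ yields $\tilde{\boldsymbol\mu}(t)=zf_0F(t)\tilde{\boldsymbol\mu}(0)$. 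Writing this out,
\[
\hat{\boldsymbol\mu}-zf_0F\boldsymbol\mu_0=(I_w-zf_0F)\boldsymbol\theta .
\]
Premultiplying by $\adj\{I_w-zf_0F\}$ and using $\adj(A)A=\det(A)I$ gives $\caly=\Delta\boldsymbol\theta$ componentwise, which is \eqref{nLRE}.

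The main obstacle is the lower bound \eqref{del}, which requires uniform bounds on $F$ and $z$. The key structural fact is that the forgetting factor $\beta$ keeps $\|F\|\le\rho$. Indeed, at $t=0$ we have $\|F\|=1/f_0\le\rho$. If $\|F\|$ reached $\rho$, then $\beta=0$ and $\dot F=-\alpha F\boldsymbol\Omega\boldsymbol\Omega^\top F\le0$, so the norm cannot exceed $\rho$. Therefore $\beta\ge0$ and $z$ is nonincreasing with $z\le1$.

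We now write $I_w-zf_0F=F(\Phi-zf_0I_w)$. Substituting the closed form of $\Phi$,
\[
\Phi-zf_0I=\alpha\int_0^t\frac{z(t)}{z(s)}\boldsymbol\Omega\boldsymbol\Omega^\top ds\ \ge\ \alpha z(t)\int_0^t\boldsymbol\Omega\boldsymbol\Omega^\top ds,
\]
where the inequality uses $z(s)\le1$. By Assumption \ref{ass1} and Lemma \ref{lem1}, for $t\ge t_c$ the right-hand side is at least $\alpha z(t)c_cI_w$. Also $\det F\ge$ a positive constant, since $\Phi\le\|\Phi\|$ and $\|\Phi\|$ is bounded. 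Together these give $\Delta\ge(\alpha z c_c)^w\det F$.

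The remaining difficulty is to keep $z(t)$ bounded away from zero and $\Phi$ bounded, which requires boundedness of $\boldsymbol\Omega$. The first approach to try is to show that $\beta\to0$ exponentially, or is integrable, once $\Phi$ has grown large enough, since $\|F\|<\rho$ strictly makes $\beta>0$ while $\Phi$ is bounded from below via excitation. This would give $z\ge z_{\min}>0$. If that route fails, we rely on Proposition 1 of \cite{ORTROMARA}, where exactly this bound is established. With these bounds in hand, $\Delta\ge\Delta_{\min}$ follows, and PE of $\Delta$ is immediate.
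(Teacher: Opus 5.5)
Your derivation of the scalar LREs is correct. The paper gives no proof here and simply defers to Proposition 1 of \cite{ORTROMARA}. The ingredients you use are the standard ones: the identity $\Phi/z=f_0I_w+\alpha\int_0^t z(s)^{-1}\boldsymbol\Omega\boldsymbol\Omega^\top ds$, the relation $\frac{d}{dt}(\Phi\tilde{\boldsymbol\mu})=-\beta\Phi\tilde{\boldsymbol\mu}$, and the adjugate step.

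\textbf{The gap is in the lower bound \eqref{del}.} You write $\Delta=\det F\cdot\det(\Phi-zf_0I_w)\ge(\alpha z c_c)^w\det F$. This separates two factors that compensate each other, so you then need $z\ge z_{\min}>0$ and $\Phi$ bounded. Boundedness of $\Phi$ relies on bounded $\boldsymbol\Omega$, which is not among the hypotheses. The bound $z\ge z_{\min}>0$ is false in general. Your heuristic also runs backwards: once $\Phi$ is large, $\|F\|$ is small and $\beta$ is close to $\beta_0$, not to zero.

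Concretely, take $w=1$ and $\boldsymbol\Omega\equiv 1$. Then $\dot\Phi=\alpha+\beta_0/\rho-\beta_0\Phi$ is linear, so $\Phi\to\alpha/\beta_0+1/\rho$. Hence $\beta\to\alpha\beta_0\rho/(\alpha\rho+\beta_0)>0$ and $z\to0$ exponentially, while $\Delta=1-zf_0/\Phi\to1$. Your bound tends to zero even though $\Delta$ does not, and the same happens for every PE regressor. So the ``$\beta$ integrable'' route cannot work. Falling back on \cite{ORTROMARA} leaves exactly this step unproved by your own argument.

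\textbf{The missing idea is to keep $z$ and $F$ together.} You already have $\Phi-zf_0I_w\ge\alpha z(t)c_cI_w$ for $t\ge t_c$, which says $\Phi\ge z(f_0+\alpha c_c)I_w$. Since inversion reverses the order, $zf_0F=zf_0\Phi^{-1}\le\frac{f_0}{f_0+\alpha c_c}I_w$. Hence $I_w-zf_0F\ge\frac{\alpha c_c}{f_0+\alpha c_c}I_w$, and
\[
\Delta(t)\ \ge\ \Big(\frac{\alpha c_c}{f_0+\alpha c_c}\Big)^w=:\Delta_{\min},\qquad \forall t\ge t_c .
\]
The factor $z(t)$ cancels, so you need no lower bound on $z$, no upper bound on $\Phi$, and no boundedness of $\boldsymbol\Omega$. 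The only property of the forgetting factor used is $z(s)\le1$, which you correctly obtained from $\|F\|\le\rho$.

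Equivalently, set $Q:=\Phi/z$. It satisfies $\dot Q=\alpha z^{-1}\boldsymbol\Omega\boldsymbol\Omega^\top\ge0$ and $\Delta=\det(I_w-f_0Q^{-1})$. Since $Q$ is nondecreasing, $\Delta$ is itself nondecreasing, and the bound above follows from its value at $t_c$.
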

It is clear that, given the PE property of $\Delta(t)$, a simple gradient algorithm ensures the global exponential convergence of the estimate of the parameters $\boldsymbol\theta$, which may be used in a classical indirect adaptive controller. However, as we show below, stronger stability properties are obtained if we use instead a composite adaptive controller.
\section{Three New Composite Adaptive Controllers}
\lab{sec4}
%
In this section we present the main results of the paper, which consists of the design of three new composite adaptive trajectory tracking controllers for the EL system \eqref{robdyn}---all of them relying on the use of the scalar LREs \eqref{nLRE}. The first scheme is a  modified Slotine-Li controller, while the second one is a totally new  PID-like adaptive tracking controller that has the interesting feature of avoiding the filtering operations \cite{SLOLItac} but operates, instead, on the actual tracking error signals. Additionally, the third one contains a modification of the second one, so that exponential convergence can be ensured. For one controller we prove {\em global convergence} of both, the parameter and the tracking errors under the weakest IE assumption. Moreover, for two controllers we are able to prove that the convergence is {\em exponential}.

Throughout the paper we make the standard assumption that the desired trajectory $\bm{q}_\star(t) \in \rea^{n_q}$, as well as its first and second order derivatives, are bounded.
\subsection{A revisited  Slotine-Li controller \cite{SLOLItac}}
\lab{subsec41}
%
The first proposal is a modification of the well-known PD plus adaptive compensation controller of \cite{SLOLItac}. In particular, we show that a key modification of the original adaptive law, which changes the direct adaptive controller into a composite adaptive controller, strengthens the stability conclusions for all closed-loop system states. This modification replaces the original excitation requirement, namely the PE condition on the regressor, with a relaxed IE condition. Following the work of Slotine and Li \cite{SLOLItac}, we define the auxiliary signals
\begin{equation}
	\label{SLerrors}
\bm{s} := \dot {\tilde{\bm{q}}} + K_{S} \tilde{\bm{q}},  \quad  \dot{\bm{q}}_{r} := \dot{\bm{q}}_\star  - K_{S} \tilde{\bm{q}},
\end{equation}
where $K_S \in \rea^{n_q \times n_q}$ is {\em diagonal, positive definite} matrix.

We are now in position to present our first main result in the following Proposition.
\begin{proposition}\label{pro3}
Consider the EL system \eqref{elsys}  satisfying  the LRE \eqref{newlreel} and the  new scalar LREs \eqref{nLRE}. Define the control signal via the following composite adaptive scheme
\begin{align}
\label{newt0}
\boldsymbol\tau=& {k_I} Y(\bm{q},\dot{\bm{q}},\dot{\bm{q}}_r, \ddot{\bm{q}}_r) \hat \theta- K_{D}\bm{s} -K_{P} \tilde{\bm{q}} , \\
\label{newt00}
\dot {\hat{\boldsymbol\theta}}=&  -{k_I} \Gamma Y(\bm{q},\dot{\bm{q}},\dot{\bm{q}}_r, \ddot{\bm{q}}_r)^\top \bm{s} + \Gamma \Delta (\caly- {k_I} \Delta \hat{\boldsymbol\theta}),
\end{align}
with {\em diagonal, positive definite} matrices $K_P$, $K_D$, $\Gamma \in \rea^{n_q \times n_q}$, $k_I$ being a positive {\em scalar}, $\caly$ given by \eqref{eqn:Caly}, $\Delta$ given by \eqref{eqn:Delta}, and $Y(\cdot)$ obtained from \eqref{eqn:phi2}. Suppose the regressor $\boldsymbol\Omega(t)$ verifies Assumption \ref{ass1}.  Then, the origin of the resultant closed-loop system, that is,
 \begin{equation*}
 \begin{bmatrix}\bm{s}^T & {\tilde{\bm{q}}}^T & \tilde{\boldsymbol\theta}^T\end{bmatrix}=\begin{bmatrix}\bm{0}_n^T&\bm{0}_n^T&\bm{0}_w^T \end{bmatrix}^T,
  \end{equation*}
  is a globally {\em exponentially} stable (GES) equilibrium point, where $\tilde{\bm{q}}:=\bm{q}-\bm{q}_\star$ and  ${\tilde{\boldsymbol\theta}}:=\hat{\boldsymbol\theta}-{1 \over k_I}\boldsymbol\theta$.
\end{proposition}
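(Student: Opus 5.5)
We follow the classical Slotine--Li Lyapunov argument, adapted to the composite law \eqref{newt00} and to the scaled parameter error $\tilde{\boldsymbol\theta}=\hat{\boldsymbol\theta}-\frac{1}{k_I}\boldsymbol\theta$. The PE property \eqref{del} of $\Delta$ will turn the usual semidefinite derivative into a strict one for $t\ge t_c$.

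\textbf{Error dynamics.} Using \eqref{robdyn} and Property \ref{prop1} with $\bm{v}=\dot{\bm{q}}_r$, we have $M\dot{\bm{q}}=M\dot{\bm{q}}_r+M\bm{s}$. This gives
$$M(\bm{q})\dot{\bm{s}}+C(\bm{q},\dot{\bm{q}})\bm{s}=\boldsymbol\tau-Y(\bm{q},\dot{\bm{q}},\dot{\bm{q}}_r,\ddot{\bm{q}}_r)\boldsymbol\theta .$$
Substituting \eqref{newt0} and writing $k_I\hat{\boldsymbol\theta}-\boldsymbol\theta=k_I\tilde{\boldsymbol\theta}$ yields
$$M\dot{\bm{s}}+C\bm{s}=k_I Y\tilde{\boldsymbol\theta}-K_D\bm{s}-K_P\tilde{\bm{q}},\qquad \dot{\tilde{\bm{q}}}=-K_S\tilde{\bm{q}}+\bm{s}.$$
For the estimator, \eqref{nLRE} gives $\caly=\Delta\boldsymbol\theta$, so $\Delta(\caly-k_I\Delta\hat{\boldsymbol\theta})=-k_I\Delta^2\tilde{\boldsymbol\theta}$. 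Since $\boldsymbol\theta$ is constant,
$$\dot{\tilde{\boldsymbol\theta}}=-k_I\Gamma Y^\top\bm{s}-k_I\Gamma\Delta^2\tilde{\boldsymbol\theta}.$$
This is a time-varying system in $(\bm{s},\tilde{\bm{q}},\tilde{\boldsymbol\theta})$. We read $\Gamma$ as a $w\times w$ positive definite matrix.

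\textbf{Lyapunov function.} Take
$$V=\tfrac12\bm{s}^\top M\bm{s}+\tfrac12\tilde{\bm{q}}^\top K_P\tilde{\bm{q}}+\tfrac12\tilde{\boldsymbol\theta}^\top\Gamma^{-1}\tilde{\boldsymbol\theta}.$$
Differentiate along solutions and use the skew-symmetry \eqref{skesym}. The terms $k_I\bm{s}^\top Y\tilde{\boldsymbol\theta}$ cancel exactly against the gradient part of the adaptive law. The cross terms $-\bm{s}^\top K_P\tilde{\bm{q}}+\tilde{\bm{q}}^\top K_P\bm{s}$ also cancel. What remains is
$$\dot V=-\bm{s}^\top K_D\bm{s}-\tilde{\bm{q}}^\top K_PK_S\tilde{\bm{q}}-k_I\Delta^2\|\tilde{\boldsymbol\theta}\|^2 .$$
Because $K_P,K_S$ are diagonal and positive definite, $K_PK_S>0$. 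Assumption \ref{assum:M} gives $a_1\|\bm{x}\|^2\le V\le a_2\|\bm{x}\|^2$ uniformly in $t$, where $\bm{x}=\col(\bm{s},\tilde{\bm{q}},\tilde{\boldsymbol\theta})$.

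\textbf{Exponential stability.} On $[0,t_c)$ we only have $\dot V\le0$, so $V(t)\le V(0)$ and all signals remain bounded. The regressor $Y$ is evaluated on bounded reference signals, so no finite escape occurs. For $t\ge t_c$, Proposition \ref{pro2} gives $\Delta^2\ge\Delta_{\min}^2$. Hence $\dot V\le-c\|\bm{x}\|^2\le-\frac{c}{a_2}V$ with
$$c=\min\{\lambda_{\min}(K_D),\,\lambda_{\min}(K_PK_S),\,k_I\Delta_{\min}^2\}.$$
Combining the two intervals gives
$$\|\bm{x}(t)\|^2\le\frac{a_2}{a_1}e^{\frac{c}{a_2}t_c}\,e^{-\frac{c}{a_2}t}\|\bm{x}(0)\|^2,$$
which is GES.

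\textbf{Main obstacle.} The delicate point is that the bound \eqref{del} holds only after $t_c$ and depends on the estimator's auxiliary states. These states, however, are driven by exogenous signals and are independent of the closed-loop error. The uniformity of GES therefore rests on treating $\Delta(t)$ as a given time function with uniform lower bound $\Delta_{\min}$. We would state explicitly that $t_c$ and $\Delta_{\min}$ are taken uniform over initial times, as implied by Assumption \ref{ass1} and Proposition \ref{pro2}.
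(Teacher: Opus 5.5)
Your proof is correct and follows essentially the argument the paper relies on (it defers to Proposition~3 of \cite{ROMORT}). That argument uses the same error dynamics, the same Lyapunov function $\frac{1}{2}\bm{s}^\top M\bm{s}+\frac{1}{2}\tilde{\bm{q}}^\top K_P\tilde{\bm{q}}+\frac{1}{2}\tilde{\boldsymbol\theta}^\top\Gamma^{-1}\tilde{\boldsymbol\theta}$ with the exact cancellations you found, and the bound $\Delta\geq\Delta_{\min}$ for $t\geq t_c$; your explicit treatment of $[0,t_c)$ fills in a step the paper leaves implicit.

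One correction to your last paragraph: the estimator is not driven by exogenous signals, because $\boldsymbol\Omega$ is generated from $\bm{q},\dot{\bm{q}}$. Hence $t_c$ and $\Delta_{\min}$ depend on the closed-loop trajectory, and their uniformity over initial conditions is an additional assumption (tacitly made by the paper as well), not a consequence of Assumption~\ref{ass1} and Proposition~\ref{pro2}.
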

 \begin{proof}
 The proof of Proposition \ref{pro3} follows directly the procedure used in the proof of Proposition 3 in \cite{ROMORT}, considering the proposed extended parametrization given in \eqref{newlreel}. Therefore, it is omitted for brevity.

\end{proof}
\subsection{A new PID-like adaptive global tracking controller}
\lab{subsec42}
As seen in \eqref{newt0}, the regressor vector $Y(\cdot)$ in the Slotine--Li controller depends on the auxiliary signals $\dot{\bm{q}}_r$ and $\ddot{\bm{q}}_r$, which, as shown in \cite{BERORTNIJ}, may induce parameter drift in the presence of noise. To overcome this problem, in this subsection we propose a simple {\em adaptive PID-like controller} whose regressor depends on the {\it real} tracking errors. To this end, invoking Property \ref{prop1}, we define a \emph{new} parametrization of the system dynamics:
\begin{equation}\label{Yn}
Y_n(\bm{q},\dot{\bm{q}}, \dot{\bm{q}}_\star, \ddot{\bm{q}}_\star) \theta= M(\bm{q})\ddot{\bm{q}}_\star  +C(\bm{q}, \dot{\bm{q}})\dot{\bm{q}}_\star +\nabla U(\bm{q})+\mathcal{R}\dot{\bm{q}},
\end{equation}
where $Y_n:\mathbb{R}^{n_q}\times\mathbb{R}^{n_q}\times \mathbb{R}^{n_q} \times \mathbb{R}^{n_q}  \mapsto \mathbb{R}^{n_q \times w}$ and $\boldsymbol\theta$ are the unknown model parameters defined in \eqref{thesthe}.
We are now in a position to present the second main result of the paper. Compared with the proposal in \cite[Proposition 4]{ROMORT}, we include an extended PBEP given by \eqref{newlreel} to account for \emph{unknown} friction-related parameters and prove the convergence to zero of the joint position error $\tilde{\bm{q}}$ by invoking Matrosov's theorem; this result was not included in \cite{ROMORT}.
\begin{proposition}
\label{pro4}
Consider the EL system \eqref{elsys}, satisfying the LRE \eqref{newlreel} and the new scalar LREs \eqref{nLRE}. Define the control signal via the following composite adaptive PID-like scheme
\begin{align}
\label{newt1}
\boldsymbol\tau=&  -K_{P} \tilde{\bm{q}}+{k_I}Y_n(\bm{q},\dot{\bm{q}}, \dot{\bm{q}}_\star , \ddot{\bm{q}}_\star) \hat{\boldsymbol\theta}-K_{D} \dot {\tilde{\bm{q}}} , \\
\label{newt11}
\dot {\hat{\boldsymbol\theta}}=&  -{k_I}\Gamma Y_n(\bm{q},\dot{\bm{q}}, \dot{\bm{q}}_\star , \ddot{\bm{q}}_\star)^\top \dot {\tilde{\bm{q}}} +\Gamma \Delta (\caly- {k_I}\Delta \hat{\boldsymbol\theta}),
\end{align}
with positive definite adaptive gains $K_P$, $K_D$, $\Gamma$ $\in \rea^{n_q \times n_q}$, $k_I$ is a positive {\em scalar}, the matrix $Y_n$ of equation \eqref{Yn}, $\caly$ given by \eqref{eqn:Caly} and $\Delta$ of \eqref{eqn:Delta}. Suppose the regressor $\boldsymbol\Omega(t)$ verifies Assumption \ref{ass1}. Then, the origin of the closed-loop system, that is,
 \begin{equation*}
 \begin{bmatrix}\tilde{\bm{q}}^T & \dot{\tilde{\bm{q}}}^T & \tilde{\boldsymbol\theta}^T\end{bmatrix}=\begin{bmatrix}\bm{0}_n^T&\bm{0}_n^T&\bm{0}_w^T \end{bmatrix}^T,
  \end{equation*}
   is a globally uniformly asymptotically stable (GUAS) equilibrium point, that is,
\begin{equation}\label{eqn:resultprop4}
\centering
\lim_{t \to \infty}
\begin{bmatrix}
\tilde{\bm{q}}(t) \\[1mm]
\dot{\tilde{\bm{q}}}(t) \\[1mm]
\tilde{\boldsymbol\theta}(t)
\end{bmatrix}
= \bm{0}_{2n+w}.
\end{equation}
\end{proposition}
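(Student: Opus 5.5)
We first write the closed-loop error dynamics. Substituting \eqref{newt1} into \eqref{robdyn} and using \eqref{Yn} with $\ddot{\bm{q}}=\ddot{\tilde{\bm{q}}}+\ddot{\bm{q}}_\star$, $\dot{\bm{q}}=\dot{\tilde{\bm{q}}}+\dot{\bm{q}}_\star$, the terms $M\ddot{\bm{q}}_\star+C\dot{\bm{q}}_\star+\nabla U+\mathcal{R}\dot{\bm{q}}$ equal $Y_n\boldsymbol\theta$. We obtain
\begin{equation*}
M(\bm{q})\ddot{\tilde{\bm{q}}}+C(\bm{q},\dot{\bm{q}})\dot{\tilde{\bm{q}}}=-K_P\tilde{\bm{q}}-K_D\dot{\tilde{\bm{q}}}+k_I Y_n\tilde{\boldsymbol\theta}.
\end{equation*}
Using $\caly=\Delta\boldsymbol\theta$ from \eqref{nLRE}, the estimator \eqref{newt11} becomes
\begin{equation*}
\dot{\tilde{\boldsymbol\theta}}=-k_I\Gamma Y_n^\top\dot{\tilde{\bm{q}}}-k_I\Gamma\Delta^2\tilde{\boldsymbol\theta}.
\end{equation*}
Here $\Gamma$ is understood as $w\times w$. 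We then take the Lyapunov function
\begin{equation*}
V=\tfrac12\dot{\tilde{\bm{q}}}^\top M(\bm{q})\dot{\tilde{\bm{q}}}+\tfrac12\tilde{\bm{q}}^\top K_P\tilde{\bm{q}}+\tfrac{1}{2}\tilde{\boldsymbol\theta}^\top\Gamma^{-1}\tilde{\boldsymbol\theta}.
\end{equation*}
By the skew-symmetry Property \ref{pp:1}, the cross terms $\pm k_I\dot{\tilde{\bm{q}}}^\top Y_n\tilde{\boldsymbol\theta}$ cancel. This gives
\begin{equation*}
\dot V=-\dot{\tilde{\bm{q}}}^\top K_D\dot{\tilde{\bm{q}}}-k_I\Delta^2|\tilde{\boldsymbol\theta}|^2\le0 .
\end{equation*}
Consequently all states are bounded and the origin is uniformly stable, since $V$ is positive definite and decrescent by Assumption \ref{assum:M}. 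Moreover $\dot{\tilde{\bm{q}}}\in\mathcal{L}_2$ and $\Delta\tilde{\boldsymbol\theta}\in\mathcal{L}_2$.

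Next we obtain convergence of $\tilde{\boldsymbol\theta}$ and $\dot{\tilde{\bm{q}}}$. By Proposition \ref{pro2}, $\Delta(t)\ge\Delta_{\min}$ for $t\ge t_c$, so for $t\ge t_c$ we have $\dot V\le -\lambda_{\min}\{K_D\}\|\dot{\tilde{\bm{q}}}\|^2-k_I\Delta_{\min}^2\|\tilde{\boldsymbol\theta}\|^2$. Integrating shows $\tilde{\boldsymbol\theta}\in\mathcal{L}_2$. Boundedness of $\bm{q}_\star$ and its derivatives, together with Properties \ref{pp:2}–\ref{pp:4}, gives $Y_n\in\mathcal{L}_\infty$, $\ddot{\tilde{\bm{q}}}\in\mathcal{L}_\infty$ and $\dot{\tilde{\boldsymbol\theta}}\in\mathcal{L}_\infty$. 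Barbalat's lemma then yields $\dot{\tilde{\bm{q}}}\to0$ and $\tilde{\boldsymbol\theta}\to0$.

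The main obstacle is $\tilde{\bm{q}}\to0$ with uniformity, because $\dot V$ is only negative semidefinite in $\tilde{\bm{q}}$ and the system is time-varying. We will apply Matrosov's theorem with the auxiliary function $W=\tilde{\bm{q}}^\top M(\bm{q})\dot{\tilde{\bm{q}}}$, which is bounded on bounded sets. Its derivative is
\begin{equation*}
\dot W=-\tilde{\bm{q}}^\top K_P\tilde{\bm{q}}+\dot{\tilde{\bm{q}}}^\top M\dot{\tilde{\bm{q}}}+\tilde{\bm{q}}^\top C^\top\dot{\tilde{\bm{q}}}-\tilde{\bm{q}}^\top K_D\dot{\tilde{\bm{q}}}+k_I\tilde{\bm{q}}^\top Y_n\tilde{\boldsymbol\theta}.
\end{equation*}
Here we used $\dot M=C+C^\top$ from Property \ref{pp:1}. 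On the set $\{\dot V=0\}$, which for $t\ge t_c$ is $\{\dot{\tilde{\bm{q}}}=0,\tilde{\boldsymbol\theta}=0\}$, this reduces to $-\tilde{\bm{q}}^\top K_P\tilde{\bm{q}}$. That expression is nonzero away from the origin, which is exactly Matrosov's condition.

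Two points remain to check. The first is the uniform continuity and boundedness requirements of Matrosov's theorem on compact sets. These follow from the boundedness of the reference and of $\Delta$, since $F$ is bounded by $\rho$ and $z$ is bounded. The second is the initial interval $[0,t_c)$, where $\Delta$ may vanish. This is handled by working on $[t_c,\infty)$ and using uniform boundedness of the state at $t_c$. Matrosov's theorem then yields GUAS of the origin, and hence \eqref{eqn:resultprop4}.
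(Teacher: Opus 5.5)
Your proof is correct and follows the paper's skeleton: the same error dynamics \eqref{dyn1}--\eqref{dyn2}, the same Lyapunov function $\mathbb{W}$, Barbalat for $\dot{\tilde{\bm q}}$ and $\tilde{\boldsymbol\theta}$, and then Matrosov's theorem with the Paden--Panja lemma on $[t_c,\infty)$, where $E=\{\dot{\tilde{\bm q}}=0,\ \tilde{\boldsymbol\theta}=0\}$. The genuine difference is the auxiliary function. The paper takes $W=\ddot{\mathbb{W}}$, which gives $\dot W=-2\tilde{\bm q}^\top K_P M^{-1}K_D M^{-1}K_P\tilde{\bm q}$ on $E$. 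To verify (iii) and (iv.a) for that choice, the paper must control $\dot\Delta$ and $\ddot\Delta$, hence also $\ddot F$, $\ddot z$ and $\frac{d}{dt}\|F\|$, as well as $\dddot{\tilde{\bm q}}$ and $\ddot{\tilde{\boldsymbol\theta}}$. This forces it to assume $\dddot{\bm q}_\star$ bounded, which is more than the standing assumption on the reference. Its formula $\frac{d}{dt}\|F\|=u^\top\dot F v$ also holds only where the largest singular value of $F$ is simple. Your cross term $W=\tilde{\bm q}^\top M(\bm q)\dot{\tilde{\bm q}}$ (the classical Paden--Panja choice) uses the closed loop only once. It is bounded on bounded sets by Assumption~\ref{assum:M}. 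Its derivative contains no $\Delta$ at all and depends on time only through $\bm q_\star,\dot{\bm q}_\star,\ddot{\bm q}_\star$. On $E$ it reduces to $\dot W=-\tilde{\bm q}^\top K_P\tilde{\bm q}$, which gives (iv.b) with $k(r)=\lambda_{\min}\{K_P\}r^2$. So your route is shorter and needs weaker hypotheses; the paper's choice only has the merit of being systematic, since a higher derivative of the Lyapunov function requires no insight into the system's structure. Two minor remarks. First, your bound $-k_I\Delta_{\min}^2\|\tilde{\boldsymbol\theta}\|^2$ is the correct one; the paper writes $\Delta_{\min}$. Second, the cross terms $\pm k_I\dot{\tilde{\bm q}}^\top Y_n\tilde{\boldsymbol\theta}$ cancel because of the adaptation law, not skew-symmetry; skew-symmetry is what removes $\tfrac12\dot{\tilde{\bm q}}^\top(\dot M-2C)\dot{\tilde{\bm q}}$.
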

\begin{proof}
Rewriting the EL dynamics \eqref{robdyn} in terms of $\tilde{\bm{q}}$ and $\dot{\tilde{\bm{q}}}$, and substituting the proposed control law \eqref{newt1}--\eqref{newt11}, yields the error dynamics:
\begin{align}
M(\bm{q})\ddot {\tilde{\bm{q}}} &=- [C(\bm{q},\dot{\bm{q}})+K_D]\dot{\tilde{\bm{q}}} -K_P\tilde{\bm{q}} +k_I Y_n \tilde{ \boldsymbol\theta}, \label{dyn1} \\
\dot {\tilde{\boldsymbol\theta}} &= -k_I \Gamma Y_n^\top \dot{\tilde{\bm{q}}} -k_I\Gamma \Delta^2 \tilde{\boldsymbol\theta}.
\label{dyn2}
\end{align}
To analyze the behavior of the closed-loop system \eqref{dyn1}--\eqref{dyn2}, we propose the following Lyapunov function:
\begin{equation}
\mathbb{W}(\tilde{\bm{q}},\dot{\tilde{\bm{q}}},\tilde{\boldsymbol\theta})=\frac{1}{2}\dot {\tilde{\bm{q}}}^\top M(\bm{q})\dot {\tilde {\bm{q}}} +\frac{1}{2} \tilde{\bm{q}}^\top K_{P} \tilde{\bm{q}} +\frac{1}{2}\tilde{\boldsymbol\theta}^\top\Gamma^{-1} \tilde{\boldsymbol\theta},
\label{defW}
\end{equation}
whose time derivative yields
 \begin{align}
\dot {\mathbb{W}}=&\dot {\tilde{\bm{q}}}^\top M\ddot {\tilde{\bm{q}}} +\frac{1}{2}\dot {\tilde{\bm{q}}}^\top \dot M\dot {\tilde{\bm{q}}} +\tilde{\boldsymbol\theta}^\top \Gamma^{-1} \dot {\tilde{\boldsymbol\theta}} +\tilde{\bm{q}}^\top K_{P} \dot {\tilde{\bm{q}}},\nonumber \\
=& -\dot{\tilde{\bm{q}}}^\top K_{D} \dot {\tilde{\bm{q}}} -{k_I} \Delta^2 \|\tilde{\boldsymbol\theta}\|^2,  \label{dotU11}  \\
\leq & -\dot{\tilde{\bm{q}}}^T K_{D} \dot{\tilde{\bm{q}}} -k_I \Delta_{\min}
 \| \tilde{\boldsymbol\theta} \|^2 , \; \mbox{for} \; t \geq  t_c, \label{dotU1}
  \end{align}
where Property~\ref{pp:1} has been invoked. From \eqref{defW} and \eqref{dotU1}, we conclude that the origin of the closed-loop system \eqref{dyn1}--\eqref{dyn2} is globally uniformly stable and that $\tilde{\bm{q}}$, $\dot{\tilde{\bm{q}}}$, and $\tilde{\boldsymbol\theta}$ are bounded. By invoking the vectorial form of Barbalat's Lemma\footnote{See \cite{ControlOfRobots}, Lemma A.5.}, we can conclude that $\dot{\tilde{\bm{q}}}$ and $\tilde{\boldsymbol\theta}$ converge to zero; however, no such conclusion can be directly drawn for $\tilde{\bm{q}}$. To establish convergence of all states of the closed-loop system \eqref{dyn1}--\eqref{dyn2} to zero, we employ Matrosov's theorem. The technical details are provided in the Appendix. This completes the proof of Proposition \ref{pro4}.
 \end{proof}
We make the important observation that, in contrast with the scheme of Proposition \ref{pro3}, the latter controller ensures only global asymptotic stability, but not exponential stability, of the equilibrium point of its corresponding closed-loop system. Moreover, the fact that the former controller performs better is clearly illustrated by the experimental results in the next section.

\subsection{A new GES controller}
Motivated by the PID-like controller of Proposition~\ref{pro4}, we now present a slight modification of that controller, specifically in the adaptation law, so that the resulting closed-loop system admits a strict Lyapunov function. This modification strengthens the stability properties, yielding exponential convergence of the tracking and parameter errors to zero. Compared again with the proposal of Proposition~\ref{pro3}, the proposed controller avoids the use of the system dynamics regressor $Y(\bm{q},\dot{\bm{q}},\dot{\bm{q}}_r,\ddot{\bm{q}}_r)$, which, as explained before, is related to parameter drift phenomena and noise sensitivity.
\begin{proposition}
\label{pro5}
Consider the EL system \eqref{elsys}  satisfying  the LRE \eqref{newlreel} and the  new scalar LREs \eqref{nLRE}. Define the control signal via the following composite adaptive scheme
\begin{align}
\label{newt21}
\boldsymbol\tau=&  -K_{P} \tilde{\bm{q}}+{k_I}Y_n(\bm{q},\dot{\bm{q}}, \dot{\bm{q}}_\star , \ddot{\bm{q}}_\star) \hat{\boldsymbol\theta}-K_{D} \dot {\tilde{\bm{q}}} , \\
\nonumber
\dot {\hat{\boldsymbol\theta}}=&  -{k_I}\Gamma Y_n(\bm{q},\dot{\bm{q}}, \dot{\bm{q}}_\star , \ddot{\bm{q}}_\star)^\top \Bigg[ \beta\dot{\tilde{\bm{q}}}+\frac{2\tilde{\bm{q}}}{1+2\| \tilde{\bm{q}} \|^2} \Bigg]\\
\label{newt31}
 &+\Gamma \Delta (\caly- {k_I}\Delta \hat{\boldsymbol\theta}),
\end{align}
with positive definite adaptive gains $K_P$, $K_D$, $\Gamma$ $\in \rea^{n_q \times n_q}$, the matrix $Y_n$ of equation \eqref{Yn}, $\caly$ given by \eqref{eqn:Caly}, $\Delta$ of \eqref{eqn:Delta},$k_I$ is a positive {\em scalar} and $\beta$ is such that
\begin{equation}\label{eqn:beta}
\beta>\max\{\beta_{1},\beta_{2},\beta_{3}\},
\end{equation}
where
\begin{eqnarray*}
\beta_{1}&=&\frac{2\lambda_{\max}\{M\}}{\sqrt{\lambda_{\min}\{M\}\lambda_{\min}\{ K_p \} }},\\
\beta_{2}&=&2\sqrt{\frac{\lambda_{\max}\{M\}}{\lambda_{\max}\{ K_p \}}},\\
\beta_{3}&=&\frac{1}{\lambda_{\min}\{K_D\}}\Bigg[ \frac{(\lambda_{\max}\{K_D\}+k_C\|\dot{\bm{q}}_\star(t)\|)^2}{2\lambda_{\min}\{K_p\}} \\
&&+4\lambda_{\max}\{M\}+\frac{k_C}{\sqrt{2}}\Bigg].
\end{eqnarray*}
Suppose the regressor $\boldsymbol\Omega(t)$ verifies Assumption \ref{ass1}.  Then, the origin of the resultant closed-loop system, that is,
 \begin{equation*}
 \begin{bmatrix}\tilde{\bm{q}}^T & \dot{\tilde{\bm{q}}}^T & \tilde{\boldsymbol\theta}^T\end{bmatrix}=\begin{bmatrix}\bm{0}_n^T&\bm{0}_n^T&\bm{0}_w^T \end{bmatrix}^T,
  \end{equation*}
  is a GES equilibrium point ensuring the existence of constants $\rho_4>0$ and $\rho_5>0$ such that
\begin{equation}\label{eqn:resultprop33}
\left  \|\begmat{ \tilde{\bm{q}}(t) \\ \dot{\tilde{\bm{q}}}(t) \\ {\tilde{\boldsymbol\theta}}(t)}\right \| \leq \rho_4e^{-\rho_5 t} \left \| \begmat{\tilde{\bm{q}}(0) \\ \dot{\tilde{\bm{q}}}(0) \\ {\tilde{\boldsymbol\theta}}(0)}\right\|,\; \forall t \geq 0.
\end{equation}
\end{proposition}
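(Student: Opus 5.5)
Substituting \eqref{newt21}--\eqref{newt31} into \eqref{robdyn}, and using \eqref{Yn} together with \eqref{nLRE}, gives the closed-loop error dynamics. They read $M\ddot{\tilde{\bm{q}}}=-[C+K_D]\dot{\tilde{\bm{q}}}-C\dot{\bm{q}}_\star$-type terms (absorbed as in \eqref{dyn1}) $-K_P\tilde{\bm{q}}+k_IY_n\tilde{\boldsymbol\theta}$. The parameter-error equation becomes $\dot{\tilde{\boldsymbol\theta}}=-k_I\Gamma Y_n^\top[\beta\dot{\tilde{\bm{q}}}+\bm{\phi}(\tilde{\bm{q}})]-k_I\Gamma\Delta^2\tilde{\boldsymbol\theta}$, where $\bm{\phi}(\tilde{\bm{q}}):=2\tilde{\bm{q}}/(1+2\|\tilde{\bm{q}}\|^2)$.

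The adaptation law is built so that the cross term cancels against a strict Lyapunov function of ``$\beta\times$energy plus bounded cross term'' type. I will use
\begin{equation*}
\mathbb{V}=\beta\Big[\tfrac12\dot{\tilde{\bm{q}}}^\top M\dot{\tilde{\bm{q}}}+\tfrac12\tilde{\bm{q}}^\top K_P\tilde{\bm{q}}\Big]+\bm{\phi}(\tilde{\bm{q}})^\top M(\bm{q})\dot{\tilde{\bm{q}}}+\tfrac12\tilde{\boldsymbol\theta}^\top\Gamma^{-1}\tilde{\boldsymbol\theta}.
\end{equation*}
With this choice the terms $k_I\tilde{\boldsymbol\theta}^\top Y_n^\top(\beta\dot{\tilde{\bm{q}}}+\bm{\phi})$ coming from the $\dot{\tilde{\bm{q}}}$ equation cancel exactly against those from $\tilde{\boldsymbol\theta}^\top\Gamma^{-1}\dot{\tilde{\boldsymbol\theta}}$. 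Only the damping term $-k_I\Delta^2\|\tilde{\boldsymbol\theta}\|^2$ remains, and by \eqref{del} it is $\le -k_I\Delta_{\min}^2\|\tilde{\boldsymbol\theta}\|^2$ for $t\ge t_c$.

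\emph{Step 1: positivity and quadratic bounds.} Since $\|\bm{\phi}\|\le 2\|\tilde{\bm{q}}\|$ and $\|\bm{\phi}\|\le 1/\sqrt2$, Young's inequality applied to the cross term gives $a_1\|x\|^2\le\mathbb{V}\le a_2\|x\|^2$, with $x=\col(\tilde{\bm{q}},\dot{\tilde{\bm{q}}},\tilde{\boldsymbol\theta})$. This works provided $\beta^2\lambda_{\min}\{M\}\lambda_{\min}\{K_P\}>4\lambda_{\max}^2\{M\}$, which is exactly $\beta>\beta_1$; the condition $\beta>\beta_2$ will serve a similar role in the upper bound and constants.

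\emph{Step 2: the derivative.} Differentiating the cross term gives $\dot{\bm{\phi}}^\top M\dot{\tilde{\bm{q}}}+\bm{\phi}^\top\dot M\dot{\tilde{\bm{q}}}+\bm{\phi}^\top M\ddot{\tilde{\bm{q}}}$. Substituting the error equation and using $\dot M=C+C^\top$ (Property \ref{pp:1}) yields several contributions:
\begin{itemize}
\item $-\bm{\phi}^\top K_P\tilde{\bm{q}}=-2\tilde{\bm{q}}^\top K_P\tilde{\bm{q}}/(1+2\|\tilde{\bm{q}}\|^2)$, which is negative;
\item $\|\partial\bm{\phi}/\partial\tilde{\bm{q}}\|\le 2$, so $\dot{\bm{\phi}}^\top M\dot{\tilde{\bm{q}}}\le 2\lambda_{\max}\{M\}\|\dot{\tilde{\bm{q}}}\|^2$;
\item $\bm{\phi}^\top C^\top\dot{\tilde{\bm{q}}}$, where $\|C\|\le k_C(\|\dot{\tilde{\bm{q}}}\|+\|\dot{\bm{q}}_\star\|)$ and $\|\bm{\phi}\|\le1/\sqrt2$, so it is bounded by $\tfrac{k_C}{\sqrt2}\|\dot{\tilde{\bm{q}}}\|^2+k_C\|\dot{\bm{q}}_\star\|\,\|\bm{\phi}\|\,\|\dot{\tilde{\bm{q}}}\|$;
\item $-\bm{\phi}^\top K_D\dot{\tilde{\bm{q}}}$, bounded by $(\lambda_{\max}\{K_D\}+k_C\|\dot{\bm{q}}_\star\|)\|\bm{\phi}\|\,\|\dot{\tilde{\bm{q}}}\|$.
\end{itemize}
Young's inequality then splits the last cross term into half of the $\bm{\phi}^\top K_P\tilde{\bm{q}}$ term plus $\frac{(\lambda_{\max}\{K_D\}+k_C\|\dot{\bm{q}}_\star\|)^2}{2\lambda_{\min}\{K_P\}}\|\dot{\tilde{\bm{q}}}\|^2$. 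Collecting everything against $-\beta\lambda_{\min}\{K_D\}\|\dot{\tilde{\bm{q}}}\|^2$ gives a strictly negative coefficient precisely when $\beta>\beta_3$, with $4\lambda_{\max}\{M\}$ covering the $\dot M$ and $\dot{\bm{\phi}}$ terms.

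\emph{Main obstacle.} The bound obtained is $\dot{\mathbb{V}}\le -c\big[\|\dot{\tilde{\bm{q}}}\|^2+\|\tilde{\bm{q}}\|^2/(1+2\|\tilde{\bm{q}}\|^2)+\|\tilde{\boldsymbol\theta}\|^2\big]$. The saturated term in $\tilde{\bm{q}}$ means this is not globally quadratic. I would handle it as follows. First, $\dot{\mathbb{V}}\le0$ and $\mathbb{V}\le a_2\|x(0)\|^2$ bound $\|\tilde{\bm{q}}(t)\|\le r(\|x(0)\|)$, so $\tilde{\bm{q}}^\top\tilde{\bm{q}}/(1+2\|\tilde{\bm{q}}\|^2)\ge \|\tilde{\bm{q}}\|^2/(1+2r^2)$ along trajectories. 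This gives exponential decay of $\mathbb{V}$ with a rate depending on the initial ball, in the standard robotics sense of GES with semiglobal rate. If a uniform rate is needed, I would use a Lyapunov-function change such as $\mathbb{V}+\ln(1+\mathbb{V})$-type arguments. Second, on $[0,t_c)$, before \eqref{del} holds, $\dot{\mathbb{V}}\le0$ still holds, and the constants are adjusted by a factor $e^{\rho_5t_c}$. Together these give \eqref{eqn:resultprop33}.
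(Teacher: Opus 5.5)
Your Lyapunov function is exactly the paper's \eqref{defVS}: your $\bm{\phi}^\top M\dot{\tilde{\bm{q}}}$ is its cross term. The $\tilde{\boldsymbol\theta}$ cross terms cancel in the same way, and $\beta_1$, $\beta_3$ play the same roles, so your route is the paper's.

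There is one small bookkeeping slip. Spending only half of $\bm{\phi}^\top K_P\tilde{\bm{q}}$ in Young's inequality gives $b^2/\lambda_{\min}\{K_P\}$, not $b^2/(2\lambda_{\min}\{K_P\})$, where $b=\lambda_{\max}\{K_D\}+k_C\|\dot{\bm{q}}_\star\|$. To land on $\beta_3$, either use the $2\times2$ determinant condition as the paper does, or spend a fraction $1-\delta$ of that term and use the strict inequality $\beta>\beta_3$.

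Your ``main obstacle'' is a real defect of the paper's own argument. The paper writes $\dot V\le-\boldsymbol\vartheta^\top A\boldsymbol\vartheta$ with $a_{11}=2\lambda_{\min}\{K_P\}/(1+2\|\tilde{\bm{q}}\|^2)$. It then treats $c_3=\lambda_{\min}\{A\}$ as a positive constant in order to invoke Khalil's Theorem 4.10. But $a_{11}\to0$ as $\|\tilde{\bm{q}}\|\to\infty$, so no such global constant exists. You also correctly handle $[0,t_c)$, where only $\dot{\mathbb{V}}\le0$ is available, and you obtain the right damping term $-k_I\Delta_{\min}^2\|\tilde{\boldsymbol\theta}\|^2$; the paper glosses over both points.

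The gap is in how you close. The statement asserts \eqref{eqn:resultprop33} with $\rho_4,\rho_5$ independent of the initial state, which is what GES means. Your sublevel-set argument is rigorous: $\|\tilde{\bm{q}}(t)\|\le r(\|\bm{x}(0)\|)$, hence $\dot{\mathbb{V}}\le-c(r)\|\bm{x}\|^2$ for $t\ge t_c$. However, it only yields constants that depend on the initial ball, i.e.\ semiglobal exponential stability.

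Your proposed repair via $\mathbb{V}+\ln(1+\mathbb{V})$ cannot work, and neither can any reparametrization $h(\mathbb{V})$ with quadratic bounds. Take states with $\dot{\tilde{\bm{q}}}=0$, $\tilde{\boldsymbol\theta}=0$ and $\|\tilde{\bm{q}}\|=R$. There $\dot{\mathbb{V}}=-2\tilde{\bm{q}}^\top K_P\tilde{\bm{q}}/(1+2R^2)$, which stays bounded by $\lambda_{\max}\{K_P\}$, while $\mathbb{V}$ and $\|\bm{x}\|^2$ are of order $R^2$. So $\dot h\le-c\|\bm{x}\|^2$ would force $h'(\mathbb{V})$ to grow linearly in $\mathbb{V}$. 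Then $h$ grows quadratically in $\mathbb{V}$, which contradicts $h\le c'\|\bm{x}\|^2$.

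A uniform rate therefore needs a genuinely different strict Lyapunov function, and the paper does not supply one either. Otherwise the conclusion must be weakened to the semiglobal statement you actually prove. Note also that $t_c$ and $\Delta_{\min}$ in \eqref{del} are properties of the closed-loop regressor trajectory, so they depend on the initial state as well. This further undermines any claim of constants uniform over initial conditions.
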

\begin{proof}
The closed-loop system is obtained by rewriting the EL dynamics \eqref{robdyn} in terms of $\tilde{\bm{q}}$ and $\dot{\tilde{\bm{q}}}$ and substituting the proposed control law \eqref{newt21}--\eqref{newt31}, which yields:
\begin{align}
\ddot {\tilde{\bm{q}}} &=-M(\bm{q})^{-1} [C(\bm{q},\dot{\bm{q}})+K_D]\dot{\tilde{\bm{q}}} -K_P\tilde{\bm{q}} +k_I Y_n \tilde{ \boldsymbol\theta}, \label{dyn21} \\
\dot {\tilde{\boldsymbol\theta}} &= -k_I \Gamma Y_n^\top \Bigg[ \beta\dot{\tilde{\bm{q}}}+\frac{2\tilde{\bm{q}}}{1+2\| \tilde{\bm{q}} \|^2} \Bigg] -k_I\Gamma \Delta^2 \tilde{\boldsymbol\theta}.
\label{dyn22}
\end{align}
The proof that the origin of \eqref{dyn21}--\eqref{dyn22} is a GES equilibrium point is carried out in two steps. First, we define the following Lyapunov function candidate:
\begin{equation}
\begin{split}
V(\tilde{\bm{q}},\dot{\tilde{\bm{q}}},\tilde{\boldsymbol\theta})=&\frac{\beta}{2}\dot {\tilde{\bm{q}}}^\top M(\bm{q})\dot {\tilde {\bm{q}}} +\frac{\beta}{2} \tilde{\bm{q}}^\top K_{P} \tilde{\bm{q}} +\frac{1}{2}\tilde{\boldsymbol\theta}^\top\Gamma^{-1} \tilde{\boldsymbol\theta}\\
&+\frac{2\dot{\tilde{\bm{q}}}M(\bm{q})\tilde{\bm{q}}}{1+2\tilde{\bm{q}}^T\tilde{\bm{q}}},
\end{split}
\label{defVS}
\end{equation}
It will be proved that $V$ is a positive definite, radially unbounded, and decrescent function. To that end, observe that the fourth added of the right-hand side in \eqref{defVS} satisfies
\begin{equation}
\frac{2\dot{\tilde{\bm{q}}}M(\bm{q})\tilde{\bm{q}}}{1+2\tilde{\bm{q}}^T\tilde{\bm{q}}}\geq -2\lambda_{\max}\{M(\bm{q})\}\|\dot{\tilde{\bm{q}}}\| \|{\tilde{\bm{q}}}\|.
\end{equation}
Hence, the Lyapunov function candidate \eqref{defVS} verifies the inequality
\begin{equation}\label{Ve1}
V(\boldsymbol\eta)\geq \boldsymbol\vartheta^T E_{\min}\boldsymbol\vartheta,
\end{equation}
where $\boldsymbol\eta=\begin{bmatrix}\tilde{\bm{q}}^T& \dot{\tilde{\bm{q}}}^T&\tilde{\boldsymbol\theta}^T\end{bmatrix}^T,\boldsymbol\vartheta=\boldsymbol\vartheta(\boldsymbol\eta)=\begin{bmatrix}\|\tilde{\bm{q}}\| &  \|\dot{\tilde{\bm{q}}}\| & \|\tilde{\boldsymbol\theta}\| \end{bmatrix}^T$ with the matrix $E_{\min}$ defined by:
\begin{equation}
\begin{split}
&E_{\min}= \\
&\begin{bmatrix}
\frac{\beta}{2}\lambda_{\min}\{K_p\} & -\lambda_{\max}\{M(\bm{q})\} & 0\\
-\lambda_{\max}\{M(\bm{q})\} & \frac{\beta}{2}\lambda_{\min}\{M(\bm{q})\} & 0\\
0 & 0 & \frac{1}{2}\lambda_{\min}\{\Gamma^{-1}\}
\end{bmatrix}.
\end{split}
\end{equation}
Since $K_p$ and $M$ are positive definite matrices and $\beta$ is selected to satisfy \eqref{eqn:beta} then it follows that $V$ of \eqref{defVS} is a positive definite function, and moreover, it is a radially unbounded one. Thus, $V$ also satisfies:
\begin{equation}\label{Ve2}
V(\boldsymbol\eta)\leq \boldsymbol\vartheta^T E_{\max}\boldsymbol\vartheta,\\
\end{equation}
where
\begin{equation}
\begin{split}
&E_{\max}=\\
& \begin{bmatrix}
\frac{\beta}{2}\lambda_{\max}\{K_p\} & \lambda_{\max}\{M(\bm{q})\} & 0\\
\lambda_{\max}\{M(\bm{q})\} & \frac{\beta}{2}\lambda_{\max}\{M(\bm{q})\} & 0\\
0 & 0 & \frac{1}{2}\lambda_{\max}\{\Gamma^{-1}\}
\end{bmatrix}.
\end{split}
\end{equation}
Hence, $V$ is a decrescent function, since $\lambda_{\max}\{K_p\}$ and $\lambda_{\max}\{M(\bm{q})\}$ are strictly positive constants, and $\beta$ satisfies \eqref{eqn:beta}.
\par On the other hand, the time derivative of the Lyapunov function candidate \eqref{defVS} along the trajectories of the closed-loop system \eqref{dyn21}--\eqref{dyn22} yields:
 \begin{align}
\dot {V}=&-\beta\dot{\tilde{\bm{q}}}^TM(\bm{q})\ddot{\tilde{\bm{q}}}+\frac{\beta}{2}\dot{\tilde{\bm{q}}}^T\dot{M}(\bm{q})\dot{\tilde{\bm{q}}}+\beta\tilde{\bm{q}}^TK_P\dot{\tilde{\bm{q}}} \nonumber \\
&+\tilde{\boldsymbol\theta}\Gamma^{-1}\dot{\tilde{\boldsymbol\theta}}+\frac{d}{dt}\Bigg[ \frac{2\dot{\tilde{\bm{q}}}^TM(\bm{q})\tilde{\bm{q}}}{1+2\|\tilde{\bm{q}}\|^2}\Bigg],\\
=&-\beta\dot{\tilde{\bm{q}}}^TK_D\dot{\tilde{\bm{q}}}-\frac{2\tilde{\bm{q}}^TK_P\tilde{\bm{q}}}{1+2\|\tilde{\bm{q}}\|^2}-\frac{8\dot{\tilde{\bm{q}}}^TM(\bm{q})\tilde{\bm{q}}\tilde{\bm{q}}^T\dot{\tilde{\bm{q}}}}{(1+2\|\tilde{\bm{q}}\|^2)^2} \nonumber \\
&-\frac{2\dot{\tilde{\bm{q}}}^TK_D^T\tilde{\bm{q}}}{1+2\|\tilde{\bm{q}}\|^2}+\frac{2\dot{\tilde{\bm{q}}}^TC(\bm{q},\dot{\bm{q}})\tilde{\bm{q}}}{1+2\|\tilde{\bm{q}}\|^2}
+\frac{2\dot{\tilde{\bm{q}}}^TM(\bm{q})\dot{\tilde{\bm{q}}}}{1+2\|\tilde{\bm{q}}\|^2} \nonumber \\
&-\Delta^2\|\boldsymbol\theta\|^2,
 \label{dotV1}
  \end{align}
 where Property~\ref{pp:1} has been invoked. Using the following inequalities—obtained from Properties~\ref{pp:2}–\ref{pp:4}—:
\begin{align}
-\beta\dot{\tilde{\bm{q}}}^TK_D\dot{\tilde{\bm{q}}}&\leq -\beta\lambda_{\min}\{K_D\}\|\dot{\tilde{\bm{q}}}\|^2,\\
-\frac{2\tilde{\bm{q}}^TK_P\tilde{\bm{q}}}{1+2\|\tilde{\bm{q}}\|^2}&\leq -\frac{2\lambda_{\min}\{K_P\}\| \tilde{\bm{q}} \|^2}{1+2\| \tilde{\bm{q}}\|^2},\\
-\frac{8\dot{\tilde{\bm{q}}}^TM(\bm{q})\tilde{\bm{q}}\tilde{\bm{q}}^T\dot{\tilde{\bm{q}}}}{(1+2\|\tilde{\bm{q}}\|^2)^2}&\leq\frac{8\|\dot{\tilde{\bm{q}}}\|^2\|{\tilde{\bm{q}}}\|^2\lambda_{\max}\{M(\bm{q})\}}{(1+2\|{\tilde{\bm{q}}}\|^2)^2}\\
&\leq 2\|\dot{\tilde{\bm{q}}}\|^2\lambda_{\max}\{M(\bm{q})\},\\
-\frac{2\dot{\tilde{\bm{q}}}^TK_D^T\tilde{\bm{q}}}{1+2\| \tilde{\bm{q}} \|^2}&\leq\frac{2\lambda_{\max}\{K_D\}\|\dot{\tilde{\bm{q}}}\|\|\tilde{\bm{q}}\|}{1+2\| \tilde{\bm{q}} \|^2}\\
\nonumber
\frac{2\dot{\tilde{\bm{q}}}^TC(\bm{q},\dot{\bm{q}})\tilde{\bm{q}}}{1+2\|\tilde{\bm{q}}\|^2}&\leq \frac{2\|\dot{\tilde{\bm{q}}}\|\|\tilde{\bm{q}}\|}{1+2\|\tilde{\bm{q}}\|^2}\Bigg[k_C\|\dot{\tilde{\bm{q}}}\|+k_C\|\dot{\bm{q}}_\star(t)\| \Bigg]\\
&\leq \frac{k_C}{\sqrt{2}}\|\dot{\tilde{\bm{q}}}\|^2+\frac{2k_C\|\dot{\bm{q}}_\star(t)\|\|\tilde{\bm{q}}\|\|\dot{\tilde{\bm{q}}}\|}{1+2\|\tilde{\bm{q}}\|^2},\\
\frac{2\dot{\tilde{\bm{q}}}^TM(\bm{q})\dot{\tilde{\bm{q}}}}{1+2\|\tilde{\bm{q}}\|^2}&\leq2\lambda_{\max}\{M(\bm{q})\}\| \dot{\tilde{\bm{q}}} \|^2,\\
-\Delta^2\|\boldsymbol\theta\|^2&\leq -\Delta_{\min}\| \tilde{\boldsymbol\theta} \|^2,\;\mbox{for}\;t\geq t_c,
\end{align}
the time derivative of the Lyapunov function candidate \eqref{defVS} takes the form
\begin{equation}\label{eqn:vdot}
\dot{V}\leq - \boldsymbol\vartheta^TA\boldsymbol\vartheta,
\end{equation}
where
\begin{equation}\label{eqn:A}
A=
\begin{bmatrix}
a_{11}& a_{12} & 0\\
a_{21} & a_{22}&0\\
0&0&a_{33}
\end{bmatrix}
\end{equation}
with
\begin{align}
a_{11}&=\frac{2\lambda_{\min}\{K_P\}}{1+2\|\tilde{\bm{q}}\|^2},\\
a_{12}&=- \frac{\lambda_{\max}\{K_D\}-k_C\|\dot{\bm{q}}_\star\|}{1+2\|\tilde{\bm{q}}(t)\|^2},\\
a_{21}&=a_{12},\\
a_{22}&=\beta\lambda_{\min}\{K_D\}-4\lambda_{\max}\{M(\bm{q})\}-\frac{k_C}{\sqrt{2}},\\
a_{33}&=\Delta_{\min}.
\end{align}
Therefore, the time derivative in \eqref{eqn:vdot} results in a globally negative definite function because $\beta$ satisfies \eqref{eqn:beta}, which, in turn, implies that matrix $A$ in \eqref{eqn:A} is a positive definite one.
\par Hence, from \eqref{Ve1}, \eqref{Ve2} and \eqref{eqn:vdot}, together with the fact that $\beta$ is selected such that it verifies \eqref{eqn:beta}; and consequently  $A$, $E_{\min}$, and $E_{\max}$ are positive definite. Thus, the Lyapunov function \eqref{defVS} satisfies
\begin{align}
	\label{eq:th2a}
	c_1 \|\boldsymbol{\eta}\|^2 \leq &\, V(t,\boldsymbol{\eta}) \leq  c_2 \|\boldsymbol{\eta}\|^2, \\
	\label{eq:th2b}
	\frac{\partial V}{\partial t} + \frac{\partial V}{\partial \boldsymbol{\eta}}(t, \boldsymbol{\eta})  \leq &\, -c_3 \|\boldsymbol{\eta}\|^2,
\end{align}
where $c_1=\lambda_{\min}\{E_{\min}\}$, $c_2=\lambda_{\max}\{E_{\max}\}$, and $c_3=\lambda_{\min}\{A\}$. Since $c_1$, $c_2$, and $c_3$ are strictly positive constants, by invoking Theorem~4.10 (pp.~154–155) of \cite{khalil2002nonlinear}, global exponential stability of the origin of the closed-loop system \eqref{dyn22} is established; therefore, \eqref{eqn:resultprop33} holds. This  completes the proof of Proposition~\ref{pro5}.
\end{proof}
 \begin{remark}
Note that the main difference between Proposition~\ref{pro4} and Proposition~\ref{pro5} lies {\it solely} in the adaptation law. Specifically, the difference between \eqref{newt11} and \eqref{newt31} is the additional second term on the right-hand side of \eqref{newt31}. This term is necessary to cancel certain components associated with the time derivative of the cross term in the Lyapunov function \eqref{defVS}, which is necessary for strengthening the stability conclusions.
 \end{remark}
\subsection{Tuning guidelines for controller gains}\label{sub:Gains}
 The proposed controllers of Proposition \ref{pro3}-\ref{pro5} include: a proportional term $K_P\tilde{\bm{q}}$ acting on the joint position tracking error, a derivative term $K_D\dot{\tilde{\bm{q}}}$ acting on the joint velocity error, and an integral/adaptive term shaped by the scalar gain $k_I$ multiplying the estimated-parameter contribution. Therefore, classical PID-tuning intuition can be used as a starting point: $K_P$ primarily stiffens the position error dynamics, $K_D$ increases damping and reduces overshoot, and $k_I$ governs the aggressiveness of the adaptive/integral action (faster parameter adaptation versus noise sensitivity) see for instance \cite{yu}. Regarding the LS$+$DREM estimator, the estimator gains can be selected following standard guidelines for least-squares schemes with forgetting factor, as discussed in adaptive control textbooks (see, e.g., \cite{SASBODbook}).
 \begin{figure}[htp!]
\centering
\subfloat[Picture of the robot.]{%
\resizebox*{5cm}{!}{\includegraphics{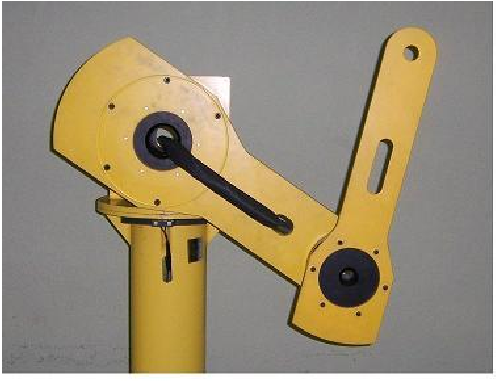}}
}\hspace{5pt}
\subfloat[Sketch of the robot arm.]{%
\resizebox*{5cm}{!}{\includegraphics{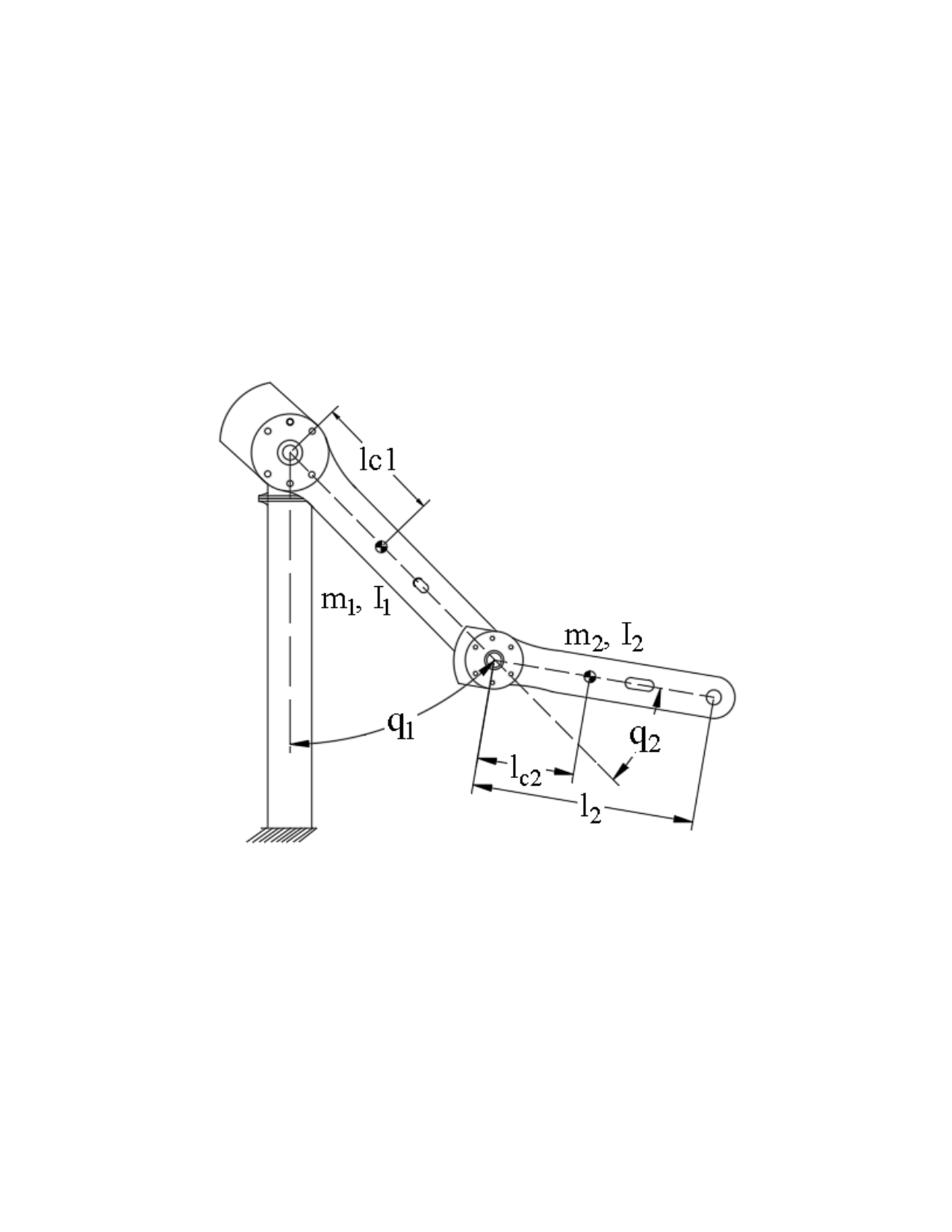}}
}
\caption{Two degrees-of-freedom manipulator arm.} \label{Fig:1to2}
\end{figure}
\section{Experimental Evaluation}\label{secc:ExperimentalEvaluation}
\begin{table*}[h!]
	\caption{Physical parameters of the robot manipulator arm.}
	\label{tab:parrob}
	\begin{center}
		\begin{tabular}{cccc} \hline
			Description  & Notation & Value & Units \\ \hline
			Length of Link 1 & $l_1$ & 0.45 & $\mathrm{m}$ \\
			Length of Link 2 & $l_2$ & 0.45 & $\mathrm{m}$ \\
			Dist. to the com ($l_1$) & $l_{c_1}$ & 0.091 & $\mathrm{m}$  \\
			Dist. to the com ($l_2$) & $l_{c_2}$ & 0.048 & $\mathrm{m}$  \\
			Mass of Link 1 & $m_1$ & 23.902 & $\mathrm{kg}$ \\
			Mass of Link 2 & $m_2$ & 3.88 & $\mathrm{kg}$ \\
			Inert. rel. to com ($l_1$) & $I_1$ & 1.266  & $\mathrm{\frac{kg \ m^2}{rad^2}}$ \\
			Inert. rel. to com ($l_2$) & $I_2$ & 0.093  & $\mathrm{\frac{kg \ m^2}{rad^2}}$ \\
			Gravity acceleration & $g$ & 9.81 & $\mathrm{\frac{m}{s^2}}$ \\
            Viscous coef. rel. to link 1 & $f_{v_1}$ & 2.288 &  $\mathrm{Nm \ s/rad}$ \\
            Viscous coef. rel. to link 2 & $f_{v_2}$ & 0.175 &  $\mathrm{Nm \ s/rad}$ \\ \hline
		\end{tabular}
	\end{center}
\end{table*}

\begin{table}[h!]
	\caption{Dynamic parameters and its units based on the power balance equation parametrization.}
	\label{tab:parunits}
	\begin{center}
		\begin{tabular}{ccc} \hline
			    Parameter        & Value & Units \\
        \hline
			$\theta_1$  & $2.351$ & $\mathrm{\frac{kg \ m^2}{rad^2}}$ \\
			$\theta_2$  & $0.083$ & $\mathrm{\frac{kg \ m^2}{rad^2}}$ \\
			$\theta_3$  & $0.101$ & $\mathrm{\frac{kg \ m^2}{rad^2}}$ \\
			$\theta_4$  & $3.921$ & $\mathrm{kg \  m}$\\
			$\theta_5$  & $0.186$ & $\mathrm{kg \  m}$ \\
            $\theta_6$  & $2.288$ & $\mathrm{Nm \ s/rad}$ \\
            $\theta_7$  & $0.175$ & $\mathrm{Nm \ s/rad}$ \\
            \hline
		\end{tabular}
	\end{center}
\end{table}
Experimental tests upon a two degree-of-freedom robot manipulator have been carried out in order to illustrate the performance of the proposed adaptive controllers. The robot manipulator is a direct-drive arm with two vertical links and has been designed and built at the CICESE Research Center, Mexico \cite{Cicese41,Cicese42}. A picture of robot is shown in Figure \ref{Fig:1to2}a, and its schematic sketch is depicted in Figure \ref{Fig:1to2}b, where $q_1$ and $q_2$ are the joint positions. The physical parameters of the two links are listed in Table \ref{tab:parrob}, where it is summarized the meaning of the parameters involved and their numerical values, where $l_1$ and $l_2$ mean lengths of Link 1 and Link 2, respectively, `Dist.', `Inert.', `coef.' and `com' stand for Distance, Inertia, coefficient and center of mass, respectively. The vector of torque control inputs is denoted by $\boldsymbol\tau$. The dynamic model \eqref{robdyn} can be described with the matrix
\begin{equation}
M(\bm{q}) = \begin{bmatrix}
\theta_1 + 2\theta_2\cos(q_2) & \theta_3 + \theta_2\cos(q_2) \\
\theta_3 + \theta_2\cos(q_2) & \theta_3
\end{bmatrix},
\end{equation}
and the potential energy function
\begin{equation}
U(\bm{q}) = -g\cos(q_1)\theta_4-g\cos(q_1+q_2)\theta_5.
\end{equation}
The explicit dissipative energy function $F_R$ in \eqref{eqn:F} is given by
\begin{equation}
F_R(\dot{\bm{q}})=\frac{1}{2}\dot{q}_1^2\theta_6+\frac{1}{2}\dot{q}_2^2\theta_7.
\end{equation}
%
 The {\it unknown} constant parameters are defined as:
\begin{eqnarray*}
\theta_1&=&m_1l_{c_1}^2+m_2l_1^2+m_2l_{c_2}^2+I_1+I_2,\\
\theta_2&=&l_1m_2l_{c_2},\\
\theta_3&=&m_2l_{c_2}^2+I_2,\\
\theta_4&=&l_{c_1}m_1+m_2l_1,\\
\theta_5&=&m_2l_{c_2},\\
\theta_6&=&f_{v_1},\\
\theta_7&=&f_{v_2},
\end{eqnarray*}
being
\begin{equation}
\boldsymbol\theta=\mathrm{col}\{\theta_1,\theta_2,\theta_3,\theta_4,\theta_5,\theta_6,\theta_7\}.
\end{equation}
The real values of the dynamic parameters considering the measured data of Table \ref{tab:parrob} and the above definition are presented in Table \ref{tab:parunits}. According to Section \ref{subsec31}, and regarding equation \eqref{parmu}, we define
\begin{equation*}
\begin{split}
M_1 := \begin{bmatrix}
1 & 0 \\
0 & 0
\end{bmatrix},
&M_2(q_2) := \cos(q_2)\begin{bmatrix}
2 & 1 \\
1 & 0
\end{bmatrix},\\
&M_3 := \begin{bmatrix}
0 & 1 \\
1 & 1
\end{bmatrix},
\end{split}
\end{equation*}
\begin{equation*}
U_1(\bm{q}) := -g\cos(q_1), \quad U_2(\bm{q}) := -g\cos(q_1+q_2).
\end{equation*}
From the above definitions, it should be obvious to the reader that $\ell = 3$, $r = 2$, $n_q = 2$ and $w = 7$. Thus, the LRE \eqref{newlreel} holds with $y = H(p)[\dot{\bm{q}}^\top \boldsymbol{\tau}]$, the LTI filter $H(p) = \frac{1}{p+\lambda}$, and the regressor vector
\begin{equation}\label{eqn:omegap}
\boldsymbol \Omega = pH(p)\begin{bmatrix}
\frac{1}{2}\dot{q}_1^2 \\
(\dot{q}_1^2+\dot{q}_1\dot{q}_2)\cos(q_2) \\
\frac{1}{2}q_2^2+\dot{q_1}\dot{q_2} \\
-g\cos(q_1) \\
-g\cos(q_1+q_2)\\
\dot{q}_{1}^2\\
\dot{q}_{2}^2
\end{bmatrix}.
\end{equation}

 For the implementation of the LTI filter $H(p)$, we set $\lambda=1$ in all experiments. In the experimental implementation, the joint velocities were computed from the measured joint positions using the two-point backward numerical differentiation method; that is, for joint $i$,
\begin{equation}
\dot{q}_i[k]\approx \frac{q_i[k]-q_i[k-1]}{T_s},
\end{equation}
where $T_s$ denotes the sampling period. Moreover, for all experiments, the initial positions and velocities were fixed as $\bm{q}(0)=\begin{bmatrix}0&0\end{bmatrix}^T$ and $\dot{\bm{q}}(0)=\begin{bmatrix}0&0\end{bmatrix}^T$ respectively. The experiments were executed on a real-time PC--based platform running in a Windows environment called \emph{WinMechLab} \cite{WinMechLab}. The identification algorithm and controllers of Propositions \ref{pro3}-\ref{pro5} were discretized according to the improved Euler method and was executed at a $2.5 \ \mathrm{[ms]}$ sampling period utilizing the MultiQ-PCI data acquisition board of Quanser Consulting Inc.


Regarding the identification procedure of Section \ref{subsec32}, the corresponding tuning gains were set as indicated in Table \ref{tab:LSDpar}. The spectral norm of the matrix $F$, that is, $\|F\|$, was computed using a power iteration algorithm.
\par Moreover, we used the desired joint positions $\bm{q}_{\star}(t)=\col(q_{1\star}(t), q_{2\star}(t))$ as test signals, as in \cite{ControlOfRobots}:
\begin{align}
	\label{eq:qdma}
	\begin{bmatrix}
		q_{1\star}(t) \\
		q_{2\star}(t)
	\end{bmatrix} =
	\begin{bmatrix}
		b_1[1 - e^{-2.0t^3}] + c_1[1 - e^{-2.0t^3}]\sin(\omega_1t)  \\
		b_2[1 - e^{-1.8t^3}] + c_2[1 - e^{-1.8t^3}]\sin(\omega_2t)
	\end{bmatrix},
\end{align}
where $b_1=0.78 \ \mathrm{[rad]}$, $c_1=0.17 \ \mathrm{[rad]}$, $\omega_1=15 \ \mathrm{[rad/s]}$, $b_2=1.04 \ \mathrm{[rad]}$, $c_2=2.18 \ \mathrm{[rad]}$ and $\omega_2=3.5 \mathrm{[rad/s]}$. The explicit control law of Proposition \ref{pro3} defined in \eqref{newt0} is given by
\begin{equation}\label{eqn:controlawProp4}
\boldsymbol\tau=
\begin{bmatrix}
k_I \Sigma^7_{i=1} {Y}_{1i}\hat{\theta}_i -k_{d_1}s_1-k_{p_1}\tilde{q}_1\\
k_I \Sigma^7_{i=1} {Y}_{2i}\hat{\theta}_i-k_{d_2}s_2-k_{p_2}\tilde{q}_2\\
\end{bmatrix},
\end{equation}
where $K_p=\mathrm{diag}\{k_{p_1},k_{p_2}\}$, $K_d=\mathrm{diag}\{k_{d_1},k_{d_2}\}$, $K_s=\diag\{k_s,k_s\}$, $s_1=\dot{\tilde{q}}_1+k_s\tilde{q}_1$, and $s_2=\dot{\tilde{q}}_2+k_s\tilde{q}_2$. Moreover, $Y_{ij}$ denotes the $ij$ element of the regressor $Y(\bm{q},\dot{\bm{q}},\dot{\bm{q}}_r,\ddot{\bm{q}}_r)\in\mathbb{R}^{2\times 7}$, which is given in~\eqref{eqn:Y}, where $\dot{q}_{r_1}=\dot{q}_{1\star}-k_s\tilde{q}_1$ and $\dot{q}_{r_2}=\dot{q}_{2\star}-k_s\tilde{q}_2$.
\begin{figure*}
 \begin{equation}\label{eqn:Y}
 \begin{aligned}
Y=
\begin{bmatrix}
\ddot{q}_{r_1}&\cos(q_2)(2\ddot{q}_{r_1}+\ddot{q}_{r_2})-\sin(q_2)\big(\dot{q}_2\dot{q}_{r_1}+\dot{q}_{r_2}(\dot{q}_1+\dot{q}_2)\big)&\ddot{q}_{r_2}&g\sin(q_1)&g\sin(q_1+q_2)&\dot{q}_1^2&0\\
0&\ddot{q}_{r_1}\cos(q_2)+\sin(q_2)\dot{q}_1\dot{q}_{r_1}&\ddot{q}_{r_1}+\ddot{q}_{r_2}&0&g\sin(q_1+q_2)&0&\dot{q}_2^2
\end{bmatrix}
\end{aligned}
\end{equation}
\end{figure*}
By selecting $\Gamma=\mathrm{diag}\{\gamma_1,\ldots,\gamma_7\}$, with $\gamma_i$ strictly positive constants, the adaptation law \eqref{newt00} is explicitly given by
\begin{equation*}
\begin{split}
&\begin{bmatrix}\hat{\theta}_1 \\ \hat{\theta}_2 \\ \hat{\theta}_3 \\ \hat{\theta}_4 \\ \hat{\theta}_5 \\ \hat{\theta}_6 \\ \hat{\theta}_7\end{bmatrix}=
-k_I\int_{0}^t
\begin{bmatrix}
\gamma_1(Y_{11}(\sigma)s_1(\sigma)+Y_{21}(\sigma)s_2(\sigma))\\
\gamma_2(Y_{12}(\sigma)s_1(\sigma)+Y_{22}(\sigma)s_2(\sigma))\\
\gamma_3(Y_{13}(\sigma)s_1(\sigma)+Y_{23}(\sigma)s_2(\sigma))\\
\gamma_4(Y_{14}(\sigma)s_1(\sigma)+Y_{24}(\sigma)s_2(\sigma))\\
\gamma_5(Y_{15}(\sigma)s_1(\sigma)+Y_{25}(\sigma)s_2(\sigma))\\
\gamma_6(Y_{16}(\sigma)s_1(\sigma)+Y_{26}(\sigma)s_2(\sigma))\\
\gamma_7(Y_{17}(\sigma)s_1(\sigma)+Y_{27}(\sigma)s_2(\sigma))\\
\end{bmatrix}d\sigma\\
&+
\Delta \int_{0}^t
\begin{bmatrix}
\gamma_1(\caly_1 (\sigma)-k_I\Delta(\sigma)\hat{\theta}_1(\sigma))\\
\gamma_2(\caly_2 (\sigma)-k_I\Delta(\sigma)\hat{\theta}_2(\sigma))\\
\gamma_3(\caly_3 (\sigma)-k_I\Delta(\sigma)\hat{\theta}_3(\sigma))\\
\gamma_4(\caly_4 (\sigma)-k_I\Delta(\sigma)\hat{\theta}_4(\sigma))\\
\gamma_5(\caly_5 (\sigma)-k_I\Delta(\sigma)\hat{\theta}_5(\sigma))\\
\gamma_6(\caly_6 (\sigma)-k_I\Delta(\sigma)\hat{\theta}_6(\sigma))\\
\gamma_7(\caly_7 (\sigma)-k_I\Delta(\sigma)\hat{\theta}_7(\sigma))\\
\end{bmatrix}d\sigma+
\begin{bmatrix}\hat{\theta}_1(0) \\ \hat{\theta}_2(0) \\ \hat{\theta}_3(0) \\ \hat{\theta}_4(0) \\ \hat{\theta}_5(0)\\ \hat{\theta}_6(0) \\ \hat{\theta}_7(0) \end{bmatrix}.
\end{split}
\end{equation*}

%
\par Regarding Proposition \ref{pro4}, the explicit control law of equation \eqref{newt1} is given by:
\begin{equation}\label{eqn:controlaw4}
\boldsymbol\tau=
\begin{bmatrix}
k_I \Sigma^7_{i=1} Y_{n_{1i}} \hat \theta_i -k_{d_1}\dot{\tilde{q}}_1-k_{p_1}\tilde{q}_1\\
k_I \Sigma^7_{i=1} Y_{n_{2i}} \hat \theta_i-k_{d_2}\dot{\tilde{q}}_2-k_{p_2}\tilde{q}_2\\
\end{bmatrix},
\end{equation}
where $K_p=\mathrm{diag}\{k_{p_1},k_{p_2}\}$ and $K_d=\mathrm{diag}\{k_{d_1},k_{d_2}\}$. Moreover, $Y_{n_{ij}}$ denotes the $ij$ element of the regressor $Y_n(\bm{q},\dot{\bm{q}},\dot{\bm{q}}_\star,\ddot{\bm{q}}_\star)\in\mathbb{R}^{2\times 7}$, which is defined in~\eqref{Yn} and given explicitly in \eqref{eqn:Yn}.
\begin{figure*}
 \begin{equation}\label{eqn:Yn}
\begin{aligned}
Y_n=
\begin{bmatrix}
\ddot{q}_{1\star}&\cos(q_2)(2\ddot{q}_{1\star}+\ddot{q}_{2\star})-\sin(q_2)\big(\dot{q}_2\dot{q}_{1\star}+\dot{q}_{2\star}(\dot{q}_1+\dot{q}_2)\big)&\ddot{q}_{2\star}&g\sin(q_1)&g\sin(q_1+q_2)&\dot{q}_1^2&0\\
0&\ddot{q}_{1\star}\cos(q_2)+\sin(q_2)\dot{q}_1\dot{q}_{1\star}&\ddot{q}_{1\star}+\ddot{q}_{2\star}&0&g\sin(q_1+q_2)&0&\dot{q}_2^2
\end{bmatrix}
\end{aligned}
\end{equation}
\end{figure*}

\begin{table}[ht!]
	\caption{Least squares plus dynamic regression extension (LS+DREM) parameters.}
	\label{tab:LSDpar}
	\begin{center}
		\begin{tabular}{c|c}
          \hline
			  Parameter & Value \\
        \hline
           $\alpha$&$50$\\
           $f_0$&$30$\\
           $\beta_0$&$0.001$\\
           $\rho_0$&$20000$\\
           $\mu_0$&$0.08\mathrm{col} \{ 1,1,1,1,1,1,1 \}$ \\
            \hline
		\end{tabular}
	\end{center}
\end{table}
By selecting $\Gamma=\mathrm{diag}\{\gamma_1,\ldots,\gamma_7\}$, with $\gamma_i$ being strictly positive constants, the adaptation law \eqref{newt11} is given by:
\begin{equation*}
\begin{split}
&\begin{bmatrix}\hat{\theta}_1 \\ \hat{\theta}_2 \\ \hat{\theta}_3 \\ \hat{\theta}_4 \\ \hat{\theta}_5\\ \hat{\theta}_6\\ \hat{\theta}_7\end{bmatrix}=
-k_I\int_{0}^t
\begin{bmatrix}
\gamma_1({Y_n}_{11}(\sigma)\dot{\tilde{q}}_1(\sigma)+{Y_n}_{21}(\sigma)\dot{\tilde{q}}_2(\sigma))\\
\gamma_2({Y_n}_{12}(\sigma)\dot{\tilde{q}}_1(\sigma)+{Y_n}_{22}(\sigma)\dot{\tilde{q}}_2(\sigma))\\
\gamma_3({Y_n}_{13}(\sigma)\dot{\tilde{q}}_1(\sigma)+{Y_n}_{23}(\sigma)\dot{\tilde{q}}_2(\sigma))\\
\gamma_4({Y_n}_{14}(\sigma)\dot{\tilde{q}}_1(\sigma)+{Y_n}_{24}(\sigma)\dot{\tilde{q}}_2(\sigma))\\
\gamma_5({Y_n}_{15}(\sigma)\dot{\tilde{q}}_1(\sigma)+{Y_n}_{25}(\sigma)\dot{\tilde{q}}_2(\sigma))\\
\gamma_6({Y_n}_{16}(\sigma)\dot{\tilde{q}}_1(\sigma)+{Y_n}_{26}(\sigma)\dot{\tilde{q}}_2(\sigma))\\
\gamma_7({Y_n}_{17}(\sigma)\dot{\tilde{q}}_1(\sigma)+{Y_n}_{27}(\sigma)\dot{\tilde{q}}_2(\sigma))\\
\end{bmatrix}d\sigma\\
&+
\Delta \int_{0}^t
\begin{bmatrix}
\gamma_1(\caly_1 (\sigma)-k_I\Delta(\sigma)\hat{\theta}_1(\sigma))\\
\gamma_2(\caly_2 (\sigma)-k_I\Delta(\sigma)\hat{\theta}_2(\sigma))\\
\gamma_3(\caly_3 (\sigma)-k_I\Delta(\sigma)\hat{\theta}_3(\sigma))\\
\gamma_4(\caly_4 (\sigma)-k_I\Delta(\sigma)\hat{\theta}_4(\sigma))\\
\gamma_5(\caly_5 (\sigma)-k_I\Delta(\sigma)\hat{\theta}_5(\sigma))\\
\gamma_6(\caly_6 (\sigma)-k_I\Delta(\sigma)\hat{\theta}_6(\sigma))\\
\gamma_7(\caly_7 (\sigma)-k_I\Delta(\sigma)\hat{\theta}_7(\sigma))\\
\end{bmatrix}d\sigma+\begin{bmatrix}\hat{\theta}_1(0) \\ \hat{\theta}_2(0) \\ \hat{\theta}_3(0) \\ \hat{\theta}_4(0) \\ \hat{\theta}_5(0)\\ \hat{\theta}_6(0) \\ \hat{\theta}_7(0)\end{bmatrix}.
\end{split}
\end{equation*}
\par Regarding Proposition \ref{pro5}, the explicit control law of \eqref{newt11} is given by \eqref{eqn:controlaw4}, where $Y_n$ is given in \eqref{eqn:Yn}. By selecting $\Gamma=\mathrm{diag}\{\gamma_1,\ldots,\gamma_7\}$, with $\gamma_i$ being strictly positive constants, the adaptation law \eqref{newt31} is explicitly given by
\begin{equation*}
\begin{split}
&\begin{bmatrix}\hat{\theta}_1 \\ \hat{\theta}_2 \\ \hat{\theta}_3 \\ \hat{\theta}_4 \\ \hat{\theta}_5\\ \hat{\theta}_6\\ \hat{\theta}_7\end{bmatrix}=
-k_I\int_{0}^t
\begin{bmatrix}
\gamma_1\Big({Y_n}_{11}(\sigma) \varphi_1 +{Y_n}_{21}(\sigma)\varphi_2 \Big)\\
\gamma_2\Big({Y_n}_{12}(\sigma) \varphi_1 +{Y_n}_{22}(\sigma) \varphi_2 \Big)\\
\gamma_3\Big({Y_n}_{13}(\sigma) \varphi_1 +{Y_n}_{23}(\sigma) \varphi_2 \Big)\\
\gamma_4\Big({Y_n}_{14}(\sigma) \varphi_1 +{Y_n}_{24}(\sigma) \varphi_2 \Big)\\
\gamma_5\Big({Y_n}_{15}(\sigma) \varphi_1+{Y_n}_{25}(\sigma) \varphi_2 \Big)\\
\gamma_6\Big({Y_n}_{16}(\sigma) \varphi_1+{Y_n}_{26}(\sigma) \varphi_2 \Big)\\
\gamma_7\Big({Y_n}_{17}(\sigma) \varphi_1+{Y_n}_{27}(\sigma) \varphi_2 \Big)
\end{bmatrix}d\sigma\\
&+
\Delta \int_{0}^t
\begin{bmatrix}
\gamma_1(\caly_1 (\sigma)-k_I\Delta(\sigma)\hat{\theta}_1(\sigma))\\
\gamma_2(\caly_2 (\sigma)-k_I\Delta(\sigma)\hat{\theta}_2(\sigma))\\
\gamma_3(\caly_3 (\sigma)-k_I\Delta(\sigma)\hat{\theta}_3(\sigma))\\
\gamma_4(\caly_4 (\sigma)-k_I\Delta(\sigma)\hat{\theta}_4(\sigma))\\
\gamma_5(\caly_5 (\sigma)-k_I\Delta(\sigma)\hat{\theta}_5(\sigma))\\
\gamma_6(\caly_6 (\sigma)-k_I\Delta(\sigma)\hat{\theta}_6(\sigma))\\
\gamma_7(\caly_7 (\sigma)-k_I\Delta(\sigma)\hat{\theta}_7(\sigma))\\
\end{bmatrix}d\sigma+
\begin{bmatrix}\hat{\theta}_1(0) \\ \hat{\theta}_2(0) \\ \hat{\theta}_3(0) \\ \hat{\theta}_4(0) \\ \hat{\theta}_5(0)\\ \hat{\theta}_6(0) \\ \hat{\theta}_7(0)\end{bmatrix}.
\end{split}
\end{equation*}
with $\varphi_1(\dot {\tilde q}, \tilde q, \sigma)=\beta\dot{\tilde{q}}_1(\sigma)+ \frac{2\tilde{q}_1(\sigma)}{1+2\|\tilde{\bm{q}}(\sigma)\|^2}$ and $\varphi_2(\dot {\tilde q}, \tilde q, \sigma)=\beta\dot{\tilde{q}}_2(\sigma)+ \frac{2\tilde{q}_1(\sigma)}{1+2\|\tilde{\bm{q}}(\sigma)\|^2}$.


\par The controller gains of Propositions \ref{pro3}--\ref{pro5} were initially chosen following the tuning guidelines of Subsection \ref{sub:Gains}, while seeking to minimize the three performance indices presented in Subsection \ref{sub:performance}. Another requirement during the tuning process was to operate within a safe range of the maximum allowable torque of both actuators, which are $\pm 150 \ \mathrm{[Nm]}$ for the joint 1 actuator and $\pm 15 \ \mathrm{[Nm]}$ for the joint 2 actuator. The obtained gains are listed in Table \ref{tab:AdaptiveGains}. Note that, when applicable, the same gains were used in the three controllers to ensure a fair comparison.

\par Regarding the selection of the gain $\beta$ for the controller of Proposition \ref{pro5}, we first compute the values $\beta_1$, $\beta_2$, and $\beta_3$ from \eqref{eqn:beta}, such that $\beta>\max\{\beta_1,\beta_2,\beta_3\}$. To this end, as reported in \cite{Cicese41}--\cite{Cicese42} for the robot of Figure \ref{Fig:1to2}, $\lambda_{\max}\{M\}=5.03$, $\lambda_{\min}\{M\}=0.087$, and $k_C=0.336$. Moreover, the upper bound of the time derivative of the desired trajectory \eqref{eq:qdma} is $\| \dot{\bm{q}}_\star(t) \|=8.1$. Considering the gains $K_p$ and $K_D$ listed in Table \ref{tab:AdaptiveGains} yields $\beta_1=2.41$, $\beta_2=1.41$, and $\beta_3=2.076$. To satisfy \eqref{eqn:beta}, we set $\beta=3$.

\begin{table}[h]
	\caption{Parameters of the controllers in Propositions~\ref{pro3}--\ref{pro5}, as well as of the PD$+$AC controller in~\eqref{eqn:tauslotineli1}--\eqref{eqn:tauslotineli2}.}
	\label{tab:AdaptiveGains}
\begin{center}
 \begin{threeparttable}
		\begin{tabular}{ccc} \hline
			            & Value & Units \\
        \hline
			$k_{p_1}$  & $500$ & $\mathrm{Nm/rad}$ \\
			$k_{p_2}$  & $200$ & $\mathrm{Nm/rad}$ \\
			$k_{d_1}$  & $10$ & $\mathrm{Nm \ s/rad}$ \\
			$k_{d_2}$  & $10$ & $\mathrm{Nm \ s/rad}$ \\
            $k_I$      & $0.75$ & $-$ \\
            $k_{s_1}^{\ast\ast}$      & $3$ & $-$\\
            $k_{s_2}^{\ast\ast}$      & $3$ & $-$\\
			$\gamma_1$  & $0.1$ & $\mathrm{kg \ s^2/m^2}$  \\
            $\gamma_2$  & $0.025$ & $\mathrm{kg \ s^2/m^2}$  \\
			$\gamma_3$  & $0.1$ & $\mathrm{kg \ s^2/m^2}$  \\
            $\gamma_4$  & $0.5$ & $\mathrm{kg \ s^2/m^2}$  \\
            $\gamma_5$  & $0.1$ & $\mathrm{kg \ s^2/m^2}$  \\
            $\gamma_6$  & $0.5$ & $\mathrm{kg \ s^2/m^2}$  \\
            $\gamma_7$  & $0.075$ & $\mathrm{kg \ s^2/m^2}$  \\
            $\hat{\boldsymbol\theta}(0)$ & $\bm{0}_7$ & $\ast$ \\
            \hline
		\end{tabular}

\begin{tablenotes}
            \item $\ast$ \textit{As indicated in Table \ref{tab:parunits}}.
            \item $\ast\ast$ \textit{Only for controller of Proposition 3}.
        \end{tablenotes}
    \end{threeparttable}
    \end{center}
\end{table}
\begin{table}[h]
	\caption{Parameters of the CLAC of \eqref{eqn:leycontrol}-\eqref{eqn:leyadaptacion}.}
	\label{tab:AdaptiveGains2}
\begin{center}
 \begin{threeparttable}
	
		\begin{tabular}{ccc} \hline
			            & Value & Units \\
        \hline
            $\varsigma$ & $1$ & $-$\\
			$\varrho_{11}$  & $13.3$ & $-$ \\
			$\varrho_{22}$  & $50$ & $-$ \\
			$\kappa$  & $0.000001$ & $-$ \\
             $k_{c_1}$      & $150$ & $\mathrm{Nm \ s/rad}$ \\
            $k_{c_2}$      & $15$ & $\mathrm{Nm \ s/rad}$ \\
			$\gamma_1$  & $0.05$ & $\mathrm{kg \ s^2/m^2}$  \\
            $\gamma_2$  & $0.01$ & $\mathrm{kg \ s^2/m^2}$  \\
			$\gamma_3$  & $0.01$ & $\mathrm{kg \ s^2/m^2}$  \\
            $\gamma_4$  & $0.5$ & $\mathrm{kg \ s^2/m^2}$  \\
            $\gamma_5$  & $0.05$ & $\mathrm{kg \ s^2/m^2}$  \\
            $\gamma_6$  & $0.5$ & $\mathrm{kg \ s^2/m^2}$  \\
            $\gamma_7$  & $0.05$ & $\mathrm{kg \ s^2/m^2}$  \\
            $\hat{\boldsymbol\theta}(0)$ & $\bm{0}_7$ & $\ast$ \\
            $\tau_d$ & $2$ & $\mathrm{s}$\\
            $c_w$ & $10$ & $-$ \\
            \hline
		\end{tabular}
\begin{tablenotes}
            \item $\ast$ \textit{As indicated in Table \ref{tab:parunits}}.
        \end{tablenotes}
    \end{threeparttable}
    \end{center}
\end{table}
\subsection{A comparative study}
In this section, we present two controllers from the literature that are used to perform a comparative study against our proposals, namely, the composite learning-based trajectory-tracking controller of \cite{yongpinpan} and the classical direct adaptive PD$+$adaptive-compensation controller of \cite{SLOLItac}. As stated in the introduction, the adaptive trajectory-tracking composite learning-based controller of \cite{yongpinpan} ensures exponential convergence to zero of the joint position error, the velocity error, and the estimation error, that is, the difference between the estimated unknown parameters and their true values, without imposing the PE condition. Instead, it relies on the relaxed IE condition. Therefore, it ensures essentially the same stability conclusions as the controllers of Propositions \ref{pro3}--\ref{pro5}, making it an ideal candidate for comparison with an up-to-date controller from the literature.


\par On the other hand, the classical \emph{direct} adaptive trajectory-tracking controller of \cite{SLOLItac} ensures convergence to zero of the joint position and velocity errors. However, convergence to zero of the parametric error is guaranteed only if the system dynamics regressor satisfies the PE condition. Thus, to illustrate the importance of the excitation condition, we experimentally evaluate the controller of \cite{SLOLItac} and compare its results with those obtained using the controllers of Propositions \ref{pro3}--\ref{pro5}. Particular emphasis is placed on the comparison with the controller of Proposition \ref{pro3}, since both controllers have almost the same structure. The results show that the proposed improvement not only enhances the tracking performance, but also improves the convergence of the parameter estimates.

\subsubsection{A comparison with a composite learning-based adaptive controller}
The \textit{composite learning adaptive controller} (CLAC) presented in \cite{yongpinpan} is given by the control law:
\begin{eqnarray}
  \label{eqn:leycontrol}
  \boldsymbol\tau&=&K_c\bm{e}_f+\Phi^T\hat{\boldsymbol\theta},\\
   \label{eqn:leyadaptacion}
  \dot{\hat{\boldsymbol\theta}}&=&\Gamma\mathcal{P}\big(\Phi^T\bm{e}_f+\kappa\boldsymbol\epsilon\big),
\end{eqnarray}
with $\bm{e}_f:=\dot{\tilde{\bm{q}}}+\Lambda\tilde{\bm{q}}$, where $\Lambda\in\mathbb{R}^{n_{q}\times n_{q}}$ is a diagonal positive definite matrix, $\tilde{\bm{q}}(t)=\bm{q}_\star(t)-\bm{q}(t)$, and $\bm{q}_\star(t)$ is the desired tracking trajectory. Moreover, $\Phi(\bm{q},\dot{\bm{q}},\bm{v},\dot{\bm{v}})\in\mathbb{R}^{n_q \times w}$ is obtained by invoking Property \ref{prop1}, with $\Phi=Y$, where $\bm{v}=\dot{\bm{q}}_\star+\Lambda\tilde{\bm{q}}$. The matrix $K_c\in\mathbb{R}^{n_{q}\times n_{q}}$ is constant, diagonal, and positive definite; $\kappa\in\mathbb{R}$ is a strictly positive constant; and $\Gamma:=\mathrm{diag}\{\gamma_{1},\ldots,\gamma_{w}\}$, with $\gamma_{i}\in\mathbb{R}$ strictly positive constants for $i\in\{1,\ldots,w\}$. Finally, $\mathcal{P}\big(\Phi^T\bm{e}_f+\kappa\boldsymbol\epsilon\big)$ is the projection operator defined as:
\begin{equation}\label{eqn:proyeccion}
\begin{split}
&\mathcal{P}\big(\Phi^T\bm{e}_f+\kappa\boldsymbol\epsilon\big):=\\
  &\begin{dcases}
\Phi^T\bm{e}_f+\kappa\boldsymbol\epsilon &  \text{if }\| \hat{\boldsymbol\theta}\|<c_w , \\
\Phi^T\bm{e}_f+\kappa\boldsymbol\epsilon & \begin{multlined} \text{if }\|\hat{\boldsymbol\theta}\|=c_w \text{ and } \\ \hat{\boldsymbol\theta}^T\big[\Phi^T\bm{e}_f+\kappa\boldsymbol\epsilon \big]\leq 0, \end{multlined}\\
\!\begin{multlined}
\big[\Phi^T\bm{e}_f+\kappa\boldsymbol\epsilon \big] \\ -\frac{\hat{\boldsymbol\theta}\hat{\boldsymbol\theta}^T\big[\Phi^T\bm{e}_f+\kappa\boldsymbol\epsilon \big]}{\|\hat{\boldsymbol\theta}\|^2}
\end{multlined} &
  \text{if }\| \hat{\boldsymbol\theta}\|>c_w, \\
\!\begin{multlined}
\big[\Phi^T\bm{e}_f+\kappa\boldsymbol\epsilon \big] \\ -\frac{\hat{\boldsymbol\theta}\hat{\boldsymbol\theta}^T\big[\Phi^T\bm{e}_f+\kappa\boldsymbol\epsilon \big]}{\|\hat{\boldsymbol\theta}\|^2}
\end{multlined}& \begin{multlined}\text{if }\|\hat{\boldsymbol\theta}\|=c_w \text{ and } \\ \hat{\boldsymbol\theta}^T\big[\Phi^T\bm{e}_f+\kappa\boldsymbol\epsilon \big]> 0,\end{multlined}
    \end{dcases}
    \end{split}
\end{equation}
with $c_w\in\mathbb{R}^{+}$ a known constant and $\boldsymbol\epsilon\in\mathbb{R}^{w}$ is the prediction error given by:
  \begin{equation}\label{eqn:eps}
\boldsymbol\epsilon(t)=
  \begin{dcases}
\boldsymbol{y}(t_e)-\Theta(t_e)\hat{\boldsymbol\theta}(t)&   \text{for } t\geq T_e,\\
\boldsymbol{0}_w& \text{for } 0\leq t<T_e,
    \end{dcases}
\end{equation}
where:
\begin{equation}\label{eqn:yCL}
\boldsymbol{y}:=\int_{t-\tau_d}^t\Phi_f(\tau)\boldsymbol\tau_f(\tau) d_{\tau}\in\mathbb{R}^{w},
\end{equation}
\begin{equation}\label{eqn:theta}
\Theta:=\int_{t-\tau_d}^t\Phi_f(\tau)\Phi^T_f(\tau) d_{\tau}\in\mathbb{R}^{w\times w},
\end{equation}
with $\Phi_f(\bm{q},\dot{\bm{q}}):=\varsigma e^{-\varsigma t}\ast\chi(\bm{q},\dot{\bm{q}},\dot{\bm{q}},\ddot{\bm{q}})$, where $\varsigma\in\mathbb{R}^{+}$ is a design constant, and $\chi(\bm{q},\dot{\bm{q}},\dot{\bm{q}},\ddot{\bm{q}})$ is obtained by invoking Property \ref{prop1}, with $\chi=Y$. Here, ``$\ast$'' denotes the convolution operator, such that $\boldsymbol\tau_f:=\varsigma e^{-\varsigma t}\ast \boldsymbol\tau$, and $T_e$ denotes the epoch at which $\Theta$ in \eqref{eqn:theta} satisfies the IE condition, that is, the time instant at which $\Theta(T_e)\geq\sigma I_{w\times w}$. The constant $\sigma\in\mathbb{R}$ is determined by the minimum singular value of the matrix $\boldsymbol\Theta$ in \eqref{eqn:theta}, and $t_e(t):=\mathrm{arg \ max}\{ \sigma(\tau)\}_{\tau\in[T_e,t]}$.
\par For the robot of Figure \ref{Fig:1to2} considering \eqref{thesthe}, the explicit control law of the CLAC of \eqref{eqn:leycontrol} is given by
\begin{equation}
\boldsymbol\tau=
\begin{bmatrix}
k_{c_1}e_{f_1}+\Sigma_{i=1}^{7}\phi_{1i}\hat{\theta}_i\\
k_{c_2}e_{f_2}+\Sigma_{i=1}^{7}\phi_{2i}\hat{\theta}_i
\end{bmatrix},
\end{equation}
where $e_{f_1}=\dot{\tilde{q}}_1+\varrho_{11}\tilde{q}_1$ and $e_{f_2}=\dot{\tilde{q}}_2+\varrho_{22}\tilde{q}_2$, with $\Lambda=\mathrm{diag}\{\varrho_{11},\varrho_{22}\}$, where $\varrho_{11},\varrho_{22}\in\mathbb{R}$ are strictly positive constants. Moreover, $K_c=\mathrm{diag}\{k_{c_1},k_{c_2}\}$, and the matrix $\Phi\in\mathbb{R}^{7\times 2}$ is given by \eqref{eqn:Phi}, where $v_1=\dot{q}_{1_\star}+\lambda_{11}\tilde{q}_1$ and $v_2=\dot{q}_{2_\star}+\lambda_{22}\tilde{q}_2$. The adaptation law in \eqref{eqn:leyadaptacion} can be computed by evaluating the projection operator \eqref{eqn:proyeccion}. To this end, the matrix $\Phi_f$, explicitly given in \eqref{eqn:Phif}, is required to evaluate the signal $\epsilon$ in \eqref{eqn:eps}.


\begin{figure*}[h!]
 \begin{equation}\label{eqn:Phi}
\begin{aligned}
\Phi^T=
\begin{bmatrix}
v_1&2v_1\cos(q_2)+v_2\cos(q_2)-\dot{v}_1\sin(q_2)\dot{q}_2-\dot{v}_2\sin(q_2)(\dot{q}_1+\dot{q}_2)&v_2&g\sin(q_1)&g\sin(q_1+q_2)&\dot{q}_1^2&0\\
0&v_1\cos(q_2)+\dot{v}_1\sin(q_2)\dot{q}_1&v_1+v_2&0&g\sin(q_1+q_2)&0&\dot{q}_2^2
\end{bmatrix}
\end{aligned}
\end{equation}
\end{figure*}

 \begin{figure*}[h!]
 \begin{equation}\label{eqn:Phif}
\begin{aligned}
\Phi_f^T=
\begin{bmatrix}
\frac{\varsigma p}{p+\varsigma}[\dot{q}_1] &  \frac{\varsigma p}{p+\varsigma}\big[(2\dot{q}_1+\dot{q}_2)\cos(q_2) \big] & \frac{\varsigma p}{p+\varsigma}[\dot{q}_2] & \frac{\varsigma }{p+\varsigma}[g\sin(q_1)] & \frac{\varsigma }{p+\varsigma}[g\sin(q_1+q_2)] & \frac{\varsigma }{p+\varsigma}[\dot{q}_1^2] & 0 \\
0 &  \frac{\varsigma p}{p+\varsigma}[\dot{q}_1\cos(q_2)] +\frac{\varsigma }{p+\varsigma}[\dot{q}_1^2\sin(q_2)] & \frac{\varsigma p}{p+\varsigma}[\dot{q}_1+\dot{q_2}] & 0 & \frac{\varsigma }{p+\varsigma}[g\sin(q_1+q_2)] & 0 & \frac{\varsigma }{p+\varsigma}[\dot{q}_2^2]
\end{bmatrix}
\end{aligned}
\end{equation}

\end{figure*}
Table \ref{tab:AdaptiveGains2} lists the CLAC controller gains from \eqref{eqn:leycontrol} and \eqref{eqn:leyadaptacion}, which were selected to minimize the three performance indices presented in Subsection \ref{sub:performance} considering also being within a safe range of the maximum allowable torque of both actuators. Thus, the same tuning criteria used for the controllers of Propositions \ref{pro3}--\ref{pro5} were followed.
 \begin{figure}[!h]
\centering
\subfloat[]{%
\resizebox*{7.5cm}{!}{\includegraphics{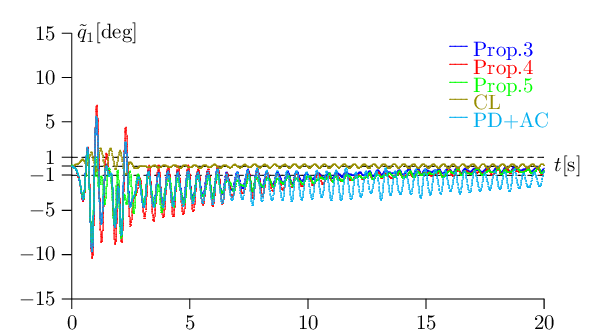}}}\\
\subfloat[]{%
\resizebox*{7.5cm}{!}{\includegraphics{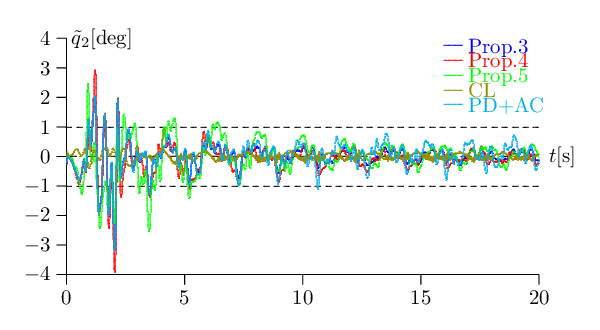}}}
\caption{Time evolution of the joint position tracking errors $\tilde{q}_1(t)$ and $\tilde{q}_2(t)$.} \label{Fig:2}
\end{figure}
\begin{figure}[!h]
\centering
\subfloat[]{%
\resizebox*{7.5cm}{!}{\includegraphics{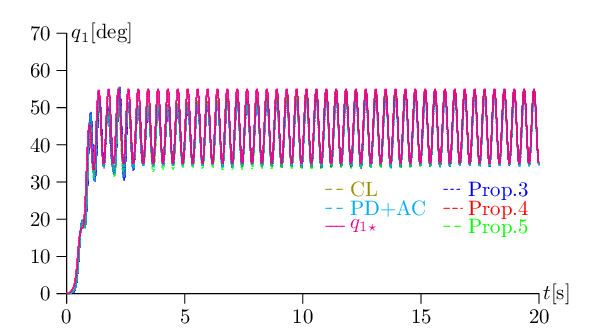}}}\\
\subfloat[]{%
\resizebox*{7.5cm}{!}{\includegraphics{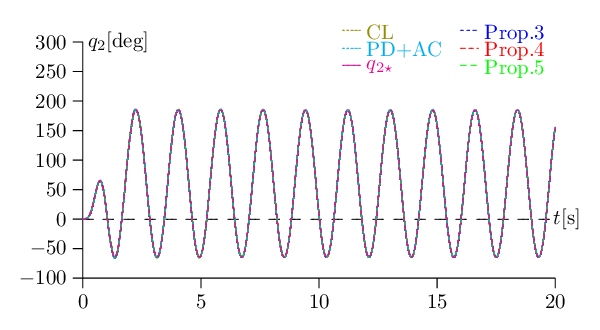}}}
\caption{Time evolution of the joint position ${q}_1(t)$ and ${q}_2(t)$ and the desired joint trajectories given by \eqref{eq:qdma}.} \label{Fig:3}
\end{figure}
\subsubsection{A comparison with the classical PD control plus adaptive compensation}
The \textit{direct adaptive version} of the PD control plus adaptive compensation originally presented in \cite{SLOLItac} is given by:
\begin{eqnarray}
\label{eqn:tauslotineli1}
\boldsymbol\tau&=&k_IY(\bm{q},\dot{\bm{q}},\dot{\bm{q}}_r, \ddot{\bm{q}}_r)\hat{\theta}-K_D\bm{s}-K_p\tilde{\bm{q}},\\
\label{eqn:tauslotineli2}
\hat{\boldsymbol\theta}(t)&=&-k_I\Gamma\int_{0}^tY(\sigma)^Ts(\sigma)d\sigma
\end{eqnarray}
where $Y(\bm{q},\dot{\bm{q}},\dot{\bm{q}}_r, \ddot{\bm{q}}_r)$ is obtained from \eqref{eqn:phi2}, $\bm{s}$ and $\dot{\bm{q}}_r$ are defined in \eqref{SLerrors}, $K_p,K_D\in\mathbb{R}^{{n_q}\times {n_q}}$ are diagonal positive definite matrices, $k_I>0$, and $\Gamma\in\mathbb{R}^{w\times w}$ is positive definite matrix. As stated in \cite{SLOLItac}, the resultant closed-loop system obtained from control law \eqref{eqn:tauslotineli1}-\eqref{eqn:tauslotineli2} in \eqref{elsys} verifies that:
\begin{equation}
\lim_{t\rightarrow \infty}\begin{bmatrix}\tilde{\bm{q}}(t)\\ \dot{\tilde{\bm{q}}}(t) \end{bmatrix}=\bm{0}_{2n},
\end{equation}
while if $Y(\cdot)$ does not satisfy the PE condition, then $\tilde{\boldsymbol\theta}$ remains bounded, that is, $\tilde{\boldsymbol\theta}\in\mathcal{L}_{\infty}$.
\par Table~\ref{tab:AdaptiveGains} lists the gains of the PD plus adaptive-compensation controller in~\eqref{eqn:tauslotineli1}--\eqref{eqn:tauslotineli2}. These gains were selected to minimize the three performance indices presented in Subsection \ref{sub:performance}, while also operating within a safe range of the maximum allowable torque of both actuators. Thus, the same tuning criteria used for the controllers of Propositions \ref{pro3}--\ref{pro5} and the CLAC controller \eqref{eqn:leycontrol}--\eqref{eqn:leyadaptacion} were followed.

 Note that, when possible, the same gain values were used for the controllers in Propositions~\ref{pro3}--\ref{pro5} and for the PD plus adaptive-compensation controller.
 \par The results of all experiments are presented in Figures \ref{Fig:2}-\ref{Fig:8}. All transient signals related to the experiment using the controller of Proposition \ref{pro3} are indicated in $\textcolor{blue}{blue}$ and labeled as $Prop.3$, all transient signals related to the experiment the controller of Proposition \ref{pro4} are indicated in $\textcolor{red}{red}$ and labeled as $Prop.4$, all transient signals related to the experiment the controller of Proposition \ref{pro5} are indicated in  $\textcolor{green}{green}$ and labeled as $Prop.5$, and all transient signals related to the experiment the CLAC of \eqref{eqn:leycontrol}-\eqref{eqn:leyadaptacion} are indicated in  $\textcolor{olive}{olive}$ and labeled as $CL$ and all transient signals related to the experiment the \emph{classical} PD control of \eqref{eqn:tauslotineli1}-\eqref{eqn:tauslotineli2} are indicated in  $\textcolor{cyan}{cyan}$ and labeled as $PD+AC$. As seen from the Figures \ref{Fig:2}-\ref{Fig:3}, the joint position trajectory tracking control objective is verified in all cases, that is, all controllers ensured that $\bm{q}\rightarrow \bm{q}_\star$. The CLAC exhibits a smaller maximum bound on the joint position trajectory tracking error for both joints, as shown in Figure \ref{Fig:2}. However, the controllers presented in Propositions \ref{pro3}–\ref{pro5} achieve faster convergence of the parametric error, as shown in Figures \ref{Fig:5}-\ref{Fig:6b}, even though the control torque is smaller in amplitude than that required by the CLAC, as seen in Figure \ref{Fig:4}. The PD+AC parametric error signals confirms that when no persistent excitation condition is fulfilled the parametric error only remains bounded. Regarding Assumption \ref{ass1} on the regressor $\boldsymbol{\Omega}$, Figure \ref{Fig:7} shows that the inequality in \eqref{del} is satisfied when using the controllers from Propositions \ref{pro3}–\ref{pro5}. Figure \ref{Fig:7} also depicts the behavior of the excitation level signal $\sigma$ for the CLAC, confirming that the IE condition for the regressor $\Theta$ in \eqref{eqn:theta} is fulfilled approximately one second after the experiment begins ($\sigma > 0$). It is interesting to note in Figure \ref{Fig:7} that the maximum value of $\sigma$ is reached at approximately $2.8\mathrm{[s]}$, resulting in an estimated $\sigma_{\max}$ of around $0.1$. This implies, in other words, that parametric error convergence is achieved using $\Theta(2.8)$ and $\boldsymbol{y}(2.8)$. It is important to emphasize that, although the joint position trajectory tracking objective is successfully achieved in all cases, none of the estimated parameters $\hat{\theta}_i$ converges exactly to its true value. This scenario can be attributed to the intricacy of the friction phenomenon, alongside implementation challenges, including quantization within DACs, finite resolution in positional measurement via incremental encoders, discrete approximation of continuous-time dynamic controllers due to ODE solving techniques or integration numeric engines, imperfect feedback on joint velocity due to estimation via coarse Euler methods from joint positional readings among other issues.

\begin{figure}[!htp]
\centering
\subfloat[Using Proposition \ref{pro3}.]{%
\resizebox*{7.0cm}{!}{\includegraphics{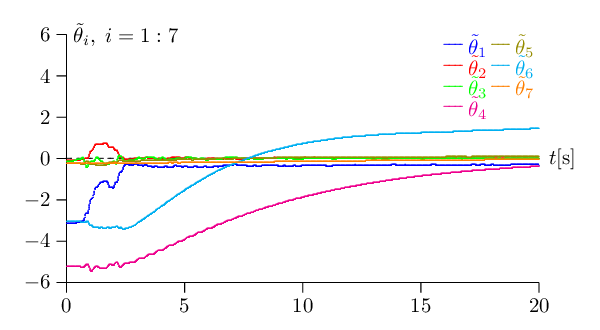}}}\\
\subfloat[Using Proposition \ref{pro4}.]{%
\resizebox*{7.0cm}{!}{\includegraphics{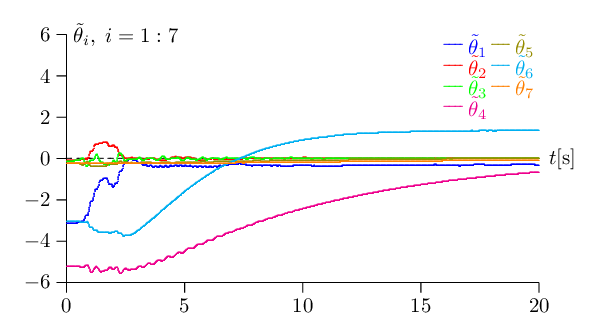}}}\\
\subfloat[Using Proposition \ref{pro5}.]{%
\resizebox*{7.0cm}{!}{\includegraphics{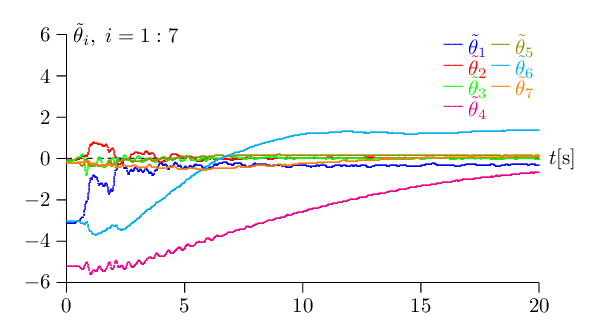}}}\\
\subfloat[Using CLAC.]{%
\resizebox*{7.0cm}{!}{\includegraphics{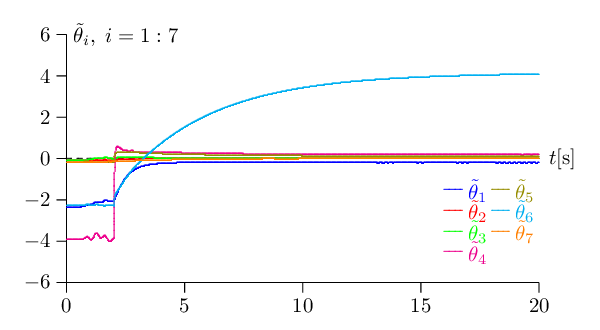}}}\\
\subfloat[Using PD+AC.]{%
\resizebox*{7.0cm}{!}{\includegraphics{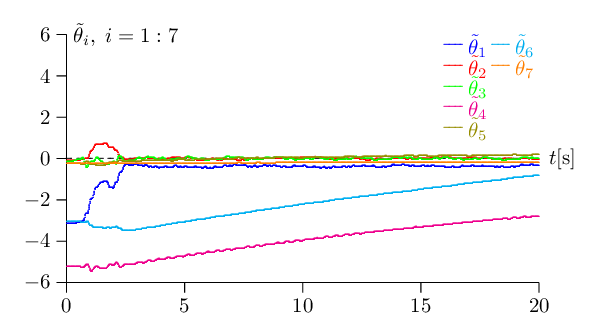}}}
\caption{Transient behavior of the estimation errors.} \label{Fig:5}
\end{figure}

\begin{figure}[!htp]
\centering
\subfloat[]{%
\resizebox*{7.0cm}{!}{\includegraphics[width=7.0cm]{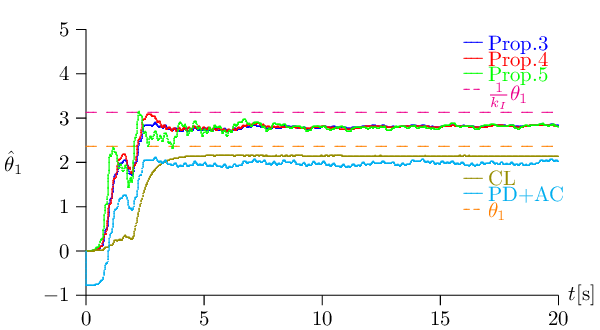}}}\\
\subfloat[]{%
\resizebox*{7.0cm}{!}{\includegraphics[width=7.0cm]{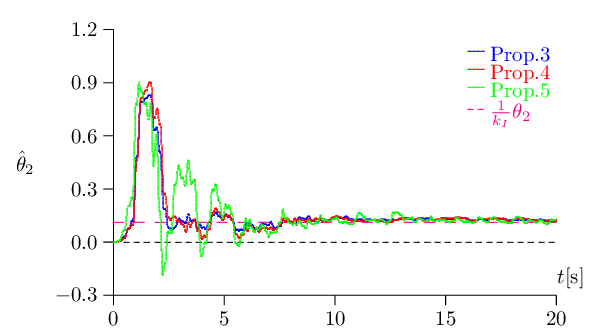}}}\\
\subfloat[]{%
\resizebox*{7.0cm}{!}{\includegraphics[width=7.0cm]{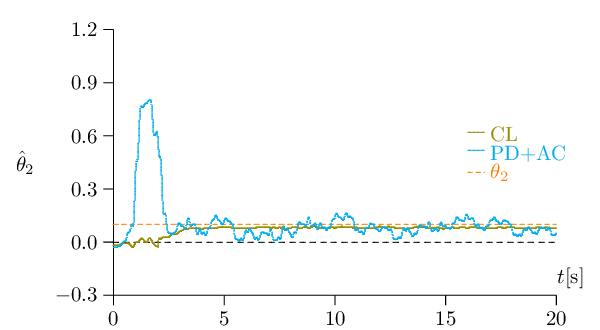}}}\\
\subfloat[]{%
\resizebox*{7.0cm}{!}{\includegraphics[width=7.0cm]{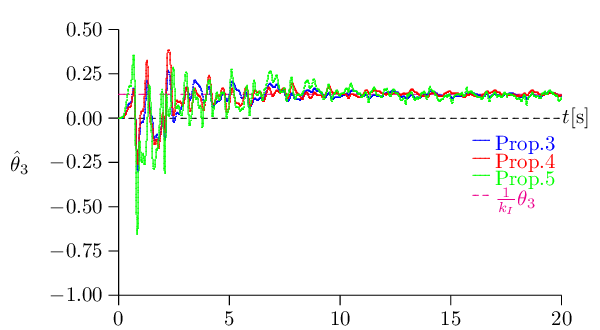}}}\\
\subfloat[]{%
\resizebox*{7.0cm}{!}{\includegraphics[width=7.0cm]{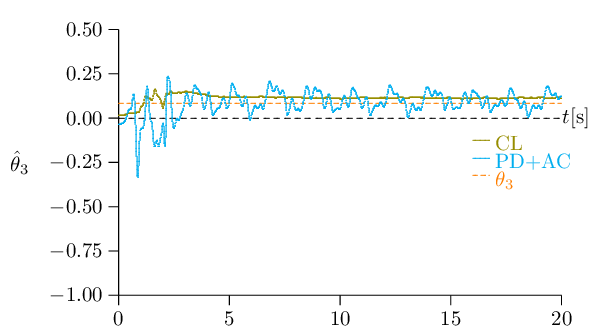}}}\\
\caption{Transient behavior of the estimated parameters.} \label{Fig:6}
\end{figure}

\begin{figure}[htp!]
\centering
\subfloat[]{%
\resizebox*{7.5cm}{!}{\includegraphics[width=7.5cm]{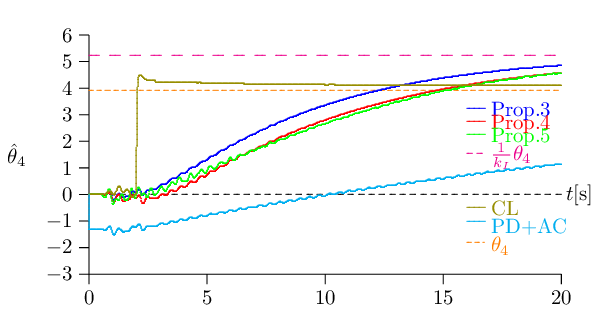}}}\\
\subfloat[]{%
\resizebox*{7.5cm}{!}{\includegraphics[width=7.5cm]{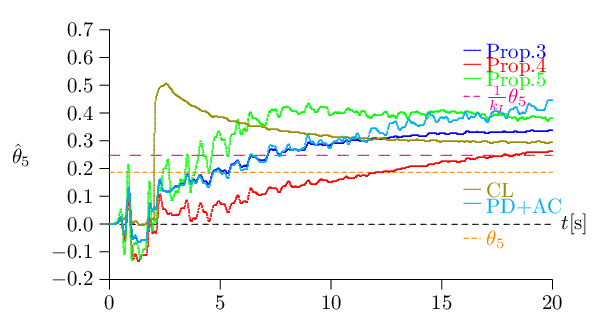}}}\\
\subfloat[]{%
\resizebox*{7.5cm}{!}{\includegraphics[width=7.5cm]{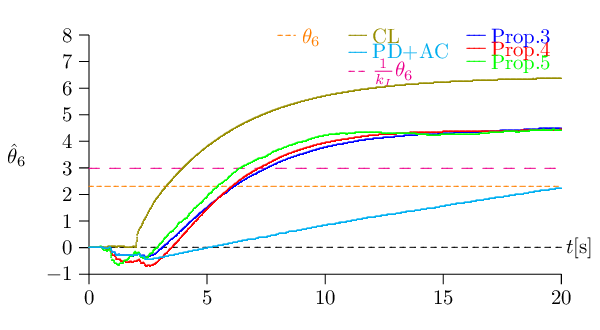}}}\\
\subfloat[]{%
\resizebox*{7.5cm}{!}{\includegraphics[width=7.5cm]{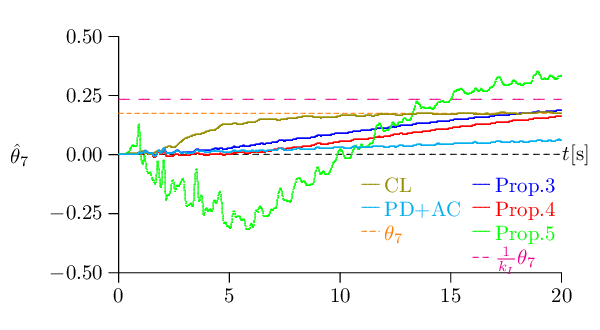}}}\\
\caption{Transient behavior of the estimated parameters.} \label{Fig:6b}
\end{figure}

\begin{figure}[htp!]
\centering
\subfloat[]{%
\resizebox*{7.5cm}{!}{\includegraphics{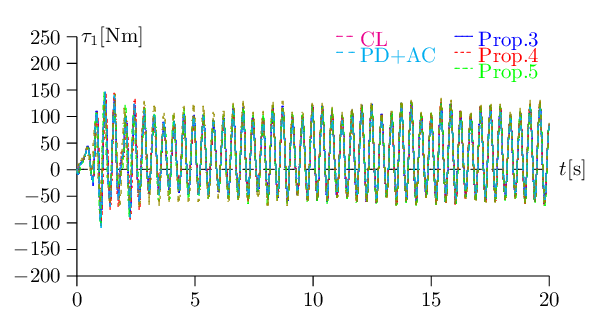}}}\\
\subfloat[]{%
\resizebox*{7.5cm}{!}{\includegraphics{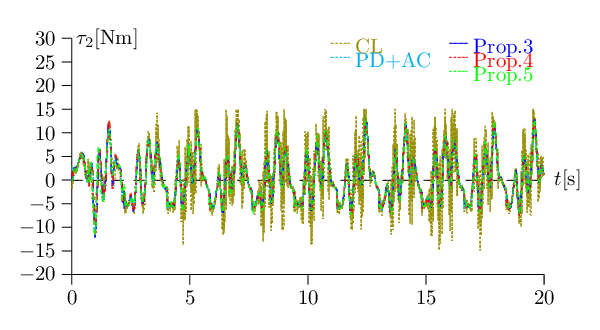}}}
\caption{Time evolution of the control inputs $\tau_1(t)$ and $\tau_2(t)$.} \label{Fig:4}
\end{figure}

\begin{figure}[htp!]
\centering
\subfloat[Transient behavior of the LS+D signal $\Delta$ of \eqref{del}.]{%
\resizebox*{7.5cm}{!}{\includegraphics{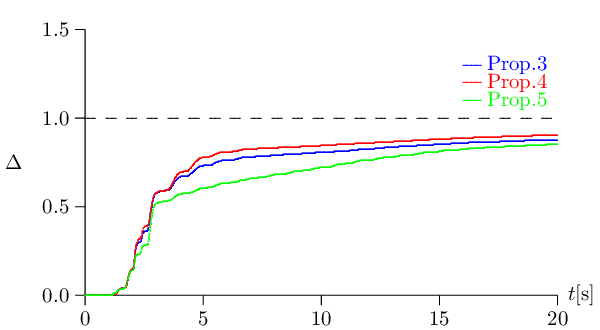}}}\\
\subfloat[Transient behavior of the CLAC signal $\sigma$ of \eqref{eqn:leyadaptacion}-\eqref{eqn:leycontrol}.]{%
\resizebox*{7.5cm}{!}{\includegraphics{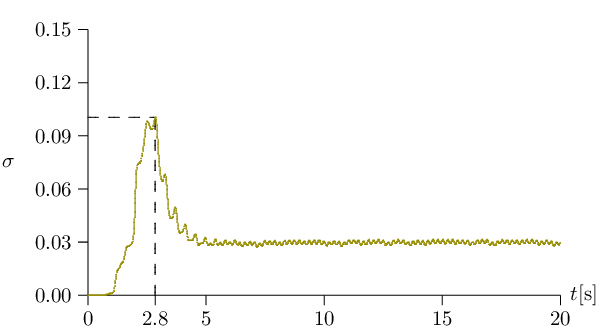}}}
\caption{Verification of the IE condition.} \label{Fig:7}
\end{figure}

 \subsection{Discussion about the computational cost}\label{sub:computational}
 The computational cost of an algorithm in practice can be evaluated in several ways. 
 
 A common approach is to measure the wall--clock execution time on a specific computer; however, this measure is strongly hardware--dependent and therefore does not provide an intrinsic indication of algorithmic efficiency. Another approach is to count the floating--point operations (flops) required, although this measure alone may be insufficient when different execution paths occur depending on the data (e.g., due to pivoting or conditional branches). A more general and hardware--independent measure is given by the computational complexity of the algorithm, which indicates how the required computational effort scales as the problem dimension increases. The \textit{Big-O} notation provides a formal way to express this scaling behavior. Let the input size of the algorithm be represented by $n$. If the number of elementary operations grows proportionally to $n^{2}$, the algorithm is said to have quadratic complexity, $O(n^{2})$. Similarly, if the required operations grow proportionally to $n^{3}$, the algorithm has cubic complexity, $O(n^{3})$. Typical relationships among complexities are
\begin{equation*}
O(\log n) < O(n) < O(n \log n) < O(n^{2}) < O(n^{3}) ,
\end{equation*}
showing, for instance, that a cubic--order algorithm requires more computational steps than a quadratic one. In the composite-learning controller of \cite{yongpinpan}, the IE condition must be verified by the matrix $\Theta$ in \eqref{eqn:theta}. This requirement involves the existence of positive constants $\tau_d$ and $\sigma$ such that
$\int_{t-\tau_d}^{t}{\phi}_f(\tau){\phi}_f^{\top}(\tau)d\tau \geq \sigma I_{w\times w}$,
where $\sigma$ is obtained by evaluating the minimum singular value of $\Theta$ in \eqref{eqn:theta}. In practice, this condition must be monitored online (as the composite learning control law \eqref{eqn:leycontrol}-\eqref{eqn:leyadaptacion} depends on the $\epsilon$ signal \eqref{eqn:eps} which is defined based on the accomplish of the IE). In contrast, in the proposed formulation (controllers of Propositions~\ref{pro3}--\ref{pro5}), the IE condition depends on the scalar regressor $\Delta$ defined in \eqref{eqn:Delta}; hence, its implementation avoids the online computation of a minimum singular value. To provide a concrete indication of the arithmetic effort required by these online computations, we compare the dominant floating--point operation counts of representative numerical procedures. 

According to \cite{golub2013matrix}, for a class of algorithms known as the R--SVD procedure, computing only the singular values of an $n\times n$ matrix requires approximately $4n^{3}$ floating--point operations. On the other hand, the proposed algorithm requires computing $\Delta(t)$ in \eqref{eqn:Delta} and the signal $\caly$ in \eqref{eqn:Caly}. Let $\Psi:=I_w- z f_{0}F$. Rather than forming $\adj\{\Psi\}$ via cofactors, we compute the required quantities using the identity $\adj\{\Psi\}=\det\{\Psi\}\Psi^{-1}=\Delta\Psi^{-1}$, implemented via one LU factorization plus two triangular substitutions \cite{horn2012matrix}. The LU factorization costs $\tfrac{2}{3}n^{3}$ flops and the two substitutions cost $O(n^{2})$ \cite{golub2013matrix}. Hence, the dominant arithmetic cost of the proposed online computation is approximately $\tfrac{2}{3}n^{3}+O(n^{2})$. For $n=7,$ this yields approximately $\tfrac{2}{3}\cdot 7^{3}+2\cdot 7^{2}\approx 229 + 98 \approx 327$ flops, up to lower-order terms and implementation details (e.g., pivoting). Consequently, both the minimum-singular-value computation and the LU-based computation belong to the same asymptotic class $O(n^{3})$, but the LU-based computation has a strictly smaller constant factor. In the specific case $n=7$ the arithmetic effort is reduced from roughly $4\cdot 7^{3}=1372$ flops to roughly $327$ flops, i.e., by a factor of approximately $\frac{1372}{327}\approx 4.2.$ Therefore, replacing the online minimum singular value evaluation of $\Theta$ with the LU-based computation required for the proposed LS$+$DREM adaptation algorithm \eqref{LSD1}--\eqref{LSD3} reduces the computational burden while preserving the proposed excitation/regressor mechanism.
 \par Finally, it is important to highlight the memory cost associated with implementing the CLAC, particularly when computing the matrix~$\Theta$ in~\eqref{eqn:theta}. Since each element of the $7\times 7$ matrix must be stored at every sampling instant over the integration window~$\tau_d$, the total memory requirement becomes significant. In the experimental setup, the sampling period was $0.0025\,\mathrm{s}$ and the parameter $\tau_d$ was set to $2\,\mathrm{s}$. Therefore, the number of samples required to compute $\Theta$ is given by:
\begin{equation*}
\frac{\tau_d}{\text{sampling period}} = \frac{2\,\mathrm{s}}{0.0025\,\mathrm{s}} = 800.
\end{equation*}
As a result, each of the $49$ elements of $\Theta$ must be stored for $800$ time steps, leading to a total memory requirement of $49 \times 800 = 39{,}200$ entries. Similarly, a comparable amount of memory is required to store the vector~$\boldsymbol{y}$ of \eqref{eqn:yCL}, as it must also be evaluated over the same integration window. This contrasts with the memory requirements of the proposed controllers of Propositions~\ref{pro3}-\ref{pro5}, for which only instantaneous data is needed to compute all integral states.

\subsection{Performance evaluation}\label{sub:performance}
The control performance is quantified using three different performance indices, which were employed as tuning criteria for all the controllers presented, as described above. First, we compute the root-mean-square value of the joint position tracking error, given by
\begin{equation}
\label{eq:norm}
e_{{\mathrm{RMS}}}\bigl[\tilde{\bm{q}}(t)\bigl] = \sqrt{\frac{1}{T}\int_{0}^{T}\tilde{\bm{q}}(t)^T\tilde{\bm{q}}(t)dt},
\end{equation}
where $T = 20$ [s] represents the total experimentation time. The smaller $e_{{\rm RMS}}$ means the  better performance, which was obtained for the CLAC of \eqref{eqn:leycontrol}-\eqref{eqn:leyadaptacion} over the proposed controllers of Propositions \ref{pro3}-\ref{pro5} and the PD+AC (see Figure \ref{Fig:9}a).
\begin{figure}[htp!]
\centering
\subfloat[]{%
\resizebox*{7.5cm}{!}{\includegraphics{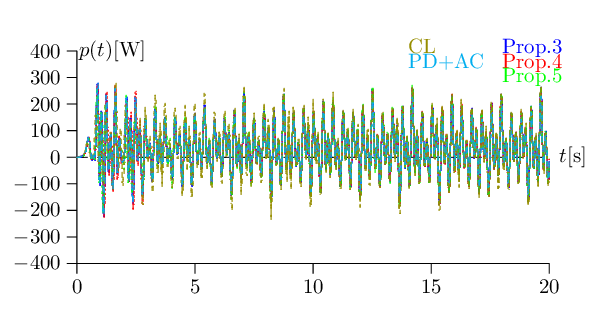}}}
\caption{Instantaneous power supplied} \label{Fig:8}
\end{figure}
\begin{figure}[h!]
\centering
\subfloat[Root mean square value of joint positions errors $\tilde{\bm{q}}$ defined by $e_{{\rm RMS}}$ in \eqref{eq:norm}.]{%
\resizebox*{7.5cm}{!}{\includegraphics{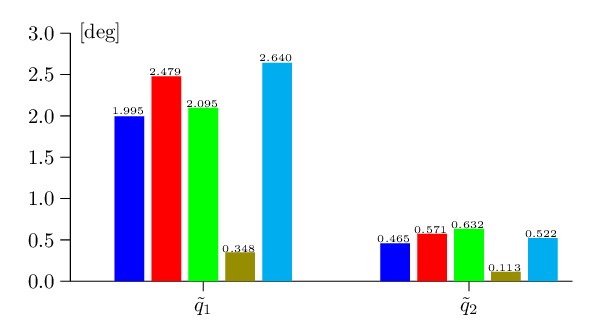}}} \\
\subfloat[Average supplied power performance index defined by $P_{\rm{avg}}$ in \eqref{eq:PE}.]{%
\resizebox*{7.5cm}{!}{\includegraphics{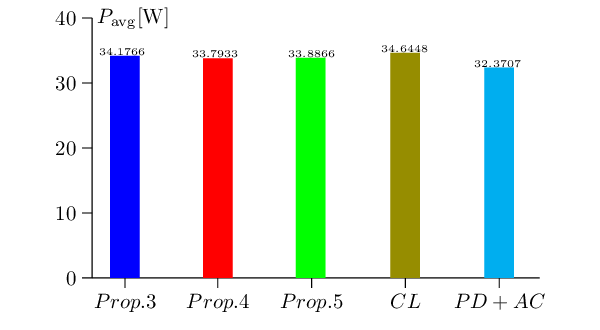}}}\\
\subfloat[Root mean square value of parametric errors $\tilde{\boldsymbol\theta}$ defined by $\tilde{\boldsymbol\theta}(t)_{{\rm RMS}}$ in \eqref{eq:normth}.]{%
\resizebox*{7.5cm}{!}{\includegraphics{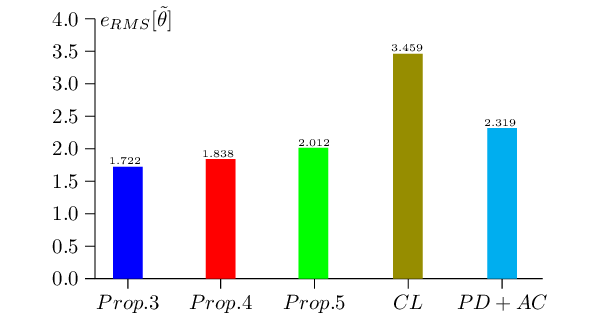}}}
\caption{Comparative performance index}
\label{Fig:9}
\end{figure}
\par Second, it is proposed to compute the root-mean square value of the parameter estimation error given by
\begin{equation}
\label{eq:normth}
{\tilde{\boldsymbol\theta}(t)}_{{\mathrm{RMS}}}= \sqrt{\frac{1}{T}\int_{0}^{T}\tilde{\boldsymbol\theta}(t)^T\tilde{\boldsymbol\theta}(t)dt},
\end{equation}
where $T = 20$ [s] represents the total experimentation time. The smaller $\tilde{\boldsymbol\theta}(t)_{{\rm RMS}}$ means again a better performance, which was obtained for the controllers of Propositions \ref{pro3}-\ref{pro5} over the CLAC and PD+AC (see Figure \ref{Fig:9}c).
\par Third, it is proposed to compute the average supplied power given by the formula:
\begin{equation}
\label{eq:PE}
P_{\rm{avg}}[\dot{\bm{q}}(t),\boldsymbol\tau(t)] = \frac{1}{T}\int_{0}^Tp(t)dt \hspace{0.25cm}\mathrm{[W]},
\end{equation}
where $T = 20 \ \mathrm{[s]}$ represents the total experimentation time and $p(t)$ is the instantaneous supplied power given by $p(t)=\dot{\bm{q}}(t)^T\boldsymbol\tau(t)$. Figure \ref{Fig:8} shows the plot of the instantaneous power supplied in all experiments, while Figure \ref{Fig:9}b shows the resultant index $P_{\rm{avg}}$ obtained from their respective plots. The smaller $P_{\rm{avg}}$ means the  better performance, the reason is that this quantity represents the real power supplied from the motor-drivers throughout the actuators in the system, that is, the power necessary to accomplish a task. When comparing the average supplied power $P_{\rm{avg}}$ of multiple controllers, providing the same control task (e.g verifying the same control objective), the controller with an smaller value of $P_{\rm{avg}}$ verifies the control task with less effort, thus considered had better performance achieving the control task. The better performance was obtained from the proposed controller PD+AC. However note that all proposed controllers of Proposition \ref{pro3}- \ref{pro5} performed better against the CL, as Figure \ref{Fig:9}b shows.
\par These indices show that, for two of them, namely ${\tilde{\boldsymbol\theta}(t)}_{\mathrm{RMS}}$ in \eqref{eq:normth} and $P_{\rm{avg}}$ in \eqref{eq:PE}, the controllers of Propositions \ref{pro3}--\ref{pro5} achieved better overall performance than the CLAC and PD+AC controllers. This analysis, together with the reductions in computational cost and memory usage extensively discussed in subsection \ref{sub:computational}, demonstrates the superior performance of the three proposed controllers.

\section{Conclusions}\label{secc:Conclusions}
Three adaptive global trajectory-tracking controllers for fully actuated Euler--Lagrange systems have been introduced in this paper. In all three cases, the joint position tracking objective is achieved. For the first and third proposals, it has been shown that the joint position and velocity tracking errors, as well as the parameter estimation error, converge to zero exponentially under a weak interval excitation condition, using the power-balance parameterization and the LS$+$DREM estimator. For the second proposal, it has been shown that the joint position and velocity tracking errors and the parameter estimation error converge asymptotically to zero under an interval excitation condition when using the energy-based regressor. Experimental validation on a two-degree-of-freedom robotic arm was conducted to evaluate the performance of the three controllers, and the results were compared with those of a composite learning-based adaptive controller (CLAC) that guarantees similar stability properties under comparable excitation conditions. The experimental tests showed that the proposed estimation algorithm is able to estimate the unknown constant parameters of the mechanical system and achieves the trajectory-tracking control objective, as predicted by the stability analysis. Finally, the performance-index results indicate that, compared to the CLAC and the classical PD$+$AC controller, the proposed controllers achieve improved performance in selected indices and a significantly lower computational cost.

\appendix

\section{Matrosov’s Theorem}
Consider the differential equation
\begin{equation}\label{eqn:xdot}
\dot{\bm{x}}=\bm{f}(\bm{x},t),
\end{equation}
where $\bm{x}\in\mathbb{R}^n,t\in\mathbb{R}$ is the time and $\bm{f}$ is a continuous function $\bm{f}:I\times \Xi\to \mathbb{R}^n$, where $I=[t_0,\infty)$ for some $t_0\in\mathbb{R}$ and $\Xi$ is an open connected set in $\mathbb{R}^n$, containing the origin. Assume that $\bm{f}(t,\bm{0})=0 \ \forall t\in I$, so that the origin is an equilibrium point for the differential equation \eqref{eqn:xdot}. Matrosov's theorem then states:
\begin{thm}{\cite{rouche}}
Let there exist two $C^{1}$ functions $V : I \times \Xi \to \mathbb{R}$,
$W : I \times \Xi \to \mathbb{R}$,
a $C^{0}$ function $V^{*} : \Xi \to \mathbb{R}$,
three functions $a, b, c$ of class $\mathcal{K}$ and two constants $S > 0$ and $T > 0$
such that, for every $(t,\bm{x}) \in I \times \Xi$:

\begin{itemize}
    \item[(i)] $a(\|\bm{x}\|) \le V(t,\bm{x}) \le b(\|\bm{x}\|)$;
    \item[(ii)] $\dot{V}(t,\bm{x}) \le V^{*}(\bm{x}) \le 0$;
        \quad $E := \{ \bm{x} \in \Xi : V^{*}(\bm{x}) = 0 \}$;
    \item[(iii)] $|W(t,\bm{x})| < S$;
    \item[(iv)] $\max\!\left( d(\bm{x},E),\, |\dot{W}(t,\bm{x})| \right) \ge c(\|\bm{x}\|) > 0$;
    \item[(v)] $\| f(t,\bm{x}) \| < T$.
\end{itemize}

Choosing $\alpha > 0$ such that $\bar{B}_{\alpha} \subset \Xi$,
let us put for every $t \in I$
\begin{equation}
V^{-1}_{t,\alpha} := \{ \bm{x} \in \Xi : V(t,\bm{x}) \le a(\alpha) \}.
\end{equation}

\noindent Then:
\begin{itemize}
    \item[(i)] For any $t_{0} \in I$ and any $\bm{x}_{0} \in V^{-1}_{t_{0},\alpha}$,
    any solution $\bm{x}(t)$ of (72), passing through $(\bm{x}_{0}, t_{0}) \in I \times \Xi$,
    tends to zero uniformly in $t_{0}$ and $\bm{x}_{0}$ as $t \to \infty$.
    \item[(ii)] The origin is uniformly asymptotically stable.
\end{itemize}
\end{thm}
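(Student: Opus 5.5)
The plan is the classical Lyapunov--Matrosov argument. First I extract uniform stability and uniform boundedness from items (i)--(ii). Then I show uniform attractivity by bounding, uniformly in $(t_0,\bm{x}_0)$, the time a solution can spend outside a small ball. The auxiliary function $W$ and the speed bound (v) are used to rule out a trajectory lingering near the set $E$, where $V$ need not decrease.

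\emph{Step 1 (invariance and uniform stability).} By (ii), $t\mapsto V(t,\bm{x}(t))$ is nonincreasing along solutions. If $\bm{x}_0\in V^{-1}_{t_0,\alpha}$, then for all $t\ge t_0$ we have $a(\|\bm{x}(t)\|)\le V(t,\bm{x}(t))\le a(\alpha)$, so $\|\bm{x}(t)\|\le\alpha$. Hence the solution stays in $\bar B_\alpha\subset\Xi$ and exists for all $t\ge t_0$. Given $\varepsilon\in(0,\alpha]$, set $\delta(\varepsilon):=b^{-1}(a(\varepsilon))$. If $\|\bm{x}(s)\|<\delta$ at some time $s$, then $a(\|\bm{x}(t)\|)\le V(s,\bm{x}(s))\le b(\delta)=a(\varepsilon)$ for all $t\ge s$, so $\|\bm{x}(t)\|<\varepsilon$ afterwards. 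This gives uniform stability, and it reduces conclusion (i) to the following claim: there is a time $\tau^*(\varepsilon)$, independent of $t_0$ and $\bm{x}_0$, such that every solution enters $B_\delta$ within $\tau^*$.

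\emph{Step 2 (constants on a compact annulus).} Let $\gamma:=c(\delta)>0$ and $K:=\{\bm{x}:\delta\le\|\bm{x}\|\le\alpha,\ d(\bm{x},E)\ge\gamma/2\}$. The set $K$ is compact and $V^*$ is continuous and strictly negative on $K$, since $K\cap E=\emptyset$. Hence $V^*\le-\mu$ on $K$ for some $\mu>0$. While $\|\bm{x}(t)\|\ge\delta$, item (iv) gives two cases: either $d(\bm{x},E)\ge\gamma$, in which case $\dot V\le-\mu$; or $d(\bm{x},E)<\gamma$, in which case $|\dot W|\ge\gamma$. Along a solution, $\dot W(t,\bm{x}(t))=\partial_tW+\nabla W\cdot f$ is continuous in $t$.
\begin{itemize}
\item On any interval on which $d(\bm{x}(t),E)<\gamma$, the derivative $\dot W$ cannot vanish, so it keeps a fixed sign. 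By (iii), $W$ can vary by less than $2S$, so such an interval has length at most $2S/\gamma$.
\item By (v), $d(\bm{x}(t),E)$ is Lipschitz in $t$ with constant $T$. So each passage of $d$ between the levels $\gamma/2$ and $\gamma$ lasts at least $\gamma/(2T)$. During such a passage the state lies in $K$, so $V$ drops by at least $\mu\gamma/(2T)$.
\item Time spent with $d\ge\gamma$ also lies in $K$, so its total is at most $a(\alpha)/\mu$.
\end{itemize}

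\emph{Step 3 (counting excursions).} Suppose the solution stays outside $B_\delta$ on $[t_0,t_0+\tau]$. Decompose this interval into maximal subintervals where $d<\gamma$ and their complements. Between two consecutive subintervals of the first kind with a region in between, the trajectory must cross the band $\gamma/2\le d\le\gamma$, or else remain in $K$. Since $V\ge0$ and $V(t_0)\le a(\alpha)$, the number of such crossings is at most $N:=2Ta(\alpha)/(\mu\gamma)+1$. Therefore
\[
\tau\le (N+1)\frac{2S}{\gamma}+\frac{a(\alpha)}{\mu}+\ldots=:\tau^*(\varepsilon),
\]
which depends only on $\varepsilon$, $\alpha$ and the data $a,b,c,S,T$. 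This yields the uniform entry time, hence conclusion (i). Combined with Step 1 it also yields conclusion (ii), uniform asymptotic stability.

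\emph{Main obstacle.} The main obstacle is the bookkeeping in Step 3. A trajectory could oscillate in and out of the $\gamma$-neighbourhood of $E$ in a way that defeats a naive time bound. The cure is to pay with the decrease of $V$ on each crossing of the buffer band $\gamma/2\le d\le\gamma$, which requires the speed bound (v), and to pay with the bounded variation of $W$ on each near-$E$ interval, which requires continuity of $\dot W$ in $t$ so that its sign cannot flip there. I would make this decomposition precise before anything else.
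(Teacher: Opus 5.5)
The paper gives no proof of its own; it quotes the theorem from Rouche, Habets and Laloy. Your argument is the standard Matrosov proof given there: uniform stability from (i)--(ii), then a uniform entry time into $B_\delta$ obtained by charging stays near $E$ to the bounded variation of $W$, and time in $K$ (including band crossings) to the decrease of $V$. It is correct.

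For Step~3, drop the word ``consecutive'', since the components of $\{t : d(\bm{x}(t),E)<\gamma\}$ may accumulate. Instead split them into two kinds:
\begin{itemize}
\item those that reach $d<\gamma/2$: each except possibly the first contains a down-crossing of the band, so there are at most $N$ of them, each shorter than $2S/\gamma$;
\item those that stay in $[\gamma/2,\gamma)$: these lie in $K$ and are covered by the budget $a(\alpha)/\mu$.
\end{itemize}
This gives $\tau\le 2NS/\gamma+a(\alpha)/\mu$, with nothing left in the ``$\ldots$''.
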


The theoretical foundations of this result were originally established in \cite{matrosov}.
A notable feature of this theorem is that it provides conditions under which a Lyapunov function with a negative semi-definite derivative is still sufficient to guarantee asymptotic stability. The key idea lies in introducing a bounded auxiliary function that prevents the system trajectories from remaining in the set where the derivative of the Lyapunov function vanishes. To simplify the verification of condition (iv), we make use of the following lemma from \cite{paden}:

\begin{lemma}{\cite{paden}}
Condition (iv) of Matrosov’s theorem is satisfied if the following conditions hold:
\begin{itemize}
    \item[(iv.a)] $\dot{W}(\bm{x},t)$ is continuous in both arguments and depends on time as
    $$
    \dot{W}(\bm{x},t) = g(x, \beta(t)),
    $$
    where $g$ is continuous in both variables and $\beta(t)$ is continuous with image in a bounded set $K_{1}$. (For simplicity, we simply say that $\dot{W}(\bm{x},t)$ depends on time continuously through a bounded function.)

    \item[(iv.b)] There exists a class $\mathcal{K}$ function $k$ such that
    $$
    |\dot{W}(\bm{x},t)| \ge k(\|\bm{x}\|), \qquad \forall \bm{x} \in E,\; t \ge t_{0}.
    $$
\end{itemize}
\end{lemma}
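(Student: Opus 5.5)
The plan is to prove the lemma by a compactness argument. The aim is to build the class $\mathcal{K}$ lower bound $c$ in condition (iv) from a continuous positive function of $\bm{x}$ alone, obtained by eliminating time through the bounded set $K_1$.

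\textbf{Eliminating time.} By (iv.a), $\dot W(\bm{x},t)=g(\bm{x},\beta(t))$ with $\beta(t)\in K_1$ and $K_1$ bounded. Let $\bar K_1$ be the closure of $K_1$, which is compact. I will define
\[
\varphi(\bm{x}):=\min_{b\in \bar K_1}\max\big(d(\bm{x},E),\,|g(\bm{x},b)|\big).
\]
The map $(\bm{x},b)\mapsto \max(d(\bm{x},E),|g(\bm{x},b)|)$ is jointly continuous, because $d(\cdot,E)$ is Lipschitz and $g$ is continuous. A minimum over a compact parameter set of a jointly continuous function is continuous, so $\varphi$ is continuous. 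Moreover, for every $t\ge t_0$,
\[
\max\big(d(\bm{x},E),|\dot W(\bm{x},t)|\big)\ge \varphi(\bm{x}).
\]

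\textbf{Positivity away from the origin.} Take $\bm{x}\neq 0$ and split into two cases. If $\bm{x}\notin E$: $E=\{V^*=0\}$ is relatively closed in $\Xi$ since $V^*$ is continuous, so $d(\bm{x},E)>0$. If $\bm{x}\in E$: (iv.b) gives $|g(\bm{x},\beta(t))|\ge k(\|\bm{x}\|)$ for all $t\ge t_0$. Passing to limits with continuity of $g$, this extends to every $b$ in the closure of $\beta([t_0,\infty))$. I will treat $K_1$ as that image set, so the bound holds on all of $\bar K_1$. In both cases $\varphi(\bm{x})>0$.

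\textbf{Constructing $c$.} Because Matrosov's theorem only concerns the ball $\bar B_\alpha\subset\Xi$, I will work with $0<s\le\alpha$ and set
\[
\tilde c(s):=\min_{s\le\|\bm{x}\|\le\alpha}\varphi(\bm{x}).
\]
This minimum is attained on a compact annulus, so $\tilde c(s)>0$, and $\tilde c$ is nondecreasing in $s$. To upgrade it to a continuous, strictly increasing function vanishing at $0$, I will use the standard smoothing
\[
c(s):=\frac{s}{\alpha}\cdot\frac{1}{s}\int_0^s \tilde c(\tau)\,d\tau .
\]
Averaging a nondecreasing positive function gives a continuous, nondecreasing quantity bounded above by $\tilde c(s)$, and the factor $s/\alpha\le 1$ keeps it below $\tilde c(s)$. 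The product is strictly increasing and tends to $0$ as $s\to0$, so $c\in\mathcal{K}$. Then $\max(d(\bm{x},E),|\dot W|)\ge\varphi(\bm{x})\ge\tilde c(\|\bm{x}\|)\ge c(\|\bm{x}\|)>0$, which is (iv).

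\textbf{Main obstacle.} The delicate step is making the bound uniform in $t$. This is exactly why the time dependence must enter only through a bounded continuous $\beta$, so that compactness of $\bar K_1$ can replace an infimum over $t\in[t_0,\infty)$. A secondary technical point is the extension of (iv.b) from the attained values $\beta(t)$ to limit points in $\bar K_1$, which follows from continuity of $g$. Finally, globality in $\bm{x}$ is avoided by restricting to $\bar B_\alpha$, which is all the theorem uses.
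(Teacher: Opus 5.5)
The paper gives no proof of this lemma; it only defers to \cite{paden}. Your argument therefore has to stand on its own, and it does.

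\textbf{Comparison with the usual route.} Your proof is a direct version of the standard compactness argument by contradiction. That argument assumes $\max(d(\bm{x}_n,E),|g(\bm{x}_n,\beta(t_n))|)\to 0$ with $\varepsilon\le\|\bm{x}_n\|\le\alpha$, extracts $\bm{x}_n\to\bm{x}^*\in E$ and $\beta(t_n)\to b^*$, and contradicts (iv.b) at $(\bm{x}^*,b^*)$. You encode the same two compactness facts in the continuity of $\varphi$ and then build $c$ explicitly. The contradiction route only yields a positive bound $\delta(\varepsilon)$ on each annulus, and it still needs your smoothing step to produce a class $\mathcal{K}$ function.

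\textbf{Minimizing over the closure of $\beta([t_0,\infty))$.} This is necessary, not just convenient. Hypothesis (iv.b) says nothing about $g(\bm{x},b)$ for values $b$ that $\beta$ never approaches, so an arbitrary bounded $K_1$ would not do. You also tacitly need $g$ to be defined and continuous on $\Xi\times\bar K_1$.

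\textbf{Restricting to a compact ball.} This is also necessary. On an unbounded $\Xi$, such as the $\Xi=\mathbb{R}^{2n_q+w}$ used in the appendix, the conclusion of the lemma can fail. On $\mathbb{R}^2$ take $V^*(\bm{x})=-x_2^2$, so $E=\{x_2=0\}$, and the time-invariant $\dot W(\bm{x})=x_1\max\{0,1-|x_2|e^{x_1^2}\}$. Then $|\dot W|=\|\bm{x}\|$ on $E$, so (iv.a)--(iv.b) hold with $k(s)=s$. Yet at $\bm{x}_n=(n,e^{-n^2})$ one has $d(\bm{x}_n,E)=e^{-n^2}\to 0$ and $\dot W(\bm{x}_n)=0$, while $\|\bm{x}_n\|\ge n$. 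Hence no class $\mathcal{K}$ function $c$ works on all of $\Xi$.

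\textbf{Making the reduction rigorous.} State it as applying Matrosov's theorem on the open ball $B_{\alpha'}$, with $\alpha<\alpha'$ and $\bar B_{\alpha'}\subset\Xi$. Hypotheses (i)--(iii) and (v) restrict trivially, and $d(\bm{x},E\cap B_{\alpha'})\ge d(\bm{x},E)$. Running your annulus construction up to radius $\alpha'$ then gives (iv) on the whole new domain.
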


For a proof, see \cite{paden}.

\section{Proof of Proposition \ref{pro4}}

To establish the result, it suffices to verify the hypotheses of Matrosov’s Theorem. Although most of the required calculations are conceptually straightforward, they become rather cumbersome in practice, particularly when checking condition (iv.a) of Lemma 3. Overall, the proof relies extensively on the boundedness theorem, Lemma 3, and several standard properties of matrix norms.

\par First, let us define the state vector as
\begin{equation*}
\bm{x}(t)=\begin{bmatrix}\tilde{\bm{q}}(t)^T & \dot{\tilde{\bm{q}}}(t)^T & \tilde{\boldsymbol\theta}(t)^T\end{bmatrix}^T\in\mathbb{R}^{2{n_q}+w},
\end{equation*}
whereas the vector field $f(\bm{x},t)$ is implicitly defined by \eqref{dyn1}.
\begin{itemize}
\item[i)] By Assumptions \ref{assum:M}–\ref{assum:u}, condition (i) is satisfied with $V(\bm{x},t)=\mathbb{W}$. Moreover, since $\mathbb{W}$ is radially unbounded, we have $\Xi=\mathbb{R}^{2{n_q}+w}$.

\item[ii)] From \eqref{dotU1}, the function
\begin{equation*}
V^\ast(\bm{x}):= - \lambda_{\min} \{ K_D \} \| \dot{\tilde{\bm{q}}} \|^2 -k_I \Delta_{\min} \|\tilde{\boldsymbol\theta}\|^2.
\end{equation*}
The set $E$ is therefore given by
\begin{align*}
E\triangleq& \Bigg\{ \begin{bmatrix}\tilde{\bm{q}}\\ \dot{\tilde{\bm{q}}} \\ \tilde{\boldsymbol\theta}\end{bmatrix}\in\Xi : V^\ast(\bm{x})=0  \Bigg\},\\
=&\Bigg\{ \begin{bmatrix}\tilde{\bm{q}}\\ \dot{\tilde{\bm{q}}} \\ \tilde{\boldsymbol\theta}\end{bmatrix}\in\Xi : \dot{\tilde{\bm{q}}}=\bm{0}_n,\; \tilde{\boldsymbol\theta}=\bm{0}_w,\; \tilde{\bm{q}}\in\mathbb{R}^n \Bigg\}.
\end{align*}

\item[iii)] Define the function $W=\ddot{\mathbb{W}}$. Taking the time derivative of \eqref{dotU11} along the trajectories of the closed-loop system \eqref{dyn1} yields:
\begin{align*}
W=&-2\dot{\tilde{\bm{q}}}^T K_D \ddot{\tilde{\bm{q}}}-2k_I\Delta(t)\dot{\Delta}(t)\tilde{\boldsymbol\theta}^T\tilde{\boldsymbol\theta}\\
&-2k_I\Delta^2(t)\tilde{\boldsymbol\theta}^T\dot{\tilde{\boldsymbol\theta}}\\
=&2\dot {\tilde{\bm{q}}}^\top K_D M^{-1}(\bm{q}) \Bigg( [C(\bm{q}, \dot{\bm{q}}) +K_D] \dot {\tilde{\bm{q}}}\\
&+K_P \tilde{\bm{q}} -k_I Y_n \tilde{\boldsymbol\theta} \Bigg)+2k_I \Delta^2(t) \tilde {\boldsymbol\theta}^\top \Big(k_I \Gamma Y_n^\top \dot{\tilde{\bm{q}}}\\
 &+k_I\Gamma \Delta^2(t)\tilde{\boldsymbol\theta}\Big) -2k_I \Delta(t) \dot \Delta(t) \tilde{\boldsymbol\theta}^T\tilde{\boldsymbol\theta}.
\end{align*}
At this point, conditions (i)–(ii) of Matrosov's theorem (already established) imply that $\tilde{\bm{q}},\, \dot{\tilde{\bm{q}}}\in\mathcal{L}_\infty^{n_q}$ and $\tilde{\boldsymbol\theta}\in\mathcal{L}_\infty^{w}$. Since, by assumption, $\bm{q}_{\star}(t)$ and its derivatives are bounded, it follows that $\bm{q}(t),\, \dot{\bm{q}}(t)\in\mathcal{L}_\infty^{n_q}$.
Moreover, Assumptions \ref{assum:M}–\ref{assum:u} ensure that the matrices $M(\bm{q})$ and $C(\bm{q},\dot{\bm{q}})$ are bounded. Consequently, the regressor $Y_n$ is also bounded, since it is composed of the bounded matrices $M(\bm{q})$, $C(\bm{q},\dot{\bm{q}})$, and the bounded vectors $\bm{g}(\bm{q})$ and $\nabla F_R$. From these boundedness relations and \eqref{dyn1}, it follows that $\ddot{\tilde{\bm{q}}}\in\mathcal{L}_\infty^{n_q}$. Furthermore, as shown in \cite{ORTROMARA}, the signal $\Delta(t)\in\mathcal{L}_\infty$ because all estimator signals \eqref{LSD1}–\eqref{LSD3} are bounded. From \eqref{dyn2}, this also implies that $\dot{\tilde{\boldsymbol\theta}}\in\mathcal{L}_\infty^{w}$. Moreover, since $\tilde{\boldsymbol\theta}$ is bounded and $\boldsymbol\theta$ is constant, it follows that $\hat{\boldsymbol\theta}\in\mathcal{L}_\infty^{w}$. This, in turn, implies that the control input $\boldsymbol\tau$ of \eqref{newt1} is bounded, and from \eqref{robdyn} we conclude that $\ddot{\bm{q}}\in\mathcal{L}_\infty^{n_q}$. Using Jacobi's formula \cite{magnus1999matrix} and the definition \eqref{eqn:Delta}, the time derivative of $\Delta$ is computed as
\begin{equation}\label{eqn:deltadot}
\frac{d}{dt}\Delta(t)=\Delta(t)\mathrm{tr}\{A^{-1}\dot{A}\}=\mathrm{tr}\Big\{\mathrm{Adj}\{A\}\dot{A}\Big\},
\end{equation}
where $A:=I_{w\times w}-z(t)f_0F(t)$. Since $\dot{A}=-\dot{z}f_0F(t)-zf_0\dot{F}(t)$ is bounded and $A$ is bounded, we conclude that $A^{-1}$ and $\dot{\Delta}(t)$ are also bounded.  Since all terms on the right-hand side of the expression for $W(\bm{x},t)$ are bounded, we conclude that $|W(\bm{x},t)|<S$.

\item[iv)] By Lemma 3, it suffices to verify conditions (iv.a) and (iv.b). Computing $\dot{W}$ yields:
\begin{align}
\nonumber
\dot{W}=& -2\ddot{\tilde{\bm{q}}}^T K_D\ddot{\tilde{\bm{q}}}
-2\dot{\tilde{\bm{q}}}^T K_D\dddot{\tilde{\bm{q}}}
-2k_I\dot{\Delta}^2(t)\tilde{\boldsymbol\theta}^T\tilde{\boldsymbol\theta}\\
\nonumber
&-2k_I\Delta(t)\ddot{\Delta}(t)\tilde{\boldsymbol\theta}^T\tilde{\boldsymbol\theta}
-4k_I\Delta(t)\dot{\Delta}(t)\tilde{\boldsymbol\theta}^T\dot{\tilde{\boldsymbol\theta}}\\
\nonumber
&-4k_I\Delta(t)\dot{\Delta}(t)\tilde{\boldsymbol\theta}^T\tilde{\boldsymbol\theta}-2k_I\Delta^2(t)\dot{\tilde{\boldsymbol\theta}}^T\dot{\tilde{\boldsymbol\theta}}\\
\label{eqn:dotW1}
&-2k_I\Delta^2(t)\tilde{\boldsymbol\theta}^T\ddot{\tilde{\boldsymbol\theta}}.
\end{align}
Except for the terms $\dddot{\tilde{\bm{q}}}$, $\ddot{\tilde{\boldsymbol\theta}}$, and $\ddot{\Delta}(t)$, all terms on the right-hand side of \eqref{eqn:dotW1} are known to be continuous in the tracking error and depend continuously on time through bounded functions. To establish (iv.a), we first show that $\dddot{\tilde{\bm{q}}}$ is continuous with respect to the tracking error and depends continuously on time through bounded functions. Differentiating \eqref{dyn1} yields:
\begin{align}
\nonumber
\frac{d}{dt}\ddot{\tilde{\bm{q}}}=& \dot{M}^{-1}\Bigg[-(C+K_D)\dot{\tilde{\bm{q}}}-K_p\tilde{\bm{q}}+k_IY_n\tilde{\boldsymbol\theta} \Bigg]\\
\nonumber
&+M^{-1}\Bigg[-\dot{C}\dot{\tilde{\bm{q}}}-C\ddot{\tilde{\bm{q}}}-K_D\ddot{\tilde{\bm{q}}}-K_p\dot{\tilde{\bm{q}}}\\
\label{aux2}
&+k_I\dot{Y}_n\tilde{\boldsymbol\theta}+k_IY_n\dot{\tilde{\boldsymbol\theta}}\Bigg]
\end{align}
By Assumptions \ref{assum:M}–\ref{assum:u}, and the fact that $\dddot{q}_\star$ is continuous and bounded, the terms $\dot{M}$, $\frac{d}{dt}C(q,\dot q)$, and $\dot{Y}_n(q,\dot q, \dot q_\star,\ddot q_\star)$ are all continuous. Hence, $\dddot{\tilde{\bm{q}}}$ is continuous with respect to the tracking error and depends continuously on time through bounded functions, because all terms on the right-hand side of \eqref{aux2} are continuous and depend on time only through bounded quantities. For the term $\ddot{\Delta}(t)$, differentiating \eqref{eqn:deltadot} gives:
\begin{equation}\label{eqn:ddotDelta}
\ddot{\Delta}(t)=\dot{\Delta} \mathrm{tr}\{A^{-1}\dot{A}\}+\Delta\, \mathrm{tr}\{ \dot{A}^{-1}+A^{-1}\ddot{A}\},
\end{equation}
with
\begin{equation}
\ddot{A}=-\ddot{z}(t)f_0F(t)-2\dot{z}f_0\dot{F}(t)-z(t)f_0\ddot{F}(t).
\end{equation}
Furthermore, from \eqref{LSD1}–\eqref{LSD3}, the time derivative of $\dot{F}$ yields
\begin{align}
\nonumber
\ddot{F}=&-\alpha\dot{F}\Omega\Omega^TF-\alpha F\dot{\Omega}\Omega^TF-\alpha F \Omega\dot{\Omega}^TF\\
&-\alpha F \Omega \Omega^T \dot{F}-\frac{\beta_0}{\rho}\frac{d}{dt}[\| F(t) \|]+\beta\dot{F}.
\end{align}
Since the spectral norm of a matrix is given by $\|F(t)\|=\sigma_{\max}(F(t))$, where $\sigma_{\max}(F)$ denotes the largest singular value of $F$, its time derivative satisfies
\[
\frac{d}{dt}\sigma_{\max}(F)=u^{\top}\dot F\,v,
\]
where $u$ and $v$ are the left and right unit singular vectors associated with $\sigma_{\max}(F)$, respectively. Hence,
\[
\left|\frac{d}{dt}\|F(t)\|\right|
=\left|u^{\top}\dot F\,v\right|
\le \|\dot F\|.
\]
Since $\dot{F}$ is bounded (because all estimator signals \eqref{LSD1}–\eqref{LSD3} are bounded, as proven in \cite{ORTROMARA}), it follows that $\|\dot{F}\|$ is also bounded, and thus $\frac{d}{dt}\|F\|$ is bounded as well. Moreover, from \eqref{newlreel} we observe that $\dot{\Omega}$ is continuous and bounded, since the right hand terms of \eqref{eqn:12} are continuous and bounded. Finally, the term $\ddot{z}$ is computed from \eqref{LSD3} as
\begin{equation}
\ddot{z}=\frac{\beta}{\rho}\frac{d}{dt}\big[\|F(t)\|\big]z-\beta\dot{z},
\end{equation}
which is bounded using the above arguments. The preceding results show that the term $\ddot{\Delta}$ in \eqref{eqn:dotW1}, given explicitly in \eqref{eqn:ddotDelta}, is bounded. Finally, the time derivative of $\dot{\tilde{\boldsymbol\theta}}$ in \eqref{dyn2} is
\begin{equation}
\frac{d}{dt}\dot{\tilde{\boldsymbol\theta}}
=-k_I\Gamma\Big[\dot{Y}_n^{\,T}\dot{\tilde{\bm{q}}}
+Y_n^{\,T}\ddot{\tilde{\bm{q}}}
+2\Delta(t)\dot{\Delta}(t)\tilde{\boldsymbol\theta}
+\Delta^2(t)\dot{\tilde{\boldsymbol\theta}}\Big],
\end{equation}
and all right-hand-side terms have already been shown continuous in the previous arguments. Therefore, $\frac{d}{dt}\dot{\tilde{\boldsymbol\theta}}$  is continuous with respect to the tracking error and depends continuously on time through bounded functions. Consequently, all terms in $\dot{W}$ are continuous in the tracking error and depend on time only through bounded functions, completing the verification of condition (iv.a).

\par To prove condition (iv.b), note that inside the set $E$ the expression for $\dot{W}$ reduces to:
\begin{align*}
\dot{W}=&-2\ddot{\tilde{\bm{q}}}^T K_D\ddot{\tilde{\bm{q}}}
        -2k_I\Delta^2(t)\dot{\tilde{\boldsymbol\theta}}^T\dot{\tilde{\boldsymbol\theta}}\\
      =&-2[-M^{-1}K_p\tilde{\bm{q}}]^T K_D [-M^{-1}K_p\tilde{\bm{q}}]\\
      =&-2\tilde{\bm{q}}^T K_p^T M^{-1}K_D M^{-1}K_p\tilde{\bm{q}}\\
       &< 0,\ \forall x \in E \subset \Xi,
\end{align*}
which is clearly a {\em{non-zero definite}} function on $E$ \cite{paden}. Moreover,
\[
|\dot{W}|\geq \lambda_{\min}\{K_p^T M^{-1} K_D M^{-1} K_p\}\,\|\tilde{\bm{q}}\|^2
\geq k\big(\|\bm{x}(t)\|\big),
\]
which establishes condition (iv.b).

\item[v)] Finally, since we have already proven that $\ddot{\tilde{\bm{q}}}$ is bounded, it follows that $\|f(\bm{x},t)\|$ is bounded as well.
\end{itemize}

Thus, all conditions of Matrosov’s theorem are satisfied. Therefore, the origin of the closed-loop system is a globally uniformly asymptotically stable equilibrium point. This completes the proof of Proposition~4.

\section*{Acknowledgments}
This work was partially supported by SECIHTI under Grants CVU 1106239, by TecNM projects and by Red Internacional de Control y Cómputo Aplicados del TecNM (RICCA / TecNM).

{}


\begin{thebibliography}{}

\bibitem{ZHOUeta}
Q. Zhou, S. Zhao, H. Li, R. Lu, and C. Wu,
``Adaptive Neural Network Tracking Control for Robotic Manipulators With Dead Zone,''
\emph{IEEE Transactions on Neural Networks and Learning Systems},
vol. 30, pp. 3611--3620, 2019.

\bibitem{DUCetal}
D. M. Le, O. S. Patil, P. M. Amy, and W. E. Dixon,
``Integral Concurrent Learning-Based Accelerated Gradient Adaptive Control of Uncertain Euler-Lagrange Systems,''
in \emph{IEEE American Control Conference},
pp. 806--811, 2022.
%
\bibitem{JIN}
X. Jin,
``Iterative learning control for non-repetitive trajectory tracking of robot manipulators with joint position constraints and actuator faults,''
\emph{International Journal of Advanced Robotic Systems},
vol. 31, pp. 859--875, 2016.
%
\bibitem{GUOPAN}
K. Guo and Y. Pan,
``Composite adaptation and learning for robot control: A survey,''
\emph{Annual Reviews in Control},
vol. 55, pp. 279--290, 2023.
%
\bibitem{CHELIUSLO}
C. C. Cheah, C. Liu, and J. J. Slotine,
``Adaptive Tracking Control for Robots with Unknown Kinematic and Dynamic Properties,''
\emph{International Journal of Robotics Research},
vol. 25, pp. 283--296, 2006.
%
\bibitem{YANGetal}
C. Yang, T. Teng, B. Xu, Z. Li, J. Na, and C. Su,
``Global adaptive tracking control of robot manipulators using neural networks with finite-time learning convergence,''
\emph{International Journal of Control, Automation and Systems},
vol. 15, pp. 1916--1924, 2017.

\bibitem{BRAGetal}
D. Braganza, W. E. Dixon, D. M. Dawson, and B. Xian,
``Tracking Control for Robot Manipulators with Kinematic and Dynamic Uncertainty,''
in \emph{44th IEEE Conference on Decision and Control},
pp. 5293--5297, 2005.
%
\bibitem{DIXetal}
W. E. Dixon, M. S. de Queiroz, F. Zhang, and D. M. Dawson,
``Tracking Control of Robot Manipulators with Bounded Torque Inputs,''
\emph{Robotica},
vol. 17, pp. 121--129, 2009.

\bibitem{BERORTNIJ}
H. Berghuis, R. Ortega, and H. Nijmeijer,
``A robust adaptive robot controller,''
\textit{IEEE Transactions on Robotics and Automation},
vol.~9, no.~6, pp.~825--830, 1993.

\bibitem{DUANAR}
M. Duarte and K. Narendra,
``Combined direct and indirect approach to adaptive control,''
\textit{IEEE Transactions on Automatic Control},
vol.~34, no.~10, pp.~1071--1075, 1989.

\bibitem{GOOSINbook}
G. Goodwin and K. Sin,
\textit{Adaptive Filtering Prediction and Control},
Prentice-Hall, 1984.

\bibitem{KHOKAN}
P. K. Khosla and T. Kanade,
``Experimental evaluation of nonlinear feedback and feedforward control schemes for manipulators,''
\textit{International Journal of Robotics Research},
vol.~7, pp.~18--28, 1988.

\bibitem{KRERIE}
G. Kreisselmeier and G. Rietze-Augst,
``Richness and excitation on an interval---with application to continuous-time adaptive control,''
\textit{IEEE Transactions on Automatic Control},
vol.~35, no.~2, pp.~165--171, 1990.

\bibitem{ORTetalbook}
R. Ortega, A. Loria, P. J. Nicklasson, and H. Sira-Ramirez,
\textit{Passivity-Based Control of Euler--Lagrange Systems},
Springer-Verlag, Berlin, Communications and Control Engineering, 1998.

\bibitem{ORTROMARA}
R. Ortega, J. G. Romero, and S. Aranovskiy,
``A new least squares parameter estimator for nonlinear regression equations with relaxed excitation conditions and forgetting factor,''
\textit{Systems \& Control Letters},
vol.~169, 2022.


\bibitem{PANORTMOY}
E. Panteley, R. Ortega, and P. Moya,
``Overcoming the detectability obstacle in certainty equivalence adaptive control,''
\textit{Automatica},
vol.~38, no.~7, pp.~1125--1132, 2002.

\bibitem{ROMORTBOB}
J. G. Romero, R. Ortega, and A. Bobtsov,
``Parameter estimation and adaptive control of Euler-Lagrange systems using the power balance equation parameterisation,''
\textit{International Journal of Control},
vol.~96, no.~2, pp.~475--487, 2023.

\bibitem{SASBODbook}
S. Sastry and M. Bodson,
\textit{Adaptive Control: Stability, Convergence and Robustness},
Prentice-Hall, New Jersey, 1989.

\bibitem{SLOLItac}
J. J. E. Slotine and W. Li,
``Adaptive manipulator control: a case study,''
\textit{IEEE Transactions on Automatic Control},
vol.~33, no.~11, pp.~995--1003, 1988.

\bibitem{SLOLIaut}
J. J. E. Slotine and W. Li,
``Composite adaptive control of robot manipulators,''
\textit{Automatica},
vol.~25, no.~4, pp.~509--519, 1989.

\bibitem{SPOHUTVIDbook}
M. W. Spong, S. Hutchinson, and M. Vidyasagar,
\textit{Robot Modeling and Control},
Wiley, 2020.

\bibitem{WANORTBOB}
L. Wang, R. Ortega, and A. Bobtsov,
``Observability is sufficient for the design of globally exponentially convergent state observers for state-affine nonlinear systems,''
\textit{Automatica},
vol.~25, no.~110838, 2023.

\bibitem{Cicese41}
F. Reyes and R. Kelly,
``Experimental evaluation of identificaction schemes on a direct--drive robot,''
\textit{Robotica},
vol.~15, no.~5, pp.~563--571, 1997.

\bibitem{Cicese42}
F. Reyes and R. Kelly,
``Experimental evaluation of model-based controllers on a drive robot arm,''
\textit{Mechatronics},
vol.~11, no.~3, pp.~267--182, 2001.

\bibitem{ControlOfRobots}
R. Kelly, V. Santibáñez, and A. Loria,
\textit{Control of Robot Manipulators in Joint Space},
Springer London, 2006.

\bibitem{WinMechLab}
R. Campa, R. Kelly, and V. Santibáñez,
``Windows-based real-time control of direct-drive mechanisms: platform description and experiments,''
\textit{Mechatronics},
vol.~14, no.~9, pp.~1021--1036, Nov. 2004.
%





\bibitem{yongpinpan}
Y. Pan and H. Yu,
``Composite learning robot control with guaranteed parameter convergence,''
\textit{Automatica},
vol.~89, pp.~398--406, 2018.







\bibitem{guo2018composite2}
K. Guo, Y. Pan, and H. Yu,
``Composite learning robot control with friction compensation: A neural network-based approach,''
\textit{IEEE Transactions on Industrial Electronics},
vol.~66, no.~10, pp.~7841--7851, 2018.


\bibitem{magnus1999matrix}
J. R. Magnus and H. Neudecker,
\textit{Matrix Differential Calculus with Applications in Statistics and Econometrics},
Wiley, 1999.

\bibitem{khalil2002nonlinear}
H. K. Khalil,
\textit{Nonlinear Systems},
Prentice Hall, 2002.

\bibitem{BERGHUIS}
H. Berghuis, H. Roebbers, and H. Nijmeijer,
``Experimental comparison of parameter estimation methods in adaptive robot control,''
\textit{Automatica},
vol.~31, no.~9, pp.~1275--1285, 1995.


\bibitem{GUO2020}
K. Guo, Y. Pan, D. Zheng, and H. Yu,
``Composite learning control of robotic systems: A least squares modulated approach,''
\textit{Automatica},
vol.~111, p.~108612, 2020.

\bibitem{matrosov}
V. M. Matrosov,
``On the stability of motion,''
\textit{Journal of Applied Mathematics and Mechanics},
vol.~26, no.~5, pp.~1337--1353, 1962.

\bibitem{rouche}
N. Rouche, P. Habets, and M. Laloy,
\textit{Stability Theory by Liapunov's Direct Method},
vol.~4,
Springer, 1977.

\bibitem{paden}
B. Paden and R. Panja,
``Globally asymptotically stable PD$+$ controller for robot manipulators,''
\textit{International Journal of Control},
vol.~47, no.~6, pp.~1697--1712, 1988.

\bibitem{Cervantes2026}
L. Cervantes-Pérez, V. Santibáñez, J. G. Romero, R. Ortega, and J. Sandoval,
``Experimental 2 d.o.f. robot manipulator physical parameters estimation and indirect adaptive control using the power balance equation parametrisation,''
\textit{International Journal of Control},
pp.~1--15, 2026,
doi: 10.1080/00207179.2026.2638931.

\bibitem{ROMORT}
J. G. Romero and R. Ortega,
``Two high performance global tracking composite adaptive controllers for fully actuated Euler-Lagrange systems,''
\textit{IFAC-PapersOnLine},
vol.~58, no.~6, pp.~202--207, 2024.

\bibitem{raol}
J. R. Raol, G. Girija, and J. Singh,
\textit{Modelling and Parameter Estimation of Dynamic Systems},
vol.~65,
IET, 2004.

\bibitem{yoshikawa1985}
T. Yoshikawa,
``Manipulability of robotic mechanisms,''
\textit{The International Journal of Robotics Research},
vol.~4, no.~2, pp.~3--9, 1985.

\bibitem{yoshikawa1990}
T. Yoshikawa,
``Translational and rotational manipulability of robotic manipulators,''
in \textit{1990 American Control Conference},
pp.~228--233, 1990.

\bibitem{pamanes2018}
J. Pamanes and H. Moreno,
``Homogenization of the Jacobian matrix of manipulators by using inertial parameters,''
in \textit{IFToMM Symposium on Mechanism Design for Robotics},
pp.~219--226, 2018.

\bibitem{yu}
W. Yu, X. Li, and R. Carmona,
``A novel PID tuning method for robot control,''
\textit{Industrial Robot: An International Journal},
vol.~40, no.~6, pp.~574--582, 2013.

\bibitem{horn2012matrix}
R. A. Horn and C. R. Johnson,
\textit{Matrix Analysis},
Cambridge University Press, 2012.

\bibitem{golub2013matrix}
G. H. Golub and C. F. Van Loan,
\textit{Matrix Computations},
JHU Press, 2013.


\end{thebibliography}

\end{document}